\documentclass[12pt]{amsart}

\usepackage{amssymb, amsthm, amsfonts, latexsym}
\usepackage[leqno]{amsmath} 
\usepackage{natbib}
\usepackage{multibib}

\usepackage{mathtools}
\usepackage{enumitem}
\usepackage{needspace}
\usepackage{array}
\usepackage{graphicx,booktabs,tabularx,longtable,pdflscape,float}
\usepackage{xurl}
\usepackage[multiple]{footmisc}
\usepackage{comment}

\usepackage[left=1.25in, top=1.25in, right=1.25in, bottom=1.25in]{geometry}
\usepackage{setspace}
\usepackage{tgpagella}
\usepackage{microtype}

\usepackage{xcolor}

\usepackage{etoolbox}
\patchcmd{\section}{\scshape}{\bfseries}{}{}
\makeatletter
\renewcommand{\@secnumfont}{\bfseries}
\makeatother

\theoremstyle{plain}
\newtheorem{theorem}{Theorem}
\newtheorem{corollary}{Corollary}
\newtheorem{lemma}{Lemma}
\newtheorem{proposition}{Proposition}

\theoremstyle{definition}
\newtheorem{definition}{Definition}
\newtheorem{example}{Example}

\newtheoremstyle{boldremark}
 {0.5\topsep}{0.5\topsep}
 {\normalfont}{}
 {\bfseries}{.}
 {.5em}{}
\theoremstyle{boldremark}
\newtheorem{remark}{Remark}

\usepackage[
 colorlinks=true,
 linkcolor=blue,
 citecolor=blue,
 urlcolor=blue,
 filecolor=blue,
 hyperfootnotes=false,
 hyperindex,
 breaklinks
]{hyperref}

\renewcommand{\sectionautorefname}{Section}
\renewcommand{\subsectionautorefname}{Section}
\renewcommand{\subsubsectionautorefname}{Section}

\newcommand{\PR}{\mathrm{PR}}
\newcommand{\RE}{\mathrm{MRU}}
\newcommand{\NW}{\mathrm{NW}}
\newcommand{\Hset}{\mathrm{H}}
\newcommand{\CR}{\mathsf{CR}}
\newcommand{\dis}{\mathrm{dis}}
\newcommand{\exc}{\mathrm{exc}}

\newif\ifsupplement
\supplementtrue  
\newcommand{\suppref}[1]{\ifsupplement\autoref{#1}\else the Supplementary Appendix\fi}
\newcommand{\bodysecref}[1]{\begingroup
  \renewcommand{\sectionautorefname}{Section}%
  \renewcommand{\subsectionautorefname}{Section}\autoref{#1}\endgroup}

\newcommand{\suppsectionref}[1]{\ifsupplement\begingroup
  \renewcommand{\sectionautorefname}{Supplementary Appendix}%
  \renewcommand{\subsectionautorefname}{Supplementary Appendix}%
  \renewcommand{\subsubsectionautorefname}{Supplementary Appendix}%
  \renewcommand{\appendixautorefname}{Supplementary Appendix}%
  \autoref{#1}\endgroup\else the Supplementary Appendix\fi}

\newcommand{\appref}[1]{\begingroup
  \renewcommand{\sectionautorefname}{Appendix}%
  \renewcommand{\subsectionautorefname}{Appendix}%
  \autoref{#1}\endgroup}

\begin{document}

\title[Overlapping Reserves without a Common Priority]{Overlapping Reserves without a Common Priority: Theory and an Application to School Choice in Chile}

\author[Do\u{g}an, Imamura, and Yenmez]{Battal Do\u{g}an \and Kenzo Imamura \and M. Bumin Yenmez}

\thanks{
Do\u{g}an: University of Bristol, Bristol, UK (\texttt{battal.dogan@bristol.ac.uk}); 
Imamura: University of Tokyo, Tokyo, Japan (\texttt{imamurak@e.u-tokyo.ac.jp}); 
Yenmez: Washington University, St. Louis, MO, USA and Durham University Business School, Durham, UK and \"Ozye\u{g}in University, Istanbul, Turkey (\texttt{bumin@wustl.edu}).
We thank participants at SSCW 2026 for helpful comments. This work was partially supported by JSPS KAKENHI Grant Number JP23K12443 and JST ERATO Grant Number JPMJER2301.}

\begin{abstract}

We study matching markets with overlapping reserves governed by heterogeneous priorities. When applicants qualify for multiple reserves, how they are routed can affect reserve utilization and which applicants are admitted. We introduce classify-and-rank choice rules that yield strategy-proof deferred acceptance mechanisms satisfying maximum reserve utilization and priority-respect. We axiomatically characterize our mechanism's admission outcomes for any given classification and reserve order. With two reserve categories, we provide a foundation for the reserve order and establish its distributional implications for access across eligibility groups. Using nine cycles of Chilean school admissions, which feature such
reserves, we document reserve-utilization failures
under the deployed mechanism and show that
their frequency depends sharply on the reserve order. We show how our mechanism can be adapted to Chile's institutional setting to prevent such failures. Under this alternative mechanism, the reserve order generates a distributional pattern across eligibility groups, consistent with our theory.

\vspace{1em}

\noindent\textbf{Keywords:} Overlapping reserves, reserve-specific priorities, deferred acceptance, Chilean school choice.
\\
\textbf{JEL Classification:} C78, D47, D78, I28.
\end{abstract}

\date{}
\maketitle
\enlargethispage{2pt} 

\section{Introduction}

Assignment systems often reserve seats for different categories of applicants to widen access to education and public employment. Applicants may qualify for several categories, and categories may rank applicants by different criteria.

Chile's centralized school-choice system, the Sistema de Admisi\'on Escolar (SAE), provides a concrete setting.\footnote{For its design and implementation, see \citet{correa2022}; for its development and first decade, see \citet{neilson2026sae}.} At oversubscribed schools, Chile's 2015 School Inclusion Law applies admission priorities in succession, beginning with sibling priority and followed by priority for socioeconomically disadvantaged students up to a 15 percent target. Authorized high-academic-demand schools may also reserve vacancies for high-performing students under a distinct criterion.\footnote{Unused reserved capacity becomes open to all students.} Students may qualify for both reserves, whose priorities differ, so how multi-eligible students are routed determines which reserved opportunities remain for others.

How should seats be allocated when students may be eligible for several categories, each with its own priority order? A mechanism should use reserved seats as fully as possible: no reserve-eligible student who prefers a school to her assignment should be excluded if admitting her, possibly after rerouting the school's assignees across seat types, would increase reserve utilization (\emph{maximum reserve utilization}). It should also respect the priority attached to each seat (\emph{priority-respect}) and give students no incentive to misreport (\emph{strategy-proofness}).

When all seats share one priority ranking, a canonical mechanism achieves these desiderata \citep[e.g.,][]{sonmez/yenmez:22,aygun/turhan:2026}. Among the applicant sets a school can select while maximizing reserve utilization, there is a \emph{Gale-dominant} one, whose members are collectively higher-ranked than those of any other \citep{gale1968optimal,sonmez/yenmez:22}; since every seat type ranks students identically, that choice respects every priority, and using it within deferred acceptance (DA) \citep{gale62} gives a strategy-proof mechanism with maximum reserve utilization and priority-respect.

This logic breaks down once priorities are reserve-specific, so that one student ranks above another for one reserve and below her for another. A natural repair, imposing an auxiliary ranking and seeking a Gale-dominant choice among those that respect priorities and maximize reserve utilization, fails: we construct an
environment with overlapping eligibility and reserve-specific priorities in
which, for every ranking, some school and applicant pool admit no choice
that respects priorities, maximizes reserve utilization, and is
Gale-dominant under that ranking (\autoref{prop_nonexistence_gale}).

We instead introduce a family of mechanisms based on \emph{classify-and-rank} choice rules. A \emph{common-priority classification} partitions the reserves into classes, each containing only categories that share a priority criterion,\footnote{Categories governed by the same criterion may nevertheless belong to different classes.} and a strict reserve order ranks the classes and the categories within each class. The choice rule follows that order: it first selects the students admitted through each class, under the class criterion, and only then assigns each of them to a category, at every step retaining a selection only if it can still be extended to maximum reserve utilization. Embedded in deferred acceptance, these rules are strategy-proof and satisfy maximum reserve utilization and \emph{joint priority-respect}, a strengthening of priority-respect (\autoref{thm:existence}).\footnote{Joint priority-respect lets a student's claim be evaluated after rearranging assignments within her common-priority class, rather than only against the occupant of one seat type (\autoref{sec:joint-priority}).} In the common-priority special case, we recover the characterizations of \citet{sonmez/yenmez:22} as corollaries (\appref{sec:single_priority}).

The grouping itself matters. Processing same-criterion categories one at a time can commit a multi-eligible student to one reserve and thereby exclude a second student in favor of a third who ranks below her under their common priority; selecting through the class first keeps the multi-eligible student's category open. \autoref{ex:grouping-access} shows that no fixed order of individual categories reproduces the grouped rule across applicant pools with capacities, priorities, and eligibility fixed, so the classification, alongside the reserve order, determines whose priority claims are protected.

For a given classification and class order, \emph{class precedence} rules out an alternative admitted set that improves access through a class without displacing a higher-priority student there, disturbing preceding classes, or reducing reserve utilization. With individual rationality, non-wastefulness, priority-respect, maximum reserve utilization, and strategy-proofness, it pins down the school matching (\autoref{thm:characterization}), leaving the classification and class order as policy inputs and the seat-type routing free.

Our construction uses the reserve order to resolve the multiplicity left by
maximum reserve utilization and priority-respect. With two reserve categories, the simplest setting that captures the conflict, we show that one fixed
order governs the reserve-selected set across all applicant pools and
eligibility profiles whenever the elementary conflicts between the two categories' priorities are resolved consistently and two irrelevance
conditions hold (\autoref{ord:thm-two}). This identifies reserve access
without requiring identical category assignments in every problem.

The order also shapes access: placing one reserve category earlier weakly expands reserved-seat access for students eligible only for the other (\autoref{prop:access}). Both orders satisfy the desiderata yet distribute reserved opportunities differently. The comparison holds one school's applicant pool fixed, so it does not rank the two orders for the market as a whole: within deferred acceptance, changing the order also changes which students each school receives, through rejection chains.

Chile is a natural testing ground. \citet{correa2022}, who designed the SAE mechanism, explain that the common-priority approach could not be applied because the reserves use different priorities; their solution, building on \citet{kurata17}, represents reserves as separate subschools through which students are routed sequentially in orders that depend on applicant type. Because students remain ranked at subschools for which they lack eligibility, this can leave reserved capacity underutilized even when rerouting would increase utilization.

We reconstruct the deployed mechanism from public administrative data for the nine admission cycles 2017 to 2025. It reproduces all 3{,}795{,}168 published applicant--cycle assignments exactly and yields the reserve routing, which the public data do not report; statements about routing are conditional on this reconstruction.

Overlapping eligibility and conflicting priorities are common. Across the 127 program--cycles operating both the academic-excellence and the disadvantaged reserve, 35.1 percent of the reserve-eligible applicant--cycle observations listing these programs are eligible for both, and in 121 program--cycles the two reserve priorities cannot be represented by a common ranking. Sequential routing leaves reserved capacity underutilized. We adapt our maximum reserve utilization criterion to Chile's institutional requirements, preserving the seat types of returning students and holders of special-needs (PIE) positions and keeping siblings admitted. Under the deployed reserve order, 25 of the 127 program--cycles fail this criterion, with a combined shortfall of 285 reserve positions, 264 of them at program--cycles where no rejected student could have been admitted to raise reserve utilization: most of the shortfall therefore concerns how admitted students are routed. Reversing the order of the two reserves, which the legislation does not prescribe, raises these to 104 program--cycles and 962 positions, with failures in every cycle through 2025, even though the overlapping programs fall to three from 2023 with the end of the transitional academic regime. Under this order, program--cycles where rejected eligible students have claims account for 591 positions, an upper bound on the part of the shortfall that concerns admission (\autoref{sec:res_uti_deployed}).

Sibling-adjusted deferred acceptance (SA-DA), our adaptation of classify-and-rank to Chile's sibling priority, secured enrollment, and special-needs admissions, is strategy-proof within each grade, taking sibling priorities as given, and satisfies the adapted criterion under either reserve order. The academic-first version changes the programs assigned to only 89 applicant--cycle observations. Both versions retain returning students' deployed routing orders at their current programs, whereas the reverse-order sequential-routing counterfactual reverses them as well (\autoref{sec:sibling-adjusted-empirical}).

Finally, the order shapes who gains access. Between academic-first and disadvantaged-first SA-DA, 1{,}185 applicant--cycle observations receive different programs. Among affected students eligible for only one reserve category, more than 81 percent in each group prefer the order that places the other category first, consistent with our access result (\autoref{prop:access}), although the empirical comparison also incorporates rejection chains and Chile's institutional features (\autoref{sec:sibling-adjusted-empirical}).

The framework applies beyond Chile. In Indian state-level professional admissions, applicants may qualify for several reservation categories under different criteria: many categories share the general academic ranking, sports and National Cadet Corps (NCC) quotas use achievement-based criteria, and in some states defence quotas use service-based priorities. Kerala's 2025 engineering-admission lists show the overlap: 104 of 181 sports/NCC applicants also appear in a community or economically weaker section (EWS) category, and 387 appear in both the ex-servicemen and a community or EWS list. Unlike in Chile, where the two reserves use distinct criteria, categories sharing a criterion can be administered jointly, which is where common-priority classification applies. Related structures arise through grade-based and aptitude-test routes in Sweden, grade-based and institution-specific routes in Denmark, first-diploma and regular routes in Norway, and aptitude-based and general admission routes in some English secondary schools.\footnote{For the Indian systems, see \citet{tnea2026brochure,kerala2026categories,apeapcet2025special}. The Kerala counts are calculated from the 2025 engineering-admission lists \citep{kerala2025categories}. For the European examples, see \citet{uhr2026selection,antagning2024allocation} for Sweden, \citet{ug2026admission,ug2026quota2} for Denmark, \citet{norway2022admissions} for Norway, and \citet{england2021schoolcode} for England. These systems differ in their allocation procedures and treatment of unused capacity.}

\textbf{Related Literature.} Our common-priority benchmark is the meritorious horizontal choice rule of \citet{sonmez/yenmez:22}.\footnote{\citet{sonmezyenmez2024} use this rule to analyze matching mechanisms.} Classify-and-rank recovers its selected set when all seat types share one criterion and all reserves form one class (\autoref{cor:single-priority}). Related work studies overlapping types and stability \citep{kurata17}, transfers of unfilled reserves \citep{aygtur20,aytur2023}, hierarchical reserves \citep{aygun/turhan:2026}, and distributional preferences \citep{echmekyen26}.

Prior work develops allocation rules with category-specific priorities and reserve orders. \citet{pathak/sonmez/unver/yenmez:2024}, \citet{aziz2024efficient}, \citet{evren2023reserve}, and \citet{banerjee/eichhorn/kempe:2023} study identical-unit allocation with category-specific priorities. \citet{delacretaz21} allows such priorities in a simultaneous system that can divide an applicant's allocation across categories. Closest to our construction, \citet{sun2025fair} proposes the sequential category updating rule, which processes categories in a fixed order and, within each category, agents in priority order, committing a pair only while the commitments remain extendable to a maximum assignment. All of these papers consider a single institution with agents who care only about receiving a unit, whereas we study matching markets with multiple institutions and preferences over them.

When every common-priority class is a singleton, the reserve component of classify-and-rank coincides with the rule of \citet{sun2025fair} (\autoref{rem:scu}); class precedence adapts his respect for precedence over categories, and the school-level identification in \autoref{thm:characterization} parallels his characterization of the rule under a strict precedence order. The open-seat stages differ: our reserves are soft, so seats not used by reserve-eligible students revert to open competition, whereas in his model each category, including the open category, has a fixed capacity and unused reserved units remain unallocated, which is why maximum cardinality is a separate axiom there and is implied by non-wastefulness here. Beyond this special case, we embed the choice rules in deferred acceptance and establish strategy-proofness; introduce common-priority classes, which select students through same-criterion categories jointly rather than one category at a time, with joint priority-respect across those categories; provide the behavioral foundation for two reserve categories (\autoref{ord:thm-two}) and the access comparison (\autoref{prop:access}); and evaluate the mechanisms empirically. \citet{cavallo/dogan} establish substitutability for related coordinated choice rules.

\citet{kominers_sonmez} allow slot-specific priorities in matching markets; our construction preserves extendability to maximum reserve utilization at every commitment, which sequential slot filling need not do. \citet{westkamp13} studies sequential components with different criteria but without overlapping reserves. Our access comparison relates to \citet{dur/kominers/pathak/sonmez:2018,dur/pathak/sonmez:2020}, who study reserved seats relative to open or merit seats; the latter also characterize orders relative to socioeconomic targeting objectives. We compare single-eligible students' reserve access when two overlapping reserves with potentially different priorities reverse order, holding the pool fixed.

Our characterization uses the lifting method of \citet{Dogan/Imamura/Yenmez:2025}.\footnote{Earlier DA characterizations include \citet{alcalde1994top}, \citet{kojman10}, and \citet{ehlers/klaus:2016}; \citet{fernandez2020regret} shows that DA is the unique quantile-stable mechanism that is regret-free truth-telling, a weakening of strategy-proofness.} Outcomes here also specify seat types, as in \citet{hu/zhang:2026} on Chinese high-school admissions.

The broader reserve literature, initiated by \citet{hayeyi13} as an alternative to majority quotas \citep{kojima_impossibilities} and extended to soft bounds by \citet{ehayeyi14}, studies axiomatic foundations \citep[e.g.,][]{echyen12,dogan2017comment,dogan_yildiz_2022,atila_aram_2024,imamura25}, flexible targets \citep[e.g.,][]{kamada_kojima_aer,kamada_kojima_jet,kamada_kojima_te}, distributional constraints \citep[e.g.,][]{GOTO201640,fratro:2017}, and endogenous quotas \citep{kumano/kurino:2022}.

For Chile's SAE, \citet{correa2022} describe the mechanism they helped design and implement; \citet{neilson2026sae}, who advised the Ministry on its design and development, reviews its first decade; and \citet{escobar/huerta:2026} study the socioeconomic reserve's size. We study the interaction of overlapping reserves with different priorities, their utilization, and the distributional effects of their order.
\section{Model}\label{sec:model}

\subsection{Setup}
Consider an environment with finite sets of students $\mathcal{I}$, schools
$\mathcal{S}$, and reserve categories $\mathcal{T}$. Each student
$i\in\mathcal{I}$ is eligible for a set of reserve categories
$\tau(i)\subseteq\mathcal{T}$, where $\tau:\mathcal{I}\to 2^{\mathcal{T}}$ is
the \textbf{reserve eligibility function}.\footnote{Eligibility may also be
school-specific: write $\tau_s(i)$ and use the corresponding eligibility sets
in each school's choice problem. The constructions and results apply with
this school index; we suppress it in the theoretical notation.} For instance, in our Chilean
application, $\mathcal{T}=\{\text{academic excellence},
\text{disadvantaged}\}$,\footnote{The SAE's special-needs (PIE) places follow a separate procedure
(\autoref{sec:empirical}) and are not included in $\mathcal{T}$.} and a high-achieving student $i$ from a
low-income household would have
$\tau(i)=\{\text{academic excellence},\text{disadvantaged}\}$, whereas a
low-income student $j$ who is not high-achieving would have
$\tau(j)=\{\text{disadvantaged}\}$. We abbreviate the two categories as $\exc$ and $\dis$.

Each student $i \in \mathcal{I}$ has a strict \textbf{preference ranking} $P_i$
over $\mathcal{S} \cup \{\varnothing\}$, where $\varnothing$ represents being
unmatched.\footnote{We define all rankings as strict linear orders over
their domains.} A school $s\in \mathcal{S}$ is \textbf{acceptable} to
student $i\in \mathcal{I}$ if $s \mathrel{P_i} \varnothing$. For each $s,s' \in
\mathcal{S} \cup \{\varnothing\}$, we write $s \mathrel{R_i} s'$ to denote weak
preference, that is, $s \mathrel{P_i} s'$ or $s=s'$. Since $P_i$ and $R_i$
determine each other, we use whichever is more convenient and let $R=(R_i)_{i\in
\mathcal{I}}$ denote a \textbf{preference profile}. For each student
$i\in\mathcal{I}$, we write $R_{-i}\coloneqq(R_j)_{j\in\mathcal{I}\setminus\{i\}}$ for
the preference profile of all students other than $i$.

Each school $s\in\mathcal{S}$ has a capacity of $q_s\in\mathbb{Z}_+$ seats.\footnote{We 
use $\mathbb{Z}_+=\{0,1,2,\ldots\}$.} For each reserve category $t \in \mathcal{T}$, school $s$ reserves
$r_s^t\in \mathbb{Z}_+$ seats for students eligible for category $t$, where
$\sum_{t \in \mathcal{T}} r_s^t \le q_s$.
Seats not allocated through reserves are \textbf{open seats}, for which all
students are eligible.

For $I\subseteq\mathcal{I}$ and $t\in\mathcal{T}$, let
$I^{t}\coloneqq\{i\in I:t\in\tau(i)\}$ be the students in $I$ eligible for
$t$, and let $I^o\coloneqq I$; the elements of $\mathcal{T}\cup\{o\}$ are
\textbf{seat types}.

For each school $s \in \mathcal{S}$, seat types are administered according to
institutional \textbf{admission criteria}, such as a lottery, a
merit score, or a category-specific assessment. A criterion map $\gamma_s$ records the criterion $\gamma_s(t)$ governing each
seat type $t\in\mathcal{T}\cup\{o\}$. Each criterion $c$ is represented by a
strict \textbf{priority ranking} $\succ_s^c$ over $\mathcal{I}$, and the
priority ranking of a seat type is that of its criterion,
$\succ_s^t\coloneqq\succ_s^{\gamma_s(t)}$. Two seat types share a criterion
precisely when $\gamma_s$ assigns them the same value, whatever the
realized rankings; distinct criteria may rank students differently, making
priorities reserve-specific. The collection of these rankings over all
schools and criteria is the \textbf{priority profile}.

For each school $s\in\mathcal S$, a \textbf{common-priority classification}
$\mathcal B_s$ is a partition of the reserve categories $\mathcal T$ such
that $\gamma_s$ is constant on every class $B\in\mathcal B_s$.
Categories under distinct criteria cannot share a class even if their
rankings coincide, whereas categories sharing a criterion may be split
across classes; the classification records which categories are
administered jointly. Open seats are not in $\mathcal B_s$, though they may
share a criterion with a class.

In the main matching model, all primitives other than student preferences are fixed. 
A \textbf{problem} is therefore a preference profile $R$.

\subsection{Matchings, Choice Rules, and Mechanisms}
A \textbf{matching} is a pair $(\mu,\sigma)$ that assigns each student to a
school or leaves her unmatched, and gives each admitted student a seat type,
reserved or open. Here $\mu$ is a school matching and $\sigma =
(\sigma_s)_{s\in\mathcal{S}}$ is a profile of reserve matchings, one
for each school. Students care only about their assigned school; the seat type 
records the category through which the student is admitted.

Formally, a \textbf{school matching} is a function
$\mu:\mathcal{I}\to \mathcal{S}\cup\{\varnothing\}$ such that, for each school
$s\in \mathcal{S}$, $|\mu^{-1}(s)| \le q_s$, where $\mu^{-1}(s)$ denotes the set
of students assigned to $s$.
Given a school matching $\mu$, for each school $s\in\mathcal{S}$, a
\textbf{reserve matching} for $\mu$ at $s$ is a function $\sigma_s :
\mu^{-1}(s)\to \mathcal{T}\cup\{o\}$ that satisfies \textbf{reserve
feasibility}: for each reserve category $t\in\mathcal{T}$, $\sigma_s^{-1}(t)
\subseteq \mathcal{I}^t$ and $|\sigma_s^{-1}(t)| \le r_s^t$.

Open-seat assignments are bounded only by the school's capacity, so
reserves are \textbf{soft}: unused reserved capacity is used for open-seat
admissions, with those students assigned seat type $o$.

Let $\mathcal{M}$ denote the set of all matchings. A pair $(I_s,\sigma_s)$ is a
\textbf{matching for school} $s$ if $I_s \subseteq \mathcal{I}$, $|I_s|\leq
q_s$, and $\sigma_s : I_s \to \mathcal{T} \cup \{o\}$ satisfies reserve
feasibility.

A \textbf{mechanism} $\varphi$ maps each problem $R$ to a matching
$\varphi(R)\in\mathcal{M}$; let $\varphi_i(R)\in\mathcal{S}\cup\{\varnothing\}$
denote the school to which $\varphi(R)$ assigns student $i\in\mathcal{I}$.

For a school $s\in\mathcal S$, a \textbf{choice rule} $C_s$ assigns to each
applicant set $I\subseteq\mathcal I$ a matching
$C_s(I)=\bigl(\overline C_s(I),\sigma_s^{C}(I)\bigr)$ for school $s$ with
$\overline C_s(I)\subseteq I$. We refer to $\overline C_s$ as the
\textbf{restricted choice rule} induced by $C_s$; it records the selected students
but not their seat types.

\section{Desiderata for Matchings and Mechanisms}\label{sec:desiderata}

We first present four axioms for matchings---individual rationality,
non-wastefulness, priority-respect, and maximum reserve utilization---and
strategy-proofness for mechanisms.\footnote{A mechanism satisfies a matching
property if it selects a matching satisfying that property at every problem.}
We then introduce our new axiom, \emph{joint priority-respect}.

\subsection{Basic Axioms}\label{sec:basic-axioms}

\begin{definition}\label{def:individual_rationality}
A matching $(\mu,\sigma)$ is \textbf{individually rational} for problem $R$ if,
for each student $i\in\mathcal I$,
$\mu(i)\mathrel{R_i}\varnothing$.
\end{definition}

Individual rationality requires that no student prefers the outside option to
her assigned school.

\begin{definition}\label{def:non_wastefulness}
A matching $(\mu,\sigma)$ is \textbf{non-wasteful} for problem $R$ if, for each
student $i\in\mathcal I$ and school $s\in\mathcal S$,
$s\mathrel{P_i}\mu(i)$ implies $|\mu^{-1}(s)|=q_s$.
\end{definition}

Non-wastefulness requires that any school preferred by a student to her
assignment be operating at full capacity. Since open seats are capped only by total capacity, reserve feasibility
never prevents filling a vacant seat.

\begin{definition}\label{def:priorities}
A matching $(\mu,\sigma)$ is \textbf{priority-respecting} for problem $R$ if,
for any two students $i,j\in\mathcal I$, school $s\in\mathcal S$, and seat type
$t\in\mathcal T\cup\{o\}$ such that $\mu(i)=s$ and $\sigma_s(i)=t$,
\[
s\mathrel{P_j}\mu(j)
\quad\Longrightarrow\quad
j\notin\mathcal I^t
\quad\text{or}\quad
i\succ_s^t j.
\]
\end{definition}

In words, a student who prefers $s$ to her assignment has no justified claim
to a seat at $s$: for every student assigned to $s$, she is either ineligible
for that student's seat type or ranked below that student under its priority.

Our next axiom requires that reserved-seat opportunities not be left unused
when they could benefit eligible students.\footnote{This reserve-utilization
objective is central to the approach of \citet{sonmez/yenmez:22}.}
For a matching $(I_s,\sigma_s)$ for $s$, let
$\rho_s(I_s,\sigma_s)\coloneqq \sum_{t\in\mathcal T}|\sigma_s^{-1}(t)|$
denote the number of reserved seats it utilizes. A student $i$ is
\textbf{reserve-eligible at $s$} if $t\in\tau(i)$ for some reserve category
$t$ with $r_s^t>0$.

\begin{definition}\label{def:maximum_reserve_utilization}
A matching $(\mu,\sigma)$ satisfies \textbf{maximum reserve utilization} for
problem $R$ if there exist no school $s$, reserve-eligible student $i$ with
$s\mathrel{P_i}\mu(i)$, and matching $(I_s',\sigma_s')$ for $s$ such that
$i\in I_s' \subseteq \mu^{-1}(s)\cup\{i\}$
and
\[
\rho_s(I_s',\sigma_s')
>
\rho_s\bigl(\mu^{-1}(s),\sigma_s\bigr).
\]
\end{definition}

Thus no reserve-eligible student who prefers a school to her assignment can
be admitted there, possibly after rejecting some admitted students and
rerouting the rest, in a way that increases reserve utilization; the
alternative must be feasible but need not respect priorities. The axiom is an access requirement, but because
it compares alternatives with the realized utilization
$\rho_s(\mu^{-1}(s),\sigma_s)$ rather than with the utilization the admitted
students could attain, it also restricts routing: whenever some
reserve-eligible student prefers $s$ to her assignment, it requires that the
students admitted to $s$ be routed so as to fill the maximum number of
reserved seats they can jointly fill, and that admitting her could not raise
that maximum (\autoref{lem:mru-demand} in \appref{supp:characterization}).
At a school that no reserve-eligible student prefers to her own assignment,
the axiom does not restrict routing.

The four axioms do not determine a unique matching, even in a
one-school setting.

\begin{example}\label{ex:multiplicity}
Consider one school with capacity two and two reserve categories
$\mathcal T=\{\dis,\exc\}$, with one reserved seat in each. There are three
students, with
\[
\tau(i_1)=\{\dis,\exc\},\qquad
\tau(i_2)=\{\dis\},\qquad
\tau(i_3)=\{\exc\},
\]
and priorities
\[
i_1\succ_s^{\dis}i_2\succ_s^{\dis}i_3,
\qquad
i_1\succ_s^{\exc}i_3\succ_s^{\exc}i_2.
\]
The school is acceptable to every student, and the open-seat priority is
irrelevant. Exactly two matchings satisfy all four axioms:
$i_1\to\dis,\ i_3\to\exc$ and $i_1\to\exc,\ i_2\to\dis$. Both are
individually rational, non-wasteful, and fill every reserved seat. Both
respect priorities: in the first, $i_2$ cannot displace $i_3$ through
$\exc$, for which she is ineligible, or $i_1$ through $\dis$, who outranks
her; the second is symmetric.
\end{example}

\autoref{sec:results} and \autoref{sec:interpreting-order} resolve this
multiplicity. The next axiom is stated for mechanisms rather than matchings.

\begin{definition}\label{def:strategy_proofness}
A mechanism $\varphi$ is \textbf{strategy-proof} if, for each problem $R$,
student $i\in\mathcal I$, and preference ranking $R_i'$,
$\varphi_i(R)\mathrel{R_i}\varphi_i(R_i',R_{-i})$.
\end{definition}

Thus, truthful reporting is a dominant strategy for every student.

\subsection{Joint Priority-Respect}\label{sec:joint-priority}

Priority-respect evaluates claims against the particular seat types assigned.
The following example shows why this may be insufficient when reserve
categories share a criterion.

\begin{example}\label{ex:joint}
Retain the setting of \autoref{ex:multiplicity}, but suppose the two reserve
categories share the same admission criterion, with priority
$i_1\succ_s i_2\succ_s i_3$,
and form a single common-priority class,
$\mathcal B_s=\{\{\dis,\exc\}\}$. Consider the two matchings
\[
i_1\to\dis,\quad i_3\to\exc,
\qquad\text{and}\qquad
i_1\to\exc,\quad i_2\to\dis.
\]
Both satisfy the four axioms; the first respects priorities because $i_2$
is ineligible for $\exc$ and ranks below $i_1$ for $\dis$. Yet it excludes
$i_2$ in favor of the lower-priority $i_3$, while reassigning $i_1$ to
$\exc$ would let $i_2$ replace $i_3$ with both reserved seats still filled.
With a shared criterion, admitting $i_3$ over the higher-ranked $i_2$ is
hard to justify.
\end{example}

This example motivates \emph{joint priority-respect}, which strengthens
priority-respect by allowing seat assignments within a common-priority class
to be rearranged when evaluating priority claims.

Fix a school $s\in\mathcal{S}$ and a nonempty group of seat types
$G\subseteq\mathcal T\cup\{o\}$. The seat types in $G$ \emph{share a
criterion} if $\gamma_s$ is constant on $G$; we then write $\succ_s^G$ for
their common priority. For a matching $M=(\mu,\sigma)$, let
$A_s^G(M)\coloneqq\{i\in\mu^{-1}(s):\sigma_s(i)\in G\}$ be the students
assigned to $s$ through seat types in $G$, and write
$A_s^t(M)\coloneqq A_s^{\{t\}}(M)$. For $I\subseteq\mathcal I$, let $n_s^G(I)$
be the largest number of students in $I$ who can simultaneously be assigned
reserved seats of categories in $G\cap\mathcal T$, each to a category for
which she is eligible and within the reserve capacities. Unlike $\rho_s$,
which counts realized reserve assignments, $n_s^G(I)$ measures attainable
reserve utilization, and it never counts open seats, even when $o\in G$.
Write $n_s(I)\coloneqq n_s^{\mathcal T}(I)$.

\begin{definition}\label{def:joint-priority}
A matching $M=(\mu,\sigma)$ satisfies \textbf{joint priority-respect} for
problem $R$ if it satisfies priority-respect and, for each school
$s\in\mathcal S$ and each common-priority class $B\in\mathcal B_s$, the
following condition holds. Let $G\coloneqq B\cup\{o\}$ if open seats are
governed by the criterion of $B$, and $G\coloneqq B$ otherwise. For every
student $i\in\mathcal I$ with $s\mathrel{P_i}\mu(i)$ and every
$j\in A_s^G(M)$ with $i\succ_s^G j$,
\[
n_s^G\!\left(
\bigl(A_s^G(M)\setminus\{j\}\bigr)\cup\{i\}
\right)
<
n_s^G\!\left(A_s^G(M)\right).
\]
\end{definition}

In words, within a class, a higher-priority claimant may be passed over
only if replacing a lower-priority admitted student would reduce the
maximum achievable reserve utilization; when the open seat shares the
class criterion, open-seat assignments may be rearranged as well. In
\autoref{ex:joint}, the first matching satisfies priority-respect but
violates the joint condition, whereas separating the two categories into singleton classes permits it, since the comparison then no longer reroutes
$i_1$ and $i_2$ jointly. The classification thus determines the scope of
the fairness comparison. In a single-priority environment with
$\mathcal B_s=\{\mathcal T\}$, the joint condition is the no-justified-envy
condition of \citet{sonmez/yenmez:22}, with priority-respect governing the
routing separately (\appref{sec:single_priority}); when every class is a
singleton sharing no criterion with open seats, it adds nothing to
priority-respect.

\section{Achieving the Desiderata}\label{sec:results}
The desiderata are jointly achievable but do not determine a unique
matching. We resolve the multiplicity with the common-priority
classification and a strict order over reserve categories, define the
resulting school choice rules, use them within deferred acceptance, and
characterize the resulting mechanism.

\subsection{Classify-and-Rank Choice Rules}

Fix a school $s\in\mathcal S$. Its choice rule is built from the
classification $\mathcal B_s$ and a strict order $\triangleright_s$ over the
reserve categories, the \textbf{reserve order}, that lists the categories of
each class consecutively; we call the pair $(\mathcal B_s,\triangleright_s)$ a
\textbf{classify-and-rank specification}. Index the classes as
$B_s^1,\ldots,B_s^{K_s}$ so that $t\triangleright_s t'$ whenever $t\in B_s^k$,
$t'\in B_s^\ell$, and $k<\ell$. The reserve order thus fixes a \textbf{class
order} and a \textbf{within-class order}; the first determines which students
are selected, the second only how they are routed within their class. The
specification is fixed across preference reports, unlike Chile's
applicant-specific subschool orders.

We now specify the choice $\CR_s(I)$ from an applicant set
$I\subseteq\mathcal I$.\footnote{We suppress the fixed specification from the
notation.} Let $E_s(I)\coloneqq\{(t,i)\in\mathcal T\times I: t\in\tau(i),\
r_s^t>0\}$ be the feasible reserve--student pairs. A \textbf{reserve
assignment} is a set $X\subseteq E_s(I)$ in which each student appears at
most once and no reserve capacity is exceeded; $\mathcal X_s(I)$ denotes the
set of reserve assignments, so $n_s(I)=\max_{X\in\mathcal X_s(I)}|X|$. The
rule proceeds in three stages: it commits students to classes, then to
categories within their classes, retaining each commitment only if it still
extends to a reserve assignment of size $n_s(I)$, and finally fills the
remaining capacity by open-seat priority.

\smallskip
\noindent\textbf{Stage~1: Class commitment.}
Each $X\in\mathcal X_s(I)$ induces the class commitment
$\pi_s(X)\coloneqq\{(k,i):(t,i)\in X\text{ for some }t\in B_s^k\}$. Consider
the pairs $(k,i)$ such that $i$ is eligible for a category of $B_s^k$ with
reserved seats, in class order and, within a class, in the order of the
class priority. Starting from the empty set, retain $(k,i)$ if the retained
pairs together with $(k,i)$ are contained in $\pi_s(X)$ for some
$X\in\mathcal X_s(I)$ with $|X|=n_s(I)$. Let $Z_s(I)$ be the retained set; by
\autoref{lem:lexicographic_dominance} (\appref{app:cnr-properties}), it is
realizable with $|Z_s(I)|=n_s(I)$.

\smallskip
\noindent\textbf{Stage~2: Exact reserve assignment.}
Conditional on $Z_s(I)$, consider the pairs in $E_s(I)$ in reserve order
and, within a category, in the order of its priority. Starting from the
empty set, retain $(t,i)$ if the retained pairs together with $(t,i)$ are
contained in some $X\in\mathcal X_s(I)$ with $\pi_s(X)=Z_s(I)$. By the same
lemma, the resulting reserve assignment $X_s(I)$ is unique and realizes
$Z_s(I)$.

\smallskip
\noindent\textbf{Stage~3: Open seats.}
Let $\overline{\CR}_s(I)$ consist of the students assigned by $X_s(I)$
together with the highest-$\succ_s^o$-priority students among the remaining
members of $I$, up to the remaining capacity $q_s-|X_s(I)|$. The
classify-and-rank choice is
$\CR_s(I)=\bigl(\overline{\CR}_s(I),\sigma_s^{\CR}(I)\bigr)$, where
$\sigma_s^{\CR}(I)(i)=t$ if $(t,i)\in X_s(I)$ and $\sigma_s^{\CR}(I)(i)=o$
for every other selected student.

The rule can be computed in polynomial time by implementing the extendability
tests through maximum-weight matching (see \suppsectionref{supp:computation}).

\begin{remark}\label{rem:scu}
When every common-priority class is a singleton, the reserve component of the
classify-and-rank choice rule coincides with the sequential category updating
rule of \citet{sun2025fair} in the corresponding reserve setting. Both process
categories in the reserve order and students within each by its priority,
retaining a pair only while the commitments remain extendable to maximum
utilization. The open-seat stage differs, since our reserves are soft and his
category capacities are hard.
\end{remark}

\begin{remark}
Simple greedy slot filling can strand a later reserve even when all reserved
seats could jointly be filled. The extendability test rules this out,
distinguishing classify-and-rank from greedy slot filling under slot-specific
priorities \citep{kominers_sonmez}.
\end{remark}

In \autoref{ex:multiplicity}, the two class orders select the two matchings
left unresolved there: $\dis\triangleright_s\exc$ gives $i_1\to\dis$ and
$i_3\to\exc$, whereas $\exc\triangleright_s\dis$ gives $i_1\to\exc$ and
$i_2\to\dis$, so reversing the order reroutes the multi-eligible student and
changes which single-eligible student is selected.

\begin{example}\label{ex:classes-rule}\label{ex:grouping-access}
A school has capacity three and one place in each of categories
$t_1,t_2,t_3$; the students are $i_1,\ldots,i_6$. Categories $t_1$ and $t_2$
share an admission criterion, with priority
$i_1\succ i_2\succ\cdots\succ i_6$, whereas $t_3$ ranks $i_6$ first. Eligibility is
\[
\tau(i_1)=\{t_1,t_2,t_3\},\quad \tau(i_2)=\tau(i_5)=\{t_1\},\quad
\tau(i_3)=\tau(i_4)=\{t_2\},\quad \tau(i_6)=\{t_2,t_3\}.
\]
Take the classes $\{t_1,t_2\}$ and $\{t_3\}$, with reserve order
$t_1\triangleright_s t_2\triangleright_s t_3$.

From pool $\{i_1,i_2,i_3,i_6\}$, classify-and-rank selects $i_1,i_2$ through the first
class and $i_6$ through $t_3$, routing $i_2$ to $t_1$ and $i_1$ to $t_2$.
Classify-and-rank with singleton classes and $t_1$ before $t_2$ instead commits
$i_1$ to $t_1$, admits $i_3$ through $t_2$, and selects $\{i_1,i_3,i_6\}$. Both outcomes fill
every reserve and respect
seat-specific priorities. Only the grouped outcome protects $i_2$'s claim
against the lower-ranked $i_3$ across their common-priority class.

Putting $t_2$ before $t_1$ recovers the grouped outcome in that pool, but fails
in pool $\{i_1,i_4,i_5,i_6\}$. There, grouping selects $\{i_1,i_4,i_6\}$, whereas
$t_2$-first commits $i_1$ to $t_2$ and selects $\{i_1,i_5,i_6\}$. Since $i_6$ receives $t_3$ in either pool whatever $t_3$'s position, no fixed
order of the three singleton categories reproduces the grouped rule in both
pools, although capacities, priorities, and eligibility are unchanged. Under
the classification $(\{t_1,t_2\},\{t_3\})$, both grouped outcomes satisfy joint
priority-respect, whereas each singleton order violates it in one pool:
grouping governs access beyond the choice of order.
\end{example}

Whether categories sharing a criterion should be grouped is a policy
choice. Grouping treats them as one pool: seats are interchangeable up to
eligibility and the common priority governs who is admitted, which is what
protects $i_2$'s claim in the example. Separating them gives each category its
own constituency, so the category, and not only the shared criterion,
determines who is admitted; this is natural when categories serve different
groups under one ranking.

\subsection{Deferred Acceptance Based on Classify-and-Rank}

Fix a classify-and-rank specification at every school, and write
$\CR\coloneqq(\CR_s)_{s\in\mathcal S}$ and
$\overline{\CR}\coloneqq(\overline{\CR}_s)_{s\in\mathcal S}$ for the profiles of
choice rules and restricted choice rules. In the \textbf{DA mechanism based
on} $\CR$, a version of deferred acceptance (DA) \citep{gale62}, students
apply one at a time: a student not currently held by any school applies to
her most-preferred acceptable school that has not yet rejected her. That
school applies $\CR_s$ to the students it holds together with the new
applicant, holds the chosen students in their assigned seat types, and
rejects the rest. Since each student applies to each school at most once,
the process ends with final school and seat-type assignments.

\begin{theorem}\label{thm:existence}
For any profile of classify-and-rank specifications, the DA mechanism based on
$\CR$ satisfies individual rationality, non-wastefulness, joint
priority-respect, maximum reserve utilization, and strategy-proofness. In
particular, it also satisfies priority-respect.
\end{theorem}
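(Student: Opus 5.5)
The plan is to prove everything at the level of a single school's choice rule and then transfer it to the matching via the standard theory of deferred acceptance with substitutable choice rules. Substitutability and size monotonicity (the law of aggregate demand) do not depend on seat types, so the first step is to show that the restricted rule $\overline{\CR}_s$ has both properties. It is also worth noting that $\overline{\CR}_s$ is acceptant: it selects $\min\{q_s,|I|\}$ students because Stage~3 fills all remaining capacity.

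Given these properties, results of Hatfield--Milgrom type give three things. The student-proposing DA outcome is independent of the application order. It is stable with respect to $\overline{\CR}$, meaning each school's final set equals $\overline{\CR}_s$ applied to the final holders together with every student who prefers $s$ to her assignment. Finally, the mechanism is strategy-proof. Individual rationality is immediate because students apply only to acceptable schools. Non-wastefulness follows from acceptance: if $s \mathrel{P_i}\mu(i)$, then $i$ was rejected by $s$, so $s$ was full at that time, and size monotonicity keeps it full afterwards.

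The remaining fairness axioms come from a stability reduction. Fix $s$, let $A=\mu^{-1}(s)$, and let $i$ be a student with $s\mathrel{P_i}\mu(i)$. By stability, $\CR_s(A\cup\{i\})=(A,\sigma_s)$, with the seat types equal to those the rule assigned at the final round. This holds because the final choice at $s$ was made from a superset containing $A$, and by substitutability the chosen set and routing coincide on $A\cup\{i\}$; I will check this carefully using the path-independence consequence of substitutability plus the law of aggregate demand.

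With this reduction, each axiom becomes a one-school statement about $\CR_s(A\cup\{i\})$.
\begin{itemize}
\item \emph{Maximum reserve utilization.} Stages 1 and 2 produce a reserve assignment of size $n_s(A\cup\{i\})$, and $n_s$ bounds $\rho_s$ for any matching with $I'_s\subseteq A\cup\{i\}$.
\item \emph{Priority-respect.} Suppose $i$ is eligible for the seat type $t$ of some $j\in A$ and $i\succ_s^t j$. If $t=o$, Stage~3 would have taken $i$ before $j$. If $t\in\mathcal T$, then in Stage~1 the pair $(k,i)$, where $B^k\ni t$, was processed before $(k,j)$ and rejected. Replacing $j$ by $i$ in the final assignment $X_s$ would, however, give an extension containing all earlier retained pairs together with $(k,i)$, which is a contradiction.
\item \emph{Joint priority-respect.} The same exchange argument applies at the class level. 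If removing $j$ and adding $i$ within $G$ does not lower $n_s^G$, then a maximum assignment exists whose class commitment contains the pairs retained before $(k,j)$ together with $(k,i)$, so $(k,i)$ would have been retained. For the case $o\in G$, I would combine this with the Stage~3 ordering.
\end{itemize}
All of these exchange steps rest on the matroid (transversal) structure of reserve assignments. I would invoke \autoref{lem:lexicographic_dominance} to say that the Stage~1 outcome is the lexicographically (greedy) optimal basis of the induced transversal matroid on class--student pairs.

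The main obstacle is substitutability of $\overline{\CR}_s$. My approach is to represent the reserve part as a greedy optimum on a gammoid/transversal matroid under a fixed weight order on pairs (class order, then class priority). Greedy bases of transversal matroids yield substitutable selections: adding applicants can only push a student out of the reserve basis, never pull one in. I would prove this via the exchange property, showing that if $i$ is in the greedy basis for $I'\supseteq I\ni i$, then $i$ is in the basis for $I$. For the open stage, a student rejected at $I$ is either below the open cutoff or, having gained a reserve slot only at $I'$, is contradicted by the previous fact. Since the reserved count $n_s(I)$ is weakly increasing in $I$ and the open stage fills the remaining capacity, the open cutoff at $I'$ is weakly stricter, which yields substitutability. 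Size monotonicity is immediate from acceptance.
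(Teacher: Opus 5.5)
Your architecture matches the paper's. The restricted rules are substitutable and size monotone, cumulative offers give strategy-proofness, and the terminal choice at $s$ equals $\CR_s$ applied to the proposers. The one-school exchange arguments for priority-respect and maximum reserve utilization are exactly the paper's.

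\textbf{The gap: substitutability.} The gap is in the step you single out as the main obstacle, substitutability of $\overline{\CR}_s$. Stages~1--2 do not compute a greedy basis of any matroid.
\begin{enumerate}
\item Your order is on class--student pairs. Feasible reserve assignments over pairs form a bipartite-matching family (each student at most once, each category within capacity). That family is an intersection of two matroids and is not itself a matroid, so single-matroid exchange arguments are unavailable.
\item You also cannot pass to a matroid on students. Any matroid whose greedy bases have size $n_s(I)$ for every $I$ has rank function $n_s$, so it is the transversal matroid itself. No fixed student order reproduces the reserve-selected set.
\end{enumerate}
For example, take $t_1\triangleright_s t_2$ with one seat each and singleton classes. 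Let $a$ and $x$ be eligible for both categories, $b$ only for $t_1$, and $c$ only for $t_2$, with $a\succ^1 b\succ^1 x$ and $a\succ^2 x\succ^2 c$. Classify-and-rank selects $\{a,c\}$ from $\{a,b,c\}$ and $\{b,x\}$ from $\{b,c,x\}$. In both pools every two students can fill the two seats. A fixed-order greedy basis would therefore need $c$ above $b$ in the first pool and $b$ above $c$ in the second.

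The monotonicity you want is true: a student reserve-selected from $I'$ stays reserve-selected from $I\subseteq I'$. Your open-stage argument also correctly reduces substitutability to it, since it gives that students of $I$ not reserve-selected from $I$ are not reserve-selected from $I'$. But the monotonicity must be proved for a lexicographically weighted bipartite matching, not a matroid greedy. The paper encodes reserve cardinality, the class-commitment order, the category order, and open-seat priority as lexicographic weights of one fixed assignment valuation. Its unique demand from every $I$ is $\overline{\CR}_s(I)$. The paper then invokes Hatfield and Milgrom's theorem that assignment valuations are substitutes. An alternating-path argument on the symmetric difference of the two lexicographically optimal matchings would also work.

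\textbf{Two smaller points.}
\begin{enumerate}
\item Substitutability and path independence only see selected sets. They cannot show that the routing of $\CR_s(A\cup\{i\})$ equals the terminal routing $\sigma_s$. That needs the paper's lexicographic-maximality argument: the terminal reserve assignment uses only students in $A$, so it is realizable from every pool between $A$ and the proposer set. It attains the same $n_s$ there and hence remains the lexicographic maximum.
\item For joint priority-respect when open seats share $B$'s criterion, ``combine with Stage~3'' must be made precise. First, $H=A_s^B(M)$ is the greedy basis of $B$'s transversal matroid on the proposers not reserve-assigned through other classes; this follows from lexicographic maximality of $Z_s$. Second, every open holder ranks above the claimant $i$, who ranks above $j$. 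Each of these students is spanned by the members of $H$ ranked above her, so the replacement set has rank at most $|H|-1$. The paper reaches the same conclusion through its greedy-basis-plus-top-up lemma.
\end{enumerate}
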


The proof in \appref{supp:existence} establishes substitutability of
$\overline{\CR}_s$ using a fixed assignment valuation whose weights encode
reserve utilization, class commitments, category assignments, and open-seat
priority. Size monotonicity follows because the rule selects
$\min\{q_s,|I|\}$ students from $I$. Standard cumulative-offer results
\citep{hatfi04} and equivalence to DA yield strategy-proofness; individual
rationality follows from the procedure. Each school's terminal choice,
including seat types, equals $\CR_s$ applied to all its proposers
(\autoref{cor:da_co_seats}), so non-wastefulness, maximum reserve utilization,
and joint priority-respect carry over to the matching.

\subsection{Characterization}
\label{sec:characterization}

\autoref{thm:existence} shows that DA based on classify-and-rank satisfies
the desiderata. We now show that, for a given classification and class
order, it is the only mechanism that does so and also respects class
precedence, an axiom that formalizes the implications of the
classification and class order for school access.
\autoref{sec:interpreting-order} then asks where the order comes from.

For a matching $M=(\mu,\sigma)$, or a matching $M'=(I_s',\sigma_s')$ for $s$, let
$A_s^k(\cdot)$ denote the students assigned at $s$ through the categories of
class $B_s^k$.
At problem $R$, such a matching is an \textbf{alternative matching} for $s$ if every $i\in I_s'$ satisfies $s\mathrel{R_i}\mu(i)$.

\begin{definition}
\label{def:class_precedence}
Fix a classify-and-rank specification
$(\mathcal B_s,\triangleright_s)$ for each school $s\in\mathcal S$. A matching
$M=(\mu,\sigma)$ \textbf{respects class precedence} at school
$s\in\mathcal S$ for problem $R$ if there exist no common-priority class $B_s^k$,
student $i\notin A_s^k(M)$, and alternative matching
$M'=(I_s',\sigma_s')$ for $s$ such that
\begin{enumerate}[label=(\roman*)]
\item $M'$ uses at least as many reserved seats at $s$ as $M$;
\item $A_s^\ell(M')=A_s^\ell(M)$ for every $\ell<k$;
\item every $j\in A_s^k(M)$ with $j\succ_s^{B_s^k}i$ belongs to
$A_s^k(M')$;
\item $i\in A_s^k(M')$; and
\item $I_s'\neq\mu^{-1}(s)$.
\end{enumerate}
\end{definition}

Conditions (i)--(iv) fix the students selected by earlier classes, though
not their internal routing, and permit assigning a student to the current
class without displacing a higher-priority student there or reducing
utilization; lower-priority students and later classes may be rearranged,
which with multiple categories may require coordinated reassignments
rather than a one-for-one exchange. Condition~(v) restricts the axiom to changes in the admitted set. Without it,
the axiom would also rule out improvements that only reroute students;
classify-and-rank satisfies this stronger version, but the characterization
does not need it.

\Needspace{9\baselineskip}
\begin{theorem}
\label{thm:characterization}
Fix a classify-and-rank specification
$(\mathcal B_s,\triangleright_s)$ for each school $s\in\mathcal S$. The DA
mechanism based on $\CR$ satisfies individual rationality, non-wastefulness,
priority-respect, maximum reserve utilization, strategy-proofness, and class
precedence at every school. Conversely, every mechanism satisfying these
properties induces the same school matching as the DA mechanism based on
$\CR$.
\end{theorem}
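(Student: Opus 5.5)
The proof has two parts: existence, which shows the DA mechanism based on $\CR$ satisfies the axioms, and uniqueness of the school matching.

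\emph{Existence.} Everything except class precedence is given by \autoref{thm:existence}. For class precedence I will use \autoref{cor:da_co_seats}: at each school $s$, the terminal assignment equals $\CR_s(I_s)$, where $I_s$ is the set of students who ever proposed to $s$. Any alternative matching $M'$ for $s$ admits only students with $s\mathrel{R_i}\mu(i)$, so $M'$ draws from $I_s$. I will then show that a violation $(k,i,M')$ contradicts the Stage~1 greedy construction of $\CR_s(I_s)$.

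\begin{enumerate}[label=(\alph*)]
\item By (i), $M'$ uses at least as many reserved seats as $\CR_s(I_s)$, and the latter uses $n_s(I_s)$. Hence $M'$ induces a maximum reserve assignment.
\item By (ii), $M'$ agrees with $Z_s(I_s)$ on classes $\ell<k$.
\item By (iii) and (iv), $M'$ contains every class-$k$ pair of $Z_s(I_s)$ that precedes $(k,i)$ in class priority, and it also contains $(k,i)$.
\end{enumerate}

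So when the greedy procedure tested $(k,i)$, that pair was extendable to a maximum assignment, and it would have been retained. Then $i\in A_s^k$, a contradiction. One technicality must be handled: classes $\ell<k$ must agree as sets of committed pairs, not only as admitted sets. This is exactly what (ii) delivers, since the $A_s^\ell$ are disjoint by class.

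\emph{Uniqueness.} I will use the lifting method of \citet{Dogan/Imamura/Yenmez:2025}. Let $\varphi$ satisfy the axioms, and suppose the school matchings differ at some $R$. Among such problems, choose one minimizing $\sum_i|\{s:s\mathrel{R_i}\varnothing\}|$, the total length of acceptable lists. The argument runs in four steps.

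\begin{enumerate}
\item Show that some student $i$ with $\varphi_i(R)\ne\mathrm{DA}_i(R)$ receives a school that is not her last acceptable choice, or is unmatched under one of the two mechanisms.
\item Truncate $R_i$ just below the better of her two assignments. Strategy-proofness for both mechanisms, together with individual rationality, pins down her assignment under the truncation. Minimality then forces both mechanisms to agree after truncation.
\item Lift this back to $R$ to obtain a contradiction.
\item The remaining base case has every student holding at most one acceptable school, or both mechanisms assigning everyone her top choice. It reduces to a single-school statement: an individually rational, non-wasteful, priority-respecting, maximum-reserve-utilization matching that respects class precedence must admit exactly $\overline{\CR}_s(I)$ from the pool $I$ of applicants.
\end{enumerate}

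\emph{Single-school identification (main obstacle).} I will prove the single-school statement by induction on classes. The inductive claim is that $A_s^k(M)$ equals the class-$k$ part of $Z_s(I)$.

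\begin{enumerate}[label=(\alph*)]
\item Maximum reserve utilization gives $\rho_s=n_s(I)$ whenever some excluded reserve-eligible student exists (\autoref{lem:mru-demand}). If no such student exists, utilization is trivially maximal.
\item Suppose the class-$k$ sets first differ at the highest-priority student $i$ selected by $Z_s(I)$ but not by $M$. Build $M'$ from an extension of $Z_s(I)$: keep earlier classes as in $M$, and fill later classes arbitrarily so that the reserve count is maximal. Then $M'$ violates class precedence; condition (v) holds because $i\notin\mu^{-1}(s)$.
\item The delicate case is $i$ already admitted at $s$ through the open seat or a later class. Then (v) may fail. Here I instead compare the higher-priority members: $M$ must admit some lower-priority $j$ in class $k$, and I will argue via the exchange property of matroid bases (transversal matroids) that this yields a utilization-preserving swap contradicting the greedy order.
\item Once the reserve-selected set is identified, non-wastefulness and priority-respect on open seats give the Stage~3 students, because those seats carry priority $\succ_s^o$.
\end{enumerate}

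I expect step (c), the case where (v) fails, to be the hardest part.
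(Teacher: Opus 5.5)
Your existence argument is correct and matches the paper's. The terminal choice is $\CR_s(I_s^*)$, every alternative draws from $\mathcal D_s(\mu)\subseteq I_s^*$, and (i)--(iv) would make the alternative's class commitment beat $Z_s(I_s^*)$ in the Stage~1 greedy order.

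\textbf{Routing is not identified.} The gap is in the single-school identification. Your inductive claim that $A_s^k(M)$ equals the class-$k$ part of $Z_s(I)$ is false. The axioms pin down only the admitted set, and condition~(v) exempts pure rerouting. Take $q_s=2$ and one seat in each of the singleton classes $\{t_1\},\{t_2\}$, ordered $t_1\triangleright_s t_2$. Let $a,b$ be eligible for both, with $a\succ^1 b$ and $b\succ^2 a$, let $c$ be eligible for neither, and let $I=\{a,b,c\}$. Classify-and-rank routes $a\to t_1$ and $b\to t_2$. The matching $M$ with $a\to t_2$ and $b\to t_1$ is non-wasteful, priority-respecting and utilization-maximizing from $I$. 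It also respects class precedence: an alternative using two reserved seats must admit exactly $\{a,b\}$, which violates (v). Yet $A_s^1(M)=\{b\}$. So your step~(c) cannot be closed, because the lexicographically better swap contradicts no axiom. The paper instead assumes $I_s\neq\overline{\CR}_s(I)$. When $M$ uses $n_s(I)$ reserved seats but its commitment differs from $Z_s(I)$, it takes $\CR_s(I)$ itself as the class-precedence witness at the first differing pair, with (v) supplied by the contradiction hypothesis. This forces $Z=Z_s(I)$, and priority-respect on open seats finishes the case.

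\textbf{The under-utilized case is missing.} Your step~(a) does not cover the case in which no reserve-eligible student is excluded. There, maximum reserve utilization gives only $n_s(I_s)=n_s(I)$, not realized utilization $n_s(I)$. With the same capacities but $I=\{a,b\}$, $a$ eligible only for $t_1$ and $b$ only for $t_2$, the routing $a\to t_1$, $b\to o$ satisfies all the axioms while using one reserved seat of two. The paper handles this case with a separate argument:
\begin{enumerate}
\item Each $x\in\overline{\CR}_s(I)\setminus I_s$ is a Stage-3 student, so priority-respect puts every $y\in I_s\setminus\overline{\CR}_s(I)$ in a reserved seat.
\item Class precedence forces the realized commitment to be $Z_s^m(I)$, the lexicographic maximum among commitments of its size $m$. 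The witness realizes $Z_s^m(I)$, drops one student, and admits $x$ openly.
\item The nesting $S_s^m(I)\subseteq S_s^{n_s(I)}(I)$ of \autoref{lem:nested-class-access} then gives the contradiction.
\end{enumerate}

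\textbf{The lifting outline misplaces the school-level result.} It must be applied at every school of every problem with $I=\mathcal D_s(\mu)$. This yields $\mu^{-1}(s)=\overline{\CR}_s(\mathcal D_s(\mu))$, that is, stability with respect to the path-independent, size-monotone $\overline{\CR}$. Only then is your step~1 justified: by the rural hospitals property, no student is matched under one mechanism and unmatched under the other. Alternatively, you can invoke Theorem~1 of \citet{Dogan/Imamura/Yenmez:2025} directly, as the paper does. As written, step~1 has no argument, and the lemma is needed at every profile, not only in a singleton-list ``base case''.
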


None of the characterizing properties is implied by the others (see
\suppsectionref{supp:independence}). The proof works school by school. \autoref{lem:access_choice} shows that a
non-wasteful, priority-respecting, utilization-maximizing matching for $s$
that respects class precedence selects the classify-and-rank set.
When every reserve-eligible student is already admitted, the utilization
axiom does not pin down routing, and the proof uses the nesting of
class-selected sets as reserve cardinality increases
(\autoref{lem:nested-class-access}). The lifting theorem of
\citet{Dogan/Imamura/Yenmez:2025} then extends school-level uniqueness to
the mechanism. The classification and class order remain policy inputs, and
joint priority-respect, though not among the characterizing properties,
also holds (\autoref{thm:existence}).

\appref{sec:single_priority} relates the construction to the common-priority
benchmark. With a single priority ranking, classify-and-rank recovers the
meritorious horizontal choice rule of \citet{sonmez/yenmez:22}, and their
characterizations follow (\autoref{cor:single-priority}); with reserve-specific priorities, whatever ranking is fixed as the reference,
some school and applicant pool admit no Gale-dominant choice
(\autoref{prop_nonexistence_gale}).

\section{Understanding the Reserve Order}
\label{sec:interpreting-order}

\autoref{sec:results} takes the reserve order as given. This section asks
where it comes from and what it does, in the simplest setting that captures
the conflict: a single school with two reserve categories, as in the
Chilean application. The main result is that a rule
which resolves the simplest conflict the same way every time, with one
student eligible for both categories competing against one student
eligible for each, must select as a fixed reserve order would in every
problem (\autoref{ord:thm-two}); the order in turn shapes who gains access
(\autoref{prop:access}). Both results concern the reserve stage, so we
consider a school without open seats.\footnote{With open seats,
\autoref{prop:access} still holds for classify-and-rank, which fills the
reserves first, but it compares reserved-seat access rather than admission,
which the order can reverse by changing who competes for the open seats
(see \suppsectionref{ord:boundary}). \autoref{ord:thm-two} applies to any
rule whose reserve stage satisfies the axioms below; with open seats, the
axioms of \autoref{sec:desiderata} do not imply them, because a student
admitted through an open seat has no claim against the holders of reserved
seats.}

\subsection{Behavioral Foundation}
\label{ord:main}
Call the two categories $t_1$ and $t_2$; each forms a singleton common-priority class, with $r^1,r^2>0$ reserved seats and priority rankings $\succ^1$ and $\succ^2$, and in general we index objects attached to $t_k$ by $k$.

A \textbf{reserve problem} is a pair $(I,\tau)$, where
$I\subseteq\mathcal I$ is an applicant set and
$\tau:I\to 2^{\{t_1,t_2\}}$
specifies students' reserve eligibility. A student may be eligible for neither reserve, only $t_1$, only $t_2$, or both. We consider all such problems, with capacities and the two priority rankings fixed. Unlike in the main model, eligibility varies across problems, so that the axioms below can compare problems that differ only in eligibility.

A \textbf{reserve rule} $\Psi$ assigns to every reserve problem
$(I,\tau)$ a reserve assignment $X=\Psi(I,\tau)$, defined as in
\autoref{sec:results} with eligibility $\tau$.
For a reserve assignment $X$, write $X_t\coloneqq \{i:(t,i)\in X\}$ for the students assigned through $t$, and $A(X)\coloneqq\{i:(t,i)\in X\text{ for some }t\}$ for its assigned students; students in $I\setminus A(X)$ are unassigned.

The rule \textbf{maximizes reserve utilization} if, at every reserve
problem, it assigns the maximum feasible number of students through reserved
seats. It satisfies \textbf{priority-respect} if every student assigned
through a reserve has higher priority for that reserve than every unassigned student eligible for it: $i\in I\setminus A(X)$, $t\in\tau(i)$, and $j\in X_t$ imply $j\succ^t i$.

\begin{definition}
A reserve rule satisfies \textbf{irrelevance of rejected students} if deleting an unassigned student leaves the entire reserve
assignment unchanged.
\end{definition}

\begin{definition}
A reserve rule satisfies \textbf{irrelevance of unused eligibility} if deleting an eligibility link $(t,i)$ that is not used in the chosen
reserve assignment leaves the entire reserve assignment unchanged.
\end{definition}

Deleting an unused link leaves the chosen assignment feasible, and the axiom
requires the rule to keep choosing it. The axiom is a consistency condition across problems that differ
only in eligibility,
distinct from incentive compatibility for eligibility reports
\citep{aybo16}.

\begin{definition}
An \textbf{elementary conflict} is a reserve problem containing
students $i_1,j,i_2$ such that
\[
\tau(i_1)=\{t_1\},
\qquad
\tau(j)=\{t_1,t_2\},
\qquad
\tau(i_2)=\{t_2\},
\]
and
$j\succ^1 i_1$ and $j\succ^2 i_2$.
In addition, there are $r^1-1$ students eligible only for $t_1$, each ranked
above $i_1$ under $\succ^1$, and $r^2-1$ students eligible only for $t_2$, each
ranked above $i_2$ under $\succ^2$. There are no other students or eligibility
links.
\end{definition}

Exactly two assignments satisfy maximum reserve utilization and
priority-respect in an elementary conflict. Both assign the other $r^1-1$ and $r^2-1$ single-eligible students through their own categories. One assigns $j$ through $t_1$ and $i_2$ through $t_2$;
the other assigns $i_1$ through $t_1$ and $j$ through $t_2$. The first follows
$t_1$'s priority in assigning $j$ rather than $i_1$ through $t_1$, leaving $t_2$'s
remaining seat for $i_2$; the second follows $t_2$'s priority symmetrically.

\begin{definition}
\label{ord:def-conflict}
A reserve rule \textbf{resolves elementary conflicts consistently} if $j$ is assigned through the same reserve category in every elementary
conflict.
\end{definition}

Consistent resolution requires the same category's priority to govern
regardless of the students involved. It does not impose an order directly
on arbitrary pools.

\Needspace{8\baselineskip}
\begin{theorem}
\label{ord:thm-two}
Suppose a reserve rule $\Psi$ maximizes reserve utilization, satisfies
priority-respect, irrelevance of rejected students, and irrelevance of
unused eligibility, and resolves elementary conflicts consistently.
Then there exists an order $\triangleright$, either $t_1\triangleright t_2$ or $t_2\triangleright t_1$, such that, for every
reserve problem $(I,\tau)$,
\[
A\bigl(\Psi(I,\tau)\bigr)
=
A\bigl(X^{\triangleright}(I,\tau)\bigr),
\]
where $X^{\triangleright}(I,\tau)$ is the reserve assignment selected by classify-and-rank under order $\triangleright$ (Stages~1 and~2 of \autoref{sec:results}). Moreover, whenever elementary
conflicts arise, they are resolved according to the priority of the first category in $\triangleright$.
\end{theorem}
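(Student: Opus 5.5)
Let $\triangleright$ be the order whose first category is the one through which $\Psi$ assigns $j$ in every elementary conflict; this is well defined by consistent resolution. If no elementary conflict exists for the given priorities, either order will be chosen. Relabel so that $t_1\triangleright t_2$. The plan is to prove $A(\Psi(I,\tau))=A(X^{\triangleright}(I,\tau))$ by contradiction. Among all problems where the assigned sets differ, fix a counterexample $(I,\tau)$ that minimizes $|I|$ plus the number of eligibility links. The two irrelevance axioms then let us shrink this counterexample. Write $X=\Psi(I,\tau)$ and $Y=X^{\triangleright}(I,\tau)$.

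\textbf{First reduction.} By irrelevance of rejected students, deleting a student unassigned by $X$ leaves $X$ unchanged. Classify-and-rank satisfies the same irrelevance property: a student rejected in Stage~1 is never needed for extendability, so her deletion does not change $n_s$ or any retention test. Hence minimality forces every student to be assigned by $X$ or by $Y$. Similarly, an eligibility link that is unused by both $X$ and $Y$ can be deleted. Here $\Psi$ is unchanged by irrelevance of unused eligibility, and $Y$ is unchanged because Stage~1 and Stage~2 under $\triangleright$ satisfy the analogous property. This must be checked by tracing the greedy retention tests: each retained pair stays extendable, and each rejected pair stays rejected because the feasible set shrinks. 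So in the minimal counterexample every link is used by $X$ or by $Y$.

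\textbf{Second reduction: the symmetric difference $X\triangle Y$.} Both assignments have size $n(I)$ and are maximum matchings in the bipartite graph with students on one side and the two categories, with capacities $r^1,r^2$, on the other. Their difference decomposes into alternating paths and cycles. With only two categories and every link used by $X$ or $Y$, the relevant structure is as follows. Students assigned by both are on a common side, while the students in $A(X)\triangle A(Y)$ are equal in number on the two sides. Priority-respect for both assignments, together with the maximum-utilization requirement, should force the following: there is a student $i_1$ assigned only by one rule through $t_1$, a student $i_2$ assigned only by the other through $t_2$, and a multi-eligible student $j$ assigned by both rules but through different categories. A student assigned by both through the same category could, after a link deletion, be kept fixed. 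I would show that minimality forces exactly the configuration of an elementary conflict. The $r^1-1$ and $r^2-1$ other students are those assigned identically by both rules through single links. Extra multi-eligible students can have their unused link deleted, and extra single-eligible students are unassigned by both. The priority inequalities $j\succ^1 i_1$ and $j\succ^2 i_2$ follow from priority-respect of $X$ and $Y$ applied to the excluded student. The requirement that the other students outrank $i_1$ (respectively $i_2$) follows the same way.

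\textbf{Conclusion.} The reduced problem is therefore an elementary conflict. In it $\Psi$ assigns $j$ through $t_1$ by the definition of $\triangleright$, and classify-and-rank under $t_1\triangleright t_2$ does the same, because Stage~1 retains $(1,j)$ before $(1,i_1)$. So the assigned sets agree, contradicting the choice of a counterexample. The ``moreover'' clause is then immediate.

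\textbf{Main obstacle.} The hard step is the second reduction. I need to show that a minimal disagreement really collapses to an elementary conflict, and in particular that longer alternating chains, such as $j$-type students linked through several multi-eligible students, can always be shortened by deleting an unused link without destroying the disagreement. I would first try an exchange argument: take the highest-$\succ^1$ student in $X_{t_1}\triangle Y_{t_1}$. Then use priority-respect and the greedy Stage~1 structure of $Y$ to locate a single $t_1$-link whose deletion preserves both outputs and the discrepancy, or else exhibit the elementary-conflict configuration directly.
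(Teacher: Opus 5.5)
\paragraph{Gap in the second reduction.} Your first reduction is sound: classify-and-rank returns the lexicographically maximal maximum assignment, and that assignment survives deleting anything it does not use. The problem is the second reduction. It carries the whole proof but is only asserted. Moreover, the device you rely on, deletions that leave both $X$ and $Y$ unchanged, cannot in general reach an elementary conflict.

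Take $r^1=r^2=2$, dual-eligible $j,j'$, $t_1$-only $a,a'$, and $t_2$-only $b,b'$, with $j\succ^1 j'\succ^1 a\succ^1 a'$ and $j\succ^2 j'\succ^2 b\succ^2 b'$. Then $X=\{(t_1,a),(t_1,a'),(t_2,j),(t_2,j')\}$ maximizes utilization and respects priorities, and classify-and-rank under $t_1\triangleright t_2$ gives $Y=\{(t_1,j),(t_1,j'),(t_2,b),(t_2,b')\}$.
\begin{itemize}
\item Every student is assigned by $X$ or $Y$, and every link is used by one of them, so your first reduction is exhausted.
\item Yet there are six students rather than the five of an elementary conflict.
\item No student is assigned identically by the two rules, so your description of the $r^1-1$ and $r^2-1$ dedicated students fails.
\end{itemize}
Any further shrinking must delete something $Y$ uses. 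You would then have to prove that the recomputed $Y$ still disagrees with $X$, which is exactly the obstacle you leave open. Your fallback, a link whose deletion preserves both outputs, is excluded by your own first reduction. You also never show that both reserves are full under $X$, and that is what produces exactly $r^1-1$ and $r^2-1$ dedicated students.

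\paragraph{The missing idea.} $Y$ need not be preserved at all. The irrelevance axioms constrain only $\Psi$, and the resolution of the constructed elementary conflict is read off $X$. This contradicts your choice of $\triangleright$ directly, with no minimal counterexample. Use lexicographic maximality of $Y$ only once, in the original problem, to locate the triple:
\begin{enumerate}
\item Take $i_2\in A(Y)\setminus A(X)$ with $i_2\in Y_t$. Maximum utilization makes $t$ full under $X$, so $X_t\setminus Y_t\neq\varnothing$. Its members outrank $i_2$ under $\succ^t$ by priority-respect of $X$.
\item If one of them were outside $A(Y)$, swapping it for $i_2$ at $t$ would improve $Y$. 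Hence $X_t\subseteq A(Y)$, and some $j\in X_t\setminus Y_t$ lies in $Y_{t'}$.
\item Since $|A(X)|=|A(Y)|$, some $d\in A(X)\setminus A(Y)$ exists, and it must lie in $X_{t'}$.
\item Replacing $(t,i_2),(t',j)$ by $(t,j),(t',d)$ in $Y$ would improve it at $t_1$ if $t=t_1$. So $t=t_2$, and maximality of $Y$ gives $j\succ^1 d$.
\item Both reserves are full under $X$: $t_2$ because $i_2$ is excluded, and $t_1$ because otherwise moving $j$ to $t_1$ and adding $i_2$ would raise utilization.
\item Let $i_1$ be the $\succ^1$-lowest member of $X_{t_1}$, so $j\succ^1 d\succeq^1 i_1$. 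Delete every $X$-unassigned student except $i_2$, and every $X$-unused link except $j$'s two links and $i_2$'s $t_2$-link.
\end{enumerate}
The result is an elementary conflict with dedicated sets $X_{t_1}\setminus\{i_1\}$ and $X_{t_2}\setminus\{j\}$. On it $\Psi$ still returns $X$, and so routes $j$ through $t_2$, contradicting your choice of $\triangleright$.
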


Conversely, both fixed-order classify-and-rank rules satisfy the assumptions:
deleting an unassigned student or an unused link preserves the
lexicographically optimal maximum reserve assignment, and elementary
conflicts are resolved through the first category. The conditions
therefore pin down exactly the two fixed-order selections. The result
concerns selected sets, not category assignments; it does not say which
order is preferable, and it leaves the order undetermined when no
elementary conflict arises. The proof reduces any disagreement in selected
sets to an elementary conflict resolved against the proposed order
(\appref{ord:proof-two}).

\subsection{Access Implications}

Fix $(I,\tau)$, let $I^1$ and $I^2$ be the students in $I$ eligible for $t_1$
and $t_2$ under $\tau$, and write $X^{1\triangleright2}$ and $X^{2\triangleright1}$
for $X^{\triangleright}(I,\tau)$ under the orders $t_1\triangleright t_2$ and
$t_2\triangleright t_1$.

\begin{proposition}
\label{prop:access}
Giving one reserve category precedence weakly expands reserve access for students who are eligible only for the other category. That is,
$$
X_2^{2\triangleright1}\setminus I^1
\subseteq
X_2^{1\triangleright2}\setminus I^1,
\qquad
X_1^{1\triangleright2}\setminus I^2
\subseteq
X_1^{2\triangleright1}\setminus I^2,
$$
where $X_2^{2\triangleright1}\setminus I^1$, for instance, consists of the
students assigned through $t_2$ under $t_2\triangleright t_1$ who are not
eligible for $t_1$.

\end{proposition}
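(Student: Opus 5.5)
By the symmetry $t_1\leftrightarrow t_2$ it suffices to prove the first inclusion, $X_2^{2\triangleright1}\setminus I^1\subseteq X_2^{1\triangleright2}\setminus I^1$; the second follows by relabeling. The plan is to exploit the fact that, with only two singleton classes, Stages~1 and~2 of classify-and-rank collapse to a single greedy pass in the reserve order. The extendability test also becomes an explicit counting condition. Concretely, let $n=n_s(I)$. A set $S\subseteq I^2$ of students committed to $t_2$ extends to a reserve assignment of size $n$ if and only if $|S|\le r^2$ and
\[
|S|+\min\{r^1,|I^1\setminus S|\}\ge n .
\]
The same holds with the roles of $t_1$ and $t_2$ exchanged. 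I would first record this reduction, which is a direct Hall-type count for a bipartite graph with two right vertices.

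Next I would describe each order explicitly. Under $t_1\triangleright t_2$, the set $X_1^{1\triangleright2}$ is produced by the greedy pass in $\succ^1$. Once it is fixed, the $t_2$ commitments face no further extendability tension: the rule takes the top $m\coloneqq n-|X_1^{1\triangleright2}|$ students of $I^2\setminus X_1^{1\triangleright2}$ under $\succ^2$ (with $m\le r^2$ by realizability). So a student $i\in I^2\setminus I^1$ lies in $X_2^{1\triangleright2}$ exactly when fewer than $m$ students of $I^2\setminus X_1^{1\triangleright2}$ rank above $i$ under $\succ^2$.

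Under $t_2\triangleright t_1$, the greedy pass in $\succ^2$ rejects a pair $(t_2,u)$ for only two reasons. Either $t_2$ is already full, or committing $u$ would shrink the $t_1$ pool below what is needed. The second reason can occur only when $u\in I^1\cap I^2$, since committing a student outside $I^1$ never reduces $|I^1\setminus S|$. Hence for $i\in X_2^{2\triangleright1}\setminus I^1$, every student ranked above $i$ under $\succ^2$ that is skipped is dual-eligible and was skipped because of the $t_1$ constraint. At the moment $i$ is retained, the number of committed students ranked above $i$ is less than the capacity of $t_2$ that is effectively available.

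The main step, and the expected obstacle, is the comparison. Let $U$ be the set of students of $I^2\setminus X_1^{1\triangleright2}$ ranked above $i$ under $\succ^2$; the goal is $|U|<m$. Split $U$ into single-eligible and dual-eligible students. The single-eligible members of $U$ were all committed to $t_2$ under $t_2\triangleright t_1$, because they precede $i$ and are never blocked by the $t_1$ constraint. For the dual-eligible members of $U$, I would use that they are outside $X_1^{1\triangleright2}$, so the $t_1$-greedy pass did not need them. I would then argue through the counting condition that committing them to $t_2$ leaves the $t_1$ pool sufficient, so each was also committed to $t_2$ under $t_2\triangleright t_1$. Thus $U\cup\{i\}\subseteq X_2^{2\triangleright1}$.

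It then remains to show $|X_2^{2\triangleright1}|\le m$ in the relevant sense, or directly that $|U|+1\le m$. I would derive this from the fact that $X^{2\triangleright1}$ has size $n$ and assigns at least $\min\{r^1,|I^1\setminus X_2^{2\triangleright1}|\}$ students through $t_1$. This count should be compared with $|X_1^{1\triangleright2}|$, using that the $t_1$-first pass fills as many $t_1$ seats as possible. If this direct count fails, I would first try an exchange argument: take a minimal counterexample and swap a dual-eligible student between the two assignments to contradict the lexicographic optimality of one of the greedy passes (\autoref{lem:lexicographic_dominance}).
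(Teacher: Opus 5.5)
Your key step fails. It is not true that every dual-eligible member of $U$ is committed to $t_2$ under $t_2\triangleright t_1$, so $U\cup\{i\}\subseteq X_2^{2\triangleright1}$ need not hold. Take $r^1=r^2=2$, students $j_1,j_2$ eligible for both categories, $a$ eligible only for $t_1$, and $i$ eligible only for $t_2$, with $j_1\succ^2 j_2\succ^2 i$ and $j_1\succ^1 a\succ^1 j_2$, so that $n=4$.

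\begin{itemize}
\item Under $t_1\triangleright t_2$: $X_1^{1\triangleright2}=\{j_1,a\}$ and $X_2^{1\triangleright2}=\{j_2,i\}$.
\item Under $t_2\triangleright t_1$: $j_1$ is committed to $t_2$ first. Committing $j_2$ as well would leave only $a$ for the two $t_1$ seats, so $j_2$ is rejected and $X_2^{2\triangleright1}=\{j_1,i\}$.
\end{itemize}

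Here $i\in X_2^{2\triangleright1}\setminus I^1$, and $U=\{j_2\}$ is a dual-eligible student outside $X_1^{1\triangleright2}$ ranked above $i$. Yet $j_2\notin X_2^{2\triangleright1}$. The proposition still holds in this example, because $i$ is among the top $m=2$ of $I^2\setminus X_1^{1\triangleright2}=\{j_2,i\}$, but your route cannot reach it. Your inference ``the $t_1$-first pass did not need $j_2$, so committing $j_2$ to $t_2$ leaves the $t_1$ pool sufficient'' fails for a specific reason. Under $t_2\triangleright t_1$, higher-$\succ^2$ dual-eligible students already committed to $t_2$ can use up the slack in the $t_1$ pool. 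Here that student is $j_1$, who belongs to $X_1^{1\triangleright2}$.

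Your extendability criterion has a related slip. It is the condition for $S$ to be the \emph{entire} $t_2$ set. A partial commitment can still extend by adding further $t_2$ students: in the example, $S=\varnothing$ extends even though $0+\min\{2,3\}<4$.

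The missing idea is to count these blocked students rather than exclude them, and the paper's proof does exactly this. Write $W=X_2^{2\triangleright1}$, $T=X_1^{1\triangleright2}$, $q_2=|W|$, $q_1=|T|$, $g=|W\cap T|$, and $m=n-q_1$.

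\begin{itemize}
\item A $q_2$-subset $Q\subseteq I^2$ is the $t_2$ part of a maximum assignment if and only if $|Q\cap I^1\cap I^2|\le d_2=|I^1|-(n-q_2)$.
\item Seat $t_2$ is not yet full when a dual-eligible student ranked above $i$ is considered. So such a student is rejected only once this bound binds, which gives $|W\cap I^1\cap I^2|=d_2$.
\item Hence the dual-eligible students above $i$ lying outside $W\cup T$ number at most $|(I^1\cap I^2)\setminus T|-d_2+g$. Using $|I^1\setminus I^2|+|T\cap I^2|\ge q_1$, this is at most $n-q_1-q_2+g$.
\item Adding $|W\setminus T|=q_2-g$ shows that at most $m$ students of $I^2\setminus T$ are weakly above $i$.
\end{itemize}

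Your closing step must also bound $|W\setminus T|$ rather than $|W|$. With a single dual-eligible student and one seat in each category, $|W|=1>0=m$. The bound $|W\setminus T|\le m$ is immediate when $q_1+q_2\le n$. Otherwise $q_1+q_2>n$ forces $n=|I^1\cup I^2|$, and hence $|W\cap T|\ge q_1+q_2-n$.
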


The proof is in \appref{supp:access}. It shows more generally that every student assigned through a category when that category is processed first remains reserve-selected when it is processed second, possibly through the other category: $X_2^{2\triangleright1}\subseteq A(X^{1\triangleright2})$ and $X_1^{1\triangleright2}\subseteq A(X^{2\triangleright1})$.

The comparison holds the applicant pool and all institutional
primitives fixed. It need not extend to individual welfare
comparisons within deferred acceptance, where rejection chains change the
applicant pools.

With multi-category classes, the order still determines whose priority prevails, but the access inclusion in \autoref{prop:access} can fail, because routing within a class can overturn it (see \suppsectionref{ord:boundary}).

\section{Application to School Choice in Chile}\label{sec:empirical}

Chile's centralized School Admission System (\emph{Sistema de Admisi\'on Escolar}, or SAE) provides a natural setting for studying overlapping reserves with different priorities. Some students qualify for both the academic-excellence and the disadvantaged reserve, $\exc$ and $\dis$, which use different priority criteria.

The law fixes who is eligible for each reserve and how each reserve ranks
students, but not how a student eligible for both is routed between them;
that is a feature of the implementation. After describing these requirements and reconstructing the deployed mechanism using data from nine admission cycles (2017--2025), we investigate failures of reserve utilization under sequential routing. We then develop sibling-adjusted deferred acceptance (SA-DA), which adapts classify-and-rank to Chile's institutional setting and provably prevents such failures, and measure how much it changes assignments and how the reserve order distributes access.

\subsection{Admission Objectives and the Deployed Mechanism}
\label{sec:empirical-setting}\label{sec:chile-exposure}

The SAE assigns places in publicly funded schools through centralized applications: families rank schools, places are allocated subject to capacities and admission priorities, and returning students retain secured enrollment at their current school if they do not obtain a preferred alternative. At oversubscribed schools, the School Inclusion Law (Law No.~20,845) applies priorities in succession: siblings, socioeconomically disadvantaged students up to a 15 percent target, children of school employees, and former students; the disadvantaged capacity reported for new admission is the shortfall remaining after enrolled disadvantaged students are counted. Authorized high-academic-demand schools may also allocate places by academic excellence under a different criterion, and schools operating a \emph{Programa de Integraci\'on Escolar} (PIE) may designate places for students with permanent special educational needs under a separate procedure, with unused PIE capacity available for general admission. \suppsectionref{supp:empirical-implementation} gives the full institutional account and the statutory references.

The SAE currently uses a version of the student-proposing deferred acceptance mechanism, processed grade by grade from higher to lower grades. Assignments in a higher grade can generate sibling priority in lower grades. Families applying for multiple children can also choose a joint application option, under which the school assigned to an older sibling is moved to the top of the younger siblings' preference lists. 

At each step of deferred acceptance, each \textbf{program}, a school--grade combination at which capacities and reserves are reported, decides which students to hold. In the deployed mechanism \citep{correa2022}, building on \citet{kurata17}, this decision is made using internal \textbf{subschools} corresponding to different seat types: regular seats, disadvantaged ($\dis$) reserves, academic-excellence ($\exc$) reserves at authorized programs, and special-needs (PIE) positions where applicable. Each student proposes to a program's subschools in a \textbf{routing order} that depends on her characteristics and enrollment status. For non-returning students, $\exc$ precedes $\dis$ whenever both are available. In general, $\dis$ precedes regular seats, except for non-disadvantaged siblings, who are considered for regular seats before $\dis$. Non-disadvantaged non-siblings therefore also propose to $\dis$ before regular seats and can hold $\dis$ positions. Returning students follow a separate routing procedure at their current program that protects their secured enrollment. Together, the subschool priorities and applicant-specific routing orders determine the program's choice rule, specifying which students it retains and through which subschools.

Siblings matter in particular. Following the statutory precedence, the
deployed $\dis$ subschool ranks them ahead of ordinary disadvantaged
students, so non-disadvantaged siblings, though routed to regular seats
first, can still take $\dis$ positions when regular seats run out, reducing
the capacity available to disadvantaged students.

\subsection{Data and Reconstruction}
\label{sec:data}

We use publicly available administrative data from Chile's Ministry of Education covering nine SAE admission cycles from 2017 to 2025 \citep{mineduc_sae_data}. The data contain students' applications and ranked school lists, program capacities and reserves, eligibility and priority information, lottery outcomes, family links, and final assignments. Although the data report each applicant's assignment status, school, and grade, they do not identify the subschool through which the applicant was admitted.

Using these data and the deployed mechanism described by \citet{correa2022}, we reconstruct the assignment process, including secured enrollment, PIE admissions, grade-by-grade processing, and family-related adjustments. Our reconstruction exactly reproduces the published assignment status, school, and grade for all 3{,}795{,}168 applicant--cycle observations (see \suppref{tab:replication}). Seat types come from the reconstructed routing; since routing is not observed, exact replication of assignments does not establish that this routing is unique (see \suppsectionref{supp:empirical-implementation}).

Our counterfactuals hold fixed submitted applications, capacities, eligibility, priority criteria, lottery realizations, and family links, while updating sibling priorities and family-related preference adjustments to reflect the assignments generated in higher grades.

\subsection{Reserve Utilization under the Deployed Mechanism}
\label{sec:res_uti_deployed}

An \textbf{overlapping-reserve program} is a program with positive capacity in both $\exc$ and $\dis$ reserves. Across 127 such program--cycles, 46{,}978 applicant--cycle observations list at least one overlapping-reserve program and are eligible for at least one of its reserves; 16{,}504 (35.1 percent) are eligible for both. Moreover, at 121 of these 127 program--cycles, the two reserve priorities cannot be represented by a common ranking of students eligible for both (see \suppref{tab:exposure}).

To ask whether sequential routing causes avoidable failures of reserve utilization, we adapt the maximum reserve utilization (MRU) criterion to Chile's institutional setting: siblings must remain admitted, although not necessarily through $\dis$, and secured enrollment and PIE are protected.

Fix an outcome, either the deployed assignment or a counterfactual one, and a
program $s$, and let $I_s$ and $\sigma_s$ denote its assignees at $s$ and their
seat assignments. An assignee occupying a reserve subschool for which she is not eligible is treated as an open-seat assignee, so that $\rho_s(I_s,\sigma_s)$ counts eligible reserve holders. Let $I_s^{\mathrm{prot}}\subseteq I_s$ denote the returning students and the holders of PIE positions, let $I_s^{\mathrm{sib}}\subseteq I_s$ denote the assignees with sibling priority at $s$, and let $I_s^{\mathrm{dem}}$ denote the reserve-eligible students who prefer $s$ to their assignments, all under that outcome.

\begin{definition}\label{def:immediate-violation}
An outcome satisfies \textbf{protection-respecting maximum reserve utilization} at program $s$ if there exists no matching $(I_s',\sigma_s')$ for $s$ such that
$I_s'\subseteq I_s\cup I_s^{\mathrm{dem}}$,
$I_s^{\mathrm{prot}}\cup I_s^{\mathrm{sib}}\subseteq I_s'$,
$\sigma_s'(j)=\sigma_s(j)$ for every $j\in I_s^{\mathrm{prot}}$,
$\sigma_s'(j)\neq \exc$ for every $j\in I_s^{\mathrm{sib}}$ with $\sigma_s(j)\neq \exc$, and
\[
\rho_s(I_s',\sigma_s')>\rho_s(I_s,\sigma_s).
\]
The \textbf{shortfall} at $s$ is the largest value of $\rho_s(I_s',\sigma_s')-\rho_s(I_s,\sigma_s)$ over such matchings. A student $i\in I_s^{\mathrm{dem}}$ has a \textbf{protection-respecting MRU claim} at $s$ if such a matching exists with $i\in I_s'\subseteq I_s\cup\{i\}$.
\end{definition}

\autoref{tab:immediate-violations} evaluates this criterion at the
overlapping-reserve programs under the deployed and reverse routings, and
\autoref{prop:sibling-adjusted-outcome} applies it to SA-DA. Compared with
\autoref{def:maximum_reserve_utilization}, the alternative matching must keep every
protected assignee admitted: returning and PIE students retain their seat
types, and siblings remain admitted, possibly through a different seat type
but not through an $\exc$ position they do not already hold. Unlike
\autoref{def:maximum_reserve_utilization}, which adds a single claimant, the
alternative may admit several eligible students or none, so the shortfall
also captures failures that arise from rerouting the assignees alone. The
restriction on siblings reflects how sibling precedence operates in the SAE: it applies
at the disadvantaged subschool and at regular seats (see
\suppsectionref{supp:empirical-implementation}), whereas $\exc$ positions are
allocated by academic priority. The protection therefore guarantees a
sibling's admission but not access to the academic reserve; moving her into
an $\exc$ position would let a protected admission occupy a position
allocated by academic priority, possibly ahead of rejected students with
higher academic priority. Dropping the restriction would only enlarge the set
of alternatives, so the shortfalls in \autoref{tab:immediate-violations}
would weakly increase; the proof that SA-DA satisfies the criterion
(\autoref{prop:sibling-adjusted-outcome}) also relies on it. 

We also examine how these failures depend on the reserve order through a \textbf{reverse-order} counterfactual that exchanges the positions of $\exc$ and $\dis$ in students' within-program routing orders, leaving all other institutional procedures unchanged. Because the deployed routing orders of returning students at their current program place $\dis$ before $\exc$, the exchange places $\exc$ before $\dis$ for them while placing $\dis$ before $\exc$ for non-returning students (see \suppsectionref{supp:empirical-implementation}). It is an applicant-specific reversal of the deployed routing rather than a uniform disadvantaged-first order.

\begin{table}[htbp]
\centering
\caption{Protection-respecting MRU failures under sequential routing}
\label{tab:immediate-violations}
\small
\setlength{\tabcolsep}{3.5pt}
\begin{tabular}{lrrrrrrrrr}
\toprule
& & \multicolumn{4}{c}{Deployed routing}
& \multicolumn{4}{c}{Reverse order} \\
\cmidrule(lr){3-6}\cmidrule(lr){7-10}
Cycle & Programs
& Failing & Shortfall & \begin{tabular}[b]{@{}r@{}}No\\claim\end{tabular} & \begin{tabular}[b]{@{}r@{}}Claim-\\ants\end{tabular}
& Failing & Shortfall & \begin{tabular}[b]{@{}r@{}}No\\claim\end{tabular} & \begin{tabular}[b]{@{}r@{}}Claim-\\ants\end{tabular} \\
\midrule
2017 & 6 & 0 & 0 & 0 & 0 & 2 & 12 & 0 & 37 \\
2018 & 15 & 2 & 21 & 21 & 0 & 10 & 106 & 28 & 1{,}643 \\
2019 & 35 & 8 & 109 & 93 & 58 & 24 & 277 & 112 & 2{,}779 \\
2020 & 23 & 9 & 91 & 86 & 5 & 22 & 196 & 38 & 1{,}849 \\
2021 & 23 & 4 & 21 & 21 & 0 & 21 & 205 & 132 & 1{,}630 \\
2022 & 16 & 2 & 43 & 43 & 0 & 16 & 132 & 61 & 1{,}160 \\
2023 & 3 & 0 & 0 & 0 & 0 & 3 & 8 & 0 & 639 \\
2024 & 3 & 0 & 0 & 0 & 0 & 3 & 18 & 0 & 565 \\
2025 & 3 & 0 & 0 & 0 & 0 & 3 & 8 & 0 & 515 \\
\midrule
Total & 127 & 25 & 285 & 264 & 63 & 104 & 962 & 371 & 10{,}817 \\
\bottomrule
\end{tabular}

\begin{minipage}{0.94\textwidth}
\footnotesize
\emph{Notes:} Overlapping-reserve programs. ``Failing'' counts
program--cycles with a positive shortfall, and ``Shortfall'' sums
the shortfalls in reserve positions (\autoref{def:immediate-violation}); ``No claim'' is the part of the shortfall at program--cycles where no rejected student has a claim, so that it can be recovered only by rerouting admitted students. ``Claimants'' counts
applicant--cycle observations with at least one claim; claimants are
not numbers of simultaneously recoverable positions. Preferences are
the lists processed by the mechanism.
\end{minipage}
\end{table}

A failure can arise purely from routing: a student eligible for both
reserves is held at $\exc$, which she reaches first, while an academically
eligible but non-disadvantaged student, rejected there, holds a $\dis$
position she is not eligible for; swapping them fills both reserves without
changing who is admitted. Under the deployed order, 25 of the 127
overlapping-reserve program--cycles fail protection-respecting MRU, all in the 2017--2022
transitional period, with a total shortfall of 285 reserve positions
(\autoref{tab:immediate-violations}). Of these, 264 positions are at the 22
failing program--cycles where no rejected student has a claim, so the
shortfall arises mainly from the routing of admitted students; the 63
claimants are all at the remaining three.

Reversing the reserve order increases the number of failing program--cycles to 104 and the shortfall to 962 reserve positions, with failures in every cycle through 2025. The composition also changes: only 371 of these positions are at program--cycles where no rejected student has a claim, and 591 arise where rejected eligible students have claims (an upper
bound on the part of the shortfall that concerns admission), so under the
reverse order the failures affect who is admitted and not only how admitted students are routed. Since the law does not fix the processing order, the asymmetry between the
two orders is a consequence of implementation design: under sequential routing, reserve utilization depends on the reserve order, capacities, and the composition of eligible students. A mechanism that maximizes reserve utilization under either order removes this dependence and lets the order govern only how reserved opportunities are distributed.

The scope of the Chilean problem has narrowed: exposure peaks in 2019 at 35 overlapping-reserve programs in 25 schools, and only three programs at one school remain from 2023, with the end of the transitional academic regime (see \suppref{tab:exposure}). Under the deployed order these show no failure, but under the reverse order all three fail in every cycle through 2025, with shortfalls of 8 to 18 reserve positions per cycle (\autoref{tab:immediate-violations}); the vulnerability belongs to sequential routing with overlapping reserves rather than to the transitional regime, and would reappear at scale if the academic-excellence reserve, still available to authorized schools under the permanent regime (see \suppsectionref{supp:empirical-implementation}), were extended.

\subsection{Sibling-Adjusted Deferred Acceptance}
\label{sec:sibling-adjusted-da}

We propose \textbf{sibling-adjusted deferred acceptance} (SA-DA), which
preserves secured enrollment and the separate PIE procedure while
replacing sequential routing across $\exc$ and $\dis$ with classify-and-rank.
To accommodate sibling priority, it releases just enough $\dis$ positions
to open competition to admit all siblings, so a non-disadvantaged student can
occupy a $\dis$ position only once released.

We now define the choice rule at a program $s$ for an applicant set $I$.
It processes returning students first, then PIE, then the remaining
students by classify-and-rank with a sibling adjustment.

Returning students are taken in a fixed order independent of the
applicant set; each takes the first seat type in her routing order with
remaining capacity, or a personal, nontransferable fallback position if
none remains,\footnote{Fallback positions lie outside the reported capacity; in this
one respect the SA-DA choice rule extends the matchings of
\autoref{sec:model}, and \autoref{prop:sibling-adjusted-outcome} is stated
for this extended domain.} so all are retained.

Remaining PIE positions then go to non-returning students by the
existing PIE procedure, PIE-eligible students in PIE-priority order and
then others in the PIE subschool's order, with unfilled PIE capacity
released to open competition.

The \textbf{residual} students, those neither returning nor selected
through PIE, are selected by classify-and-rank
at the remaining $\exc$, $\dis$, and open capacities, with the open-seat
priority ranking siblings above non-siblings, after releasing to open
competition the smallest number of $\dis$ positions at which
classify-and-rank selects every residual sibling, or all remaining $\dis$
positions if no number does. Released positions are filled in the open
stage, disadvantaged students first and then by open-seat priority; a
disadvantaged holder counts in $\dis$, and other holders are labelled
$\dis^r$. Write $C_s^{\mathrm{SA}}(I)=(\overline C_s^{\mathrm{SA}}(I),\sigma_s^{\mathrm{SA}}(I))$
for the resulting choice rule: $\overline C_s^{\mathrm{SA}}(I)$ consists of
the returning students, the students selected through PIE, and the
selected residual students, and $\sigma_s^{\mathrm{SA}}(I)$ gives their
seat types, with returning students keeping their first-step seat types
and PIE-selected students occupying PIE positions. The formal
construction is in \suppsectionref{supp:sa-da-pi}.

SA-DA processes grades from higher to lower, as in deployment, with
sibling priorities and family-block adjustments determined by its own
higher-grade assignments, and runs student-proposing deferred acceptance
within each grade with the rules $(\overline C_s^{\mathrm{SA}})_s$ and the
processed preference lists. Let $I_s^*$ be the students who propose to
$s$ during the procedure. By path independence (see
\suppsectionref{supp:sa-da-pi}), the students assigned to $s$ are
$\overline C_s^{\mathrm{SA}}(I_s^*)$, with seat types
$\sigma_s^{\mathrm{SA}}(I_s^*)$; seat types are taken from the full proposer
set because the release level and the routing can depend on rejected
students.

Besides protection-respecting MRU (\autoref{def:immediate-violation}), we evaluate the SA-DA outcome against three requirements. An outcome
satisfies \textbf{secured enrollment} if every returning student is
assigned to her current program or to a program she ranks above it. The outcome \textbf{accommodates siblings} at $s$ if, whenever releasing some
number of $\dis$ positions would let classify-and-rank select every sibling
among the residual students who propose to $s$, all of those siblings
are assigned to $s$ and the smallest such number of $\dis$ positions is
released. The outcome satisfies \textbf{priority-respect} at $s$ if it satisfies \autoref{def:priorities} at $s$, with returning students ranked first
for every seat type at their current program, PIE positions ranked by
the order of the PIE procedure, and $\dis^r$ positions treated as open
seats.

\begin{proposition}\label{prop:sibling-adjusted-outcome}\label{prop:sa-da-pi}
Within each grade and conditional on sibling-priority statuses, SA-DA
is strategy-proof, and its outcome satisfies secured enrollment,
sibling accommodation, priority-respect, and
protection-respecting maximum reserve utilization at every
program.\footnote{Like the deployed mechanism, SA-DA is not strategy-proof with
respect to the joint-family application procedure that generates the
sibling priorities and list adjustments taken as given here.}
\end{proposition}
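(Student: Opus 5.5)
The plan is to reduce everything to properties of the program-level choice rule $C_s^{\mathrm{SA}}$ and then transfer them to the grade-level DA outcome, as in the proof of \autoref{thm:existence}. Throughout, we fix a grade and treat sibling-priority statuses and processed preference lists as given. Within a grade the primitives are then exactly those of a one-shot many-to-one market, and each program uses the fixed rule $\overline C_s^{\mathrm{SA}}$.

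\textbf{Step 1: substitutability and size monotonicity.} We first show that $\overline C_s^{\mathrm{SA}}$ is substitutable and satisfies the law of aggregate demand. Returning students are always retained in a fixed order, so they act as a fixed block that consumes capacity independently of $I$. The PIE stage is a responsive rule on the remaining capacity and leaves a residual capacity that is weakly decreasing in the applicant set. The residual stage is classify-and-rank, already substitutable by \autoref{thm:existence}, composed with the release level. The release level is the smallest number $m$ of $\dis$ positions at which all residual siblings are selected.

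The key lemma is that enlarging $I$ weakly raises $m$. Adding applicants can only make siblings harder to select, and we plan to show this by the monotonicity of the class-selected sets in reserve capacity (\autoref{lem:nested-class-access}). Releasing $\dis$ positions to an open stage that ranks siblings first preserves substitutability, because as $m$ rises a rejected non-sibling is never recalled. We will verify this with the same weighted-assignment valuation used in \autoref{thm:existence}, now with a parameter for released positions.

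Together, substitutability and the law of aggregate demand give path independence. By \citet{hatfi04}, student-proposing DA with these rules is then strategy-proof within the grade, and each program's final assignment equals $C_s^{\mathrm{SA}}(I_s^*)$. That is the claim already cited from \suppsectionref{supp:sa-da-pi}.

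\textbf{Step 2: the four outcome properties.} Each is checked at $C_s^{\mathrm{SA}}(I_s^*)$.

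\begin{itemize}
\item \emph{Secured enrollment.} A returning student proposes to her current program unless she is held above it, and there she is always retained, via a fallback position if necessary.
\item \emph{Sibling accommodation.} This holds by construction of the minimal release applied to $I_s^*$, since all residual siblings who propose to $s$ are in $I_s^*$.
\item \emph{Priority-respect.} Returning and PIE positions respect the stated priorities by construction. For residual seats we argue as in \autoref{thm:existence}. Any student who prefers $s$ proposed there and was rejected by $C_s^{\mathrm{SA}}(I_s^*)$, and classify-and-rank respects priorities within $I_s^*$. Released $\dis^r$ seats are treated as open seats, where siblings ranking first matches the open priority used.
\end{itemize}

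\textbf{Step 3: protection-respecting MRU (main obstacle).} This is the hardest step. Suppose an alternative $(I_s',\sigma_s')$ keeps the protected assignees and their seat types and keeps siblings admitted outside new $\exc$ positions, yet raises $\rho_s$. First, fix the returning and PIE seat types. What remains is a problem of reserve utilization over the residual capacities, with candidates drawn from the residual admitted students and from $I_s^{\mathrm{dem}}\subseteq I_s^*$.

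Stages 1 and 2 of classify-and-rank achieve $n_s$ of the full residual proposer set at the post-release $\dis$ capacity. The difficulty is the released positions. If the alternative fills an extra $\dis$ seat, then either (a) the released seats could have been fewer, or (b) a sibling in the alternative occupies a $\dis$ seat.

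For case (a), we argue that any alternative with all siblings admitted and more $\dis$ utilization yields a smaller release level at which classify-and-rank still selects every sibling, contradicting minimality. This uses the restriction that siblings cannot be moved into $\exc$, and the fact that disadvantaged students fill released seats first. For case (b), we count the sibling as a $\dis$ holder, which our labelling already does when she is disadvantaged.

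I expect the exchange argument connecting the minimal release with the alternative's utilization to need care. I would first try an augmenting-path argument in the bipartite reserve graph.
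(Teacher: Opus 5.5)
\paragraph{The main gap is Step~3.} You leave the decisive argument open, and the route you sketch, a contradiction with minimality of the release, is not what closes it. The missing idea is a structural fact about the residual choice at release level $m$. Write $R^*$ for the reserve-selected set at capacities $(a,d-m)$ and $q=a+d+o$. Whenever $m\ge1$, or no release accommodates all siblings, the open stage admits only siblings and exhausts residual capacity:
\[
\Phi\setminus R^*\subseteq\Sigma^*,\qquad |\Phi|=q.
\]
This follows from minimality, $|\mathcal R^{(a,d-m+1)}(N^*)\cup\Sigma^*|>q$, together with a one-seat lemma you never state: removing one unit of reserve capacity removes at most one reserve-selected student and adds none. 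Given this fact, MRU is a counting bound, not a contradiction. The outcome uses $|R^*|+\min\{m,\delta\}$ reserve positions, where $\delta$ counts the disadvantaged open-stage siblings. An alternative uses at most $|R^*|+m$, since $m$ extra reserve seats raise maximal utilization by at most $m$. It also uses at most $|R^*|+\delta$. To see this, note that it must keep the $q-|R^*|$ open-stage siblings inside a set of size at most $q$, leaving at most $|R^*|-|R^*\cap\Sigma^*|$ other members. Siblings in $R^*$ add at most $|R^*\cap\Sigma^*|$. Open-stage siblings cannot enter $\exc$ and count in $\dis$ only if disadvantaged.

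Your case (a) fails as stated for two reasons. A feasible alternative with more reserve seats says nothing directly about classify-and-rank's own reserve set at a smaller release. And when no release accommodates the siblings, there is no smaller release to contradict. Your case (b) omits the $|R^*|+m$ cap. The same structural fact is needed in Step~2. A disadvantaged holder of a released position is labelled $\dis$, not $\dis^r$, so she must outrank rejected claimants under $\succ^{\dis}$. That holds only because she is a sibling, and your argument covers only $\dis^r$ holders.

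\paragraph{Step~1 also has gaps.} Returning students do not consume capacity independently of $I$. Which of them apply varies with $I$, and inserting one can shift later returning students' seat types. You need two facts here:
\begin{itemize}
\item the remaining capacity vector is coordinatewise decreasing in the set of returning applicants;
\item its total falls by at most one per added returning student. This is what yields size monotonicity, which you never actually argue.
\end{itemize}
Composing these stages, and PIE, with the residual rule also requires the residual choice to shrink when any capacity coordinate shrinks. That is delicate because the release threshold moves with capacity.

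For the residual rule itself, a weighted-assignment valuation ``with a parameter for released positions'' cannot deliver substitutability. The parameter depends on the applicant set, so there is no single valuation to which the Hatfield--Milgrom result applies. \autoref{lem:nested-class-access} also does not apply: it concerns a fixed pool with varying target cardinality. Monotonicity of $m$ in the pool follows simply from substitutability at each fixed release level.

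Your intuition that raising the release never recalls a rejected non-sibling is correct. What makes it rigorous is an exchange lemma: converting one $\dis$ seat to an open seat either leaves the choice unchanged, or swaps a reserve-selected student for the $\succ$-highest excluded student. Below the threshold, that excluded student is a sibling. Write $C_j(N)$ for the choice after releasing $j$ positions. For $I\subseteq I'$ this gives
\[
C_{m(I')}(I')\cap I\subseteq \bigl(C_{m(I)}(I')\cap I\bigr)\cup(\Sigma\cap I)\subseteq C_{m(I)}(I)
\]
whenever some release accommodates the siblings of $I$. Otherwise $m(I)=m(I')=d$, and fixed-level substitutability suffices.
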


The proof is in \suppsectionref{supp:sa-da-pi}.

\subsection{Reserve Utilization and Distributional Consequences under SA-DA}
\label{sec:sibling-adjusted-empirical}
\label{sec:empirical-order}

We run SA-DA on all nine cycles in two versions and compare both with deployment: academic-first keeps the deployed reserve order for non-returning students, and disadvantaged-first reverses it. Disadvantaged-first SA-DA thus differs from the reverse-order counterfactual of \autoref{sec:res_uti_deployed}, which also reverses the order for returning students.

By \autoref{prop:sibling-adjusted-outcome}, neither version can exhibit a protection-respecting MRU shortfall, and none arises in the data, against 285 positions at 25 program--cycles under deployment (see \suppref{tab:sibling-adjusted-deployment}). At overlapping-reserve programs, eligible holders of reserve positions rise from 15{,}105 under deployment to 15{,}386 under academic-first and 15{,}396 under disadvantaged-first SA-DA, gains of 281 and 291 positions out of 16{,}802; under the reverse-order counterfactual they fall to 14{,}069. Academic-first SA-DA stays very close to deployment: of 3{,}795{,}168
applicant--cycle observations, 89 receive a different program (47
higher-ranked, 42 lower-ranked). Disadvantaged-first changes 1{,}262 (635
higher, 627 lower). Academic-first assigns three fewer students than
deployment and disadvantaged-first five more.

The order nevertheless matters for who is assigned where. Between the two SA-DA
orders, 1{,}185 students receive different programs (1{,}179 different
schools), about 2 percent of the 55{,}857 applicant--cycle observations
listing an overlapping-reserve program (see \suppref{tab:exposure}), and
1{,}774 more keep their program but change seat type (see
\suppref{tab:sibling-adjusted-order-national}). Of the 1{,}185, 591 prefer
academic-first and 594 disadvantaged-first.

To see who benefits, we classify students by disadvantaged status and
academic eligibility, the latter meaning eligibility for $\exc$ at some
overlapping-reserve program on the submitted list (see
\suppsectionref{supp:routing-evidence}); the indicators describe
students, not necessarily the programs at which their assignments change.

\begin{table}[htbp]
\centering
\caption{Reserve order and student eligibility under SA-DA}
\label{tab:sibling-adjusted-order-eligibility}
\small
\setlength{\tabcolsep}{5pt}
\begin{tabular}{lrrrr}
\toprule
Eligibility profile
& Affected
& \begin{tabular}[c]{@{}c@{}}Prefer\\academic first\end{tabular}
& \begin{tabular}[c]{@{}c@{}}Prefer\\disadvantaged first\end{tabular}
& Share (\%) \\
\midrule
Disadvantaged only & 304 & 247 & 57 & 81.3 \\
Academic only & 332 & 60 & 272 & 81.9 \\
Both & 388 & 205 & 183 & -- \\
Neither & 161 & 79 & 82 & -- \\
\midrule
Total & 1{,}185 & 591 & 594 & -- \\
\bottomrule
\end{tabular}

\begin{minipage}{0.90\textwidth}
\footnotesize
\emph{Notes:} Affected students receive different programs
under the two orders. The final column reports the percentage
preferring the order that places the other reserve first.
Preference comparisons use submitted program rankings,
with being unassigned ranked last.
\end{minipage}
\end{table}

The aggregate balance masks a sharp split: 81.3 percent of affected
students who are disadvantaged but not academically eligible prefer
academic-first, and 81.9 percent of those academically eligible but not
disadvantaged prefer disadvantaged-first, so students eligible for one reserve category tend to benefit when the other category comes first.
Students eligible for both reserves or for neither are close
to evenly split, 205 to 183 and 79 to 82; those eligible for
neither can be affected only through rejection chains.

This matches \autoref{prop:access}: with $\exc$ first, dual-eligible
students may be routed through $\exc$, leaving $\dis$ positions for
disadvantaged-only students, and conversely. \autoref{prop:access}
compares reserve choices at a fixed pool, however, whereas the national
comparison also incorporates rejection chains and Chile's institutional
features.


\section{Conclusion}\label{sec:conclusion}

By selecting within common-priority classes before fixing categories, classify-and-rank achieves maximum reserve utilization, priority-respect, and strategy-proofness jointly even when overlapping reserves are governed by different priorities. These requirements do not determine how access is distributed; the common-priority classification and reserve order resolve that multiplicity, with a behavioral foundation for two reserve categories and a characterization for any number, and make the distributional choices in reserve design explicit.

The Chilean application shows that sequential reserve routing can generate avoidable reserve-utilization failures whose frequency depends substantially on the reserve order. We develop an institutionally adapted mechanism that is strategy-proof within each grade, taking sibling priorities as given, and achieves maximum reserve utilization subject to Chile's institutional protections, regardless of the order. Our counterfactual analysis shows that these failures can be addressed with relatively few changes to existing school
assignments. At the same time, reserve order has clear distributional implications: students eligible for only one reserve category tend to benefit when the other category comes first. Although the overlap between Chile's two reserves has narrowed in recent cycles, the design problem is general: Indian state admissions administer many overlapping categories under distinct criteria, where the common-priority classification applies directly.

\bibliographystyle{aer}
\bibliography{matching}

\appendix
\renewcommand{\sectionautorefname}{Appendix}
\renewcommand{\subsectionautorefname}{Appendix}

\section{Single-Priority Approach and Its Limits}
\label{sec:single_priority}

We recover the common-priority benchmark and its characterization, then show
why Gale dominance provides no uniform selection principle when priorities
differ.

Fix a school $s\in\mathcal S$ and assume $\mathcal T\neq\varnothing$. We say
that $s$ faces a \textbf{single-priority environment} if all seat types are
governed by the same criterion. Write the resulting common strict priority
over $\mathcal I$ as $\succ_s$, so that $\succ_s^t=\succ_s$ for every
$t\in\mathcal T\cup\{o\}$.

With $\mathcal B_s=\{\mathcal T\}$, classify-and-rank selects reserve
students by the common priority subject to maximum utilization and fills
the remaining capacity by the same priority; the reserve order affects only
routing.

For an applicant set $I\subseteq\mathcal I$, a selected student set
$J\subseteq I$ with $|J|\leq q_s$ satisfies \textbf{no justified envy}
\citep{sonmez/yenmez:22} if, for every $i\in I\setminus J$ and every
$j\in J$ with $i\succ_s j$, $n_s((J\setminus\{j\})\cup\{i\})<n_s(J)$.
A restricted choice rule satisfies this condition if its selected set does so
for every applicant set $I$.

We say that $J$ \textbf{supports maximum reserve utilization} if
$n_s(J)=n_s(I)$.

\begin{corollary}\label{cor:single-priority}
In a single-priority environment with
$\mathcal B_s=\{\mathcal T\}$:
\begin{enumerate}[label=(\roman*)]
\item
$\overline{\CR}_s$ coincides with the meritorious horizontal choice rule of
\citet{sonmez/yenmez:22};

\item
joint priority-respect is equivalent to priority-respect together with
the no-justified-envy condition; and

\item
$\overline{\CR}_s$ is the unique restricted choice rule that is
non-wasteful, satisfies no justified envy, and supports maximum reserve
utilization.
\end{enumerate}
\end{corollary}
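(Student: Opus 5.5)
The plan is to prove the three parts in the order (i), (iii), (ii), treating the whole problem as one about matroids. In a single-priority environment with $\mathcal B_s=\{\mathcal T\}$, the family $\{A(X):X\in\mathcal X_s(I)\}$ of student sets that can be simultaneously assigned reserved seats is the family of independent sets of a transversal matroid on $I$, and $n_s$ is its rank function.

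\emph{Part (i).} Stage~1 has a single class, so it scans the pairs $(1,i)$ in decreasing $\succ_s$ order. It retains $i$ exactly when the retained students together with $i$ still extend to an independent set of size $n_s(I)$. This is the greedy algorithm for a maximum-weight basis under weights decreasing in priority. Its output is therefore the Gale-dominant basis, i.e., the $\succ_s$-greedy maximal matchable set.

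Stage~2 only routes; it does not change $A(X_s(I))$. Stage~3 then fills the remaining $q_s-n_s(I)$ seats with the highest-priority remaining students under the same ranking.

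The meritorious horizontal choice rule of \citet{sonmez/yenmez:22} is defined the same way: it greedily selects reserve-eligible students by priority while maximizing the number of reserved seats filled, then fills open seats by priority. I will check that their processing (students in priority order, admit a student if utilization can still be increased) selects the same greedy basis. The tool is the matroid exchange property: a student is added by the greedy algorithm precisely when she increases the rank of the currently selected set. Both procedures fill open seats identically.

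\emph{Part (iii).} Existence is inherited from (i) and the known properties of the meritorious rule. For completeness I will verify it directly: selecting $\min\{q_s,|I|\}$ students gives non-wastefulness, and the basis has size $n_s(I)$, giving maximum utilization.

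For no justified envy, take $i\notin J$ and $j\in J$ with $i\succ_s j$. There are two cases.
\begin{itemize}
\item If $j$ is in the greedy basis, then $i$ was rejected because the earlier-retained, higher-priority students plus $i$ do not extend to a maximum basis. Swapping then strictly lowers $n_s$.
\item If $j$ is an open-seat student, then $i$ would have been taken before $j$, unless $i$ is in the basis. That is impossible, since $i\notin J$.
\end{itemize}

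For uniqueness, suppose $J\neq\overline{\CR}_s(I)$ satisfies all three properties. Non-wastefulness gives $|J|=\min\{q_s,|I|\}$. Take the highest-priority student $i$ in the symmetric difference.

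If $i\in\overline{\CR}_s(I)\setminus J$, I will use the basis exchange property to find $j\in J$ of lower priority whose replacement by $i$ keeps $n_s$ at $n_s(I)$. This contradicts no justified envy. Such a $j$ exists because $J$ contains a basis $Y$, and either $i$ extends $Y$'s prefix greedily or $i$ is an open-seat choice that outranks some non-basis or exchangeable member of $J$.

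If $i\in J\setminus\overline{\CR}_s(I)$, the symmetric argument uses the maximality of the greedy basis. I expect this exchange step to be the main obstacle. It needs a careful case split on whether $i$ lies in the greedy basis or in the open portion, using the strong exchange property of matroid bases.

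\emph{Part (ii).} Here $G=\mathcal T$ or $G=\mathcal T\cup\{o\}$, depending on the open criterion. In the single-priority case the open criterion is the class criterion, so $G=\mathcal T\cup\{o\}$, $A_s^G(M)=\mu^{-1}(s)$, and $n_s^G=n_s$. The joint condition then reads: for every $i$ with $s\mathrel{P_i}\mu(i)$ and every $j\in\mu^{-1}(s)$ with $i\succ_s j$,
\[
n_s\bigl((\mu^{-1}(s)\setminus\{j\})\cup\{i\}\bigr)<n_s(\mu^{-1}(s)).
\]
This is verbatim no justified envy for $J=\mu^{-1}(s)$ against the demand set. So joint priority-respect, which by definition includes priority-respect, is equivalent to priority-respect plus no justified envy.
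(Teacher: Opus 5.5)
Your arguments for (i), (ii), and the existence half of (iii) match the paper's. In the existence step with $j$ in the greedy basis, you should add that the open-seat members of $\overline{\CR}_s(I)$ cannot restore the lost rank. They cannot, because they all rank above the excluded $i$; this is the content of \autoref{lem:greedy_topup}.

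The gap is the uniqueness half of (iii). Your Case~1 justification is only a heuristic, and you leave the case $i\in J\setminus\overline{\CR}_s(I)$ unresolved. The missing idea is to pass to complements. Take two selections $J\neq J'$ with all three properties. Then $|J|=|J'|$ and $n_s(J)=n_s(J')=n_s(I)$, so $Z=I\setminus J$ and $Z'=I\setminus J'$ are equal-sized independent sets of the dual $\mathcal N^*$ of the transversal matroid on $I$ whose rank function is $n_s$. Let $i$ be the $\succ_s$-highest student in $J\triangle J'$, say $i\in J\setminus J'=Z'\setminus Z$. Augmenting $Z'\setminus\{i\}$ from $Z$ in $\mathcal N^*$ yields $j\in Z\setminus Z'=J'\setminus J$ such that $(J'\setminus\{j\})\cup\{i\}$ still has rank $n_s(I)$. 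Since $i\succ_s j$, $J'$ violates no justified envy. This single exchange is symmetric, needs no split between basis and open-seat students, and never uses the structure of $\overline{\CR}_s(I)$.

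If you keep your direct route, note where the contradiction must come from in the case $i\in J\setminus\overline{\CR}_s(I)$. It has to come from $J$'s support of maximum reserve utilization, not from a swap argument about $\overline{\CR}_s(I)$, which itself satisfies no justified envy.
\begin{itemize}
\item Since $i$ is rejected at both stages, it is spanned by the $h$ greedy-basis students above it, and every open-seat member of $\overline{\CR}_s(I)$ ranks above $i$.
\item $J$ agrees with $\overline{\CR}_s(I)$ above $i$, so its students above $i$ have rank $h$.
\item $J$ has only $n_s(I)-h$ further students, and the redundant $i$ is among them.
\item Hence $n_s(J)\le n_s(I)-1$, a contradiction.
\end{itemize}
In Case~1 the precise step is different. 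Extend the greedy-basis students above $i$ to a basis of $J$; this necessarily uses students below $i$. Then exchange $i$ into that basis. If $i$ is instead an open-seat choice, drop a redundant student of $J$ ranked below $i$; counting shows one exists.
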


Priority-respect appears separately in part~(ii) because no justified envy
constrains only the selected set: a set satisfying it can still be routed so
that a lower-priority student holds an open seat that an excluded student
could claim. When the routing fills $n_s(J)$ reserved seats, no justified
envy rules this out, because the swap would leave reserve utilization
unchanged. Classify-and-rank routes this way; part~(iii) concerns selected
sets only.

The proof is in \suppsectionref{supp:single-priority}; part~(iii)
recovers \citet[Theorem~2]{sonmez/yenmez:22}. We next show that Gale
dominance does not extend uniformly to reserve-specific priorities.
Fix a school $s\in\mathcal S$ and an applicant set
$I\subseteq\mathcal I$, and consider matchings $(I_s,\sigma_s)$ for $s$ with
$I_s\subseteq I$. Such a matching is \textbf{priority-respecting from $I$} if,
for every $i\in I_s$ with $\sigma_s(i)=t$ and every $j\in I\setminus I_s$,
either $j\notin\mathcal I^t$ or $i\succ_s^t j$; it satisfies
\textbf{maximum reserve utilization from $I$} if there exist no
reserve-eligible student $i\in I\setminus I_s$ and matching
$(I_s',\sigma_s')$ for $s$ such that $i\in I_s' \subseteq I_s\cup\{i\}$
and
$\rho_s(I_s',\sigma_s')>\rho_s(I_s,\sigma_s)$.
These are the school-level counterparts of \autoref{def:priorities} and
\autoref{def:maximum_reserve_utilization}. Let $\PR_s(I)$ and $\RE_s(I)$
denote the corresponding collections of matchings.

For a strict order $\succ$ and two subsets $I',I''\subseteq\mathcal I$, list
their members from highest to lowest according to $\succ$ as
$I'=\{i'_1,\ldots,i'_{|I'|}\}$ and $I''=\{i''_1,\ldots,i''_{|I''|}\}$. We say
that $I'$ \textbf{Gale-dominates} $I''$ with respect to $\succ$ if
$|I'|\geq|I''|$ and $i'_k\succeq i''_k$ for every $k\leq|I''|$, where $\succeq$
denotes the weak extension of $\succ$. When the reference order is the
open-seat priority $\succ_s^o$, we simply say that $I'$ Gale-dominates
$I''$.

For a set $\mathcal A$ of matchings for $s$ from $I$, a matching
$(I_s,\sigma_s)\in\mathcal A$ is \textbf{Gale-dominant} in $\mathcal A$ with
respect to $\succ$ if $I_s$ Gale-dominates, with respect to $\succ$, the
student set of every other matching in $\mathcal A$.

\begin{proposition}\label{prop_nonexistence_gale}
Fix a finite student set $\mathcal I$ with $|\mathcal I|\geq 6$. There
exists an environment with student set $\mathcal I$, two reserve
categories, and school-specific reserve eligibility such that, for every
strict ranking $\succ$ over $\mathcal I$, there exist a school $s$ and an
applicant set $I\subseteq\mathcal I$ for which no matching in
$\RE_s(I)\cap\PR_s(I)$ is Gale-dominant in that set with respect to
$\succ$.
\end{proposition}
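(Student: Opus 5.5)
The plan is to build the environment from many small gadgets, one per candidate ranking. Each gadget is a school with its own school-specific eligibility, and each school kills a whole family of rankings. Since $\mathcal I$ is finite, there are finitely many strict rankings $\succ$. I would therefore aim for a \emph{finite} covering: a handful of schools, indexed by which of a few pairwise comparisons $\succ$ makes, so that whatever $\succ$ is, one of them defeats Gale dominance. Fix six students $a,b,c,d,e,f\in\mathcal I$; any further students are made ineligible everywhere and placed at the bottom of every priority, so they never matter once excluded from the relevant applicant sets.

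The basic gadget is an enlargement of \autoref{ex:multiplicity}. Take a school $s$ with capacity two, one seat in each of $\dis$ and $\exc$, and $\tau_s(i_1)=\{\dis,\exc\}$, $\tau_s(i_2)=\{\dis\}$, $\tau_s(i_3)=\{\exc\}$. Use the priorities $i_1\succ^{\dis}i_2\succ^{\dis}i_3$ and $i_1\succ^{\exc}i_3\succ^{\exc}i_2$, and let the applicant set be $I=\{i_1,i_2,i_3\}$. I would verify directly from the definitions that $\RE_s(I)\cap\PR_s(I)$ has student sets exactly $\{i_1,i_3\}$ and $\{i_1,i_2\}$. Other two-student sets, or routings using an open seat, fail maximum reserve utilization from $I$ or priority-respect from $I$. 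Hence a Gale-dominant element exists iff $\succ$ compares $i_2$ and $i_3$ in the direction matching the set it contains. With two students the first ranks $i_1$ versus the other, so I need to check carefully whether $i_1$'s position can matter.

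To break Gale dominance completely, I would build gadgets in which the two admissible sets are $\{x,w\}$ and $\{y,z\}$, with the four students distinct. The two sets are then Gale-incomparable under $\succ$ whenever $\succ$ ranks the top of one set above the top of the other but the second member the other way. For example, take capacity two, one seat each, $\tau(x)=\{\dis\}$, $\tau(y)=\{\exc\}$, $\tau(z)=\{\dis\}$, $\tau(w)=\{\exc\}$, and cross the priorities $x\succ^{\dis}z$, $w\succ^{\exc}y$ against the presence of an extra dual-eligible student. I expect the exact configuration to need adjustment; the target is a school-level MRU$\cap$PR set consisting of two disjoint pairs. Given such a gadget on students $p,q,p',q'$, $\succ$ fails to have a Gale-dominant element whenever the rankings of the pairs cross. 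Schools using the six students in different roles then cover every $\succ$: for any strict order of six students, some assignment of roles produces crossing pairs. This last claim is a small combinatorial covering argument, which I would do by taking the top-ranked student under $\succ$ and pairing it appropriately.

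The main obstacle is designing the gadget so that the admissible set family is exactly two incomparable sets. Maximum reserve utilization and priority-respect tend either to leave a common dual-eligible student in both sets, which produces comparability, or to admit extra sets. My first attempt would be capacity three with an open seat whose priority is chosen per school, so that the open seat pushes different students in. Failing that, I would brute-force small configurations by hand, then take the union of schools to handle all $\succ$. Because eligibility is school-specific, the schools do not interact.
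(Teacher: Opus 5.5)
Your proposal never constructs the school on which everything rests, and the school you aim for cannot exist. You want an applicant set whose matchings in $\RE_s(I)\cap\PR_s(I)$ have exactly two selected sets, the disjoint pairs $\{x,w\}$ and $\{y,z\}$. But no two disjoint nonempty sets $A$ and $B$ can both be selected sets there.

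\emph{Step 1: the two matchings use disjoint seat types.} Every member of $B$ is excluded under $A$, and conversely. Suppose $a\in A$ and $b\in B$ hold the same seat type $t$ in their respective matchings. Each is then eligible for $t$, since everyone is eligible for $o$ and reserve feasibility gives eligibility for $t\in\mathcal T$. Priority-respect from $I$ then yields both $a\succ_s^t b$ and $b\succ_s^t a$, which is impossible.

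\emph{Step 2: the case $n_s(I)=0$.} Here every admitted student holds an open seat. By Step~1, at most one of the two matchings uses open seats, so this case is ruled out.

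\emph{Step 3: the case $n_s(I)>0$.} First, each set excludes some reserve-eligible student. Were all of them in $A$, then $B$ would exclude them all, and Lemma~\ref{lem:mru-demand} would force $B$ to fill $n_s(I)>0$ reserved seats. Its reserve holders, being reserve-eligible, would then lie in $A$, contradicting disjointness. By Lemma~\ref{lem:mru-demand} again, both matchings therefore fill $n_s(I)$ reserved seats. Their reserve assignments involve distinct students and, by Step~1, disjoint categories. Their union is thus a reserve assignment of size $2n_s(I)>n_s(I)$, a contradiction.

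The other gadgets you mention fail too. In Example~\ref{ex:multiplicity}, $\{i_1,i_2\}$ and $\{i_1,i_3\}$ differ in one student, so one Gale-dominates the other under every $\succ$. Your four-student sketch has the unique selected set $\{x,w\}$.

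The missing idea is a core admitted in every admissible set, which the paper supplies. It uses two students $i_1,i_2$ eligible for both categories and ranked first in both, at a school with capacity four and two seats per category. The remaining four students are split into a pair $P$ eligible for $t_1$ and a pair $Q$ eligible for $t_2$. The selected sets are then $\{i_1,i_2\}\cup P$, $\{i_1,i_2\}\cup Q$, and the split set $\{i_1,i_2,p^*,q^*\}$, where $p^*$ and $q^*$ head $P$ under $\succ^1$ and $Q$ under $\succ^2$. Because the core is shared, the top-$m$ counting characterization of Gale dominance reduces every comparison to the pairs.

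Your covering step is right in spirit, provided the $\succ$-top student $w_1$ of the four is paired with the bottom one $w_4$. Then $\{w_1,w_4\}$ and $\{w_2,w_3\}$ are Gale-incomparable. The split set must also be neutralized, however. The paper does this by adding, for each partition, a second school that reverses the within-pair priorities, and choosing the school at which $w_4$ heads its pair. The split set then contains $w_4$ and one of $w_2,w_3$, so it cannot dominate $\{w_1,w_4\}$. Three partitions times two orientations gives six schools.

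Without an explicit gadget whose entire admissible family you verify, your argument has no content beyond the covering step.
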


The environment in \appref{app:gale_dominance} has six schools sharing a
six-student pool and differing only in reserve eligibility and priorities;
no matching in the relevant sets uses an open seat, so the conclusion does
not depend on open-seat priorities. Gale dominance need not fail in every
environment.\footnote{Selecting a Gale-dominant matching whenever one
exists need not guarantee path independence; see \suppref{supp:gale-path}.}


\subsection{Proof of \autoref{prop_nonexistence_gale}}
\label{app:gale_dominance}

Choose six students $i_1,\ldots,i_6\in\mathcal I$, let
$I\coloneqq \{i_1,\ldots,i_6\}$ and $W\coloneqq \{i_3,i_4,i_5,i_6\}$, and take two reserve
categories, $t_1$ and $t_2$. For each of the three partitions of $W$ into pairs, $(P,Q)$
with $P\in\{\{i_3,i_6\},\{i_3,i_4\},\{i_3,i_5\}\}$ and $Q=W\setminus P$,
the environment has two schools, each with capacity $4$, two seats in each
category, an arbitrary open-seat priority, $I^1=\{i_1,i_2\}\cup P$,
$I^2=\{i_1,i_2\}\cup Q$, and $i_2\succ^1 i_1$ and $i_2\succ^2 i_1$ ranked
above $P$ under $\succ^1$ and above $Q$ under $\succ^2$. The two schools
of a partition differ only in the order of $P$ under $\succ^1$ and of $Q$
under $\succ^2$, which the second reverses. Priorities are completed
arbitrarily, and students outside $I$ do not enter the pool.

\smallskip
\noindent\emph{Step 1: Selected sets at a school.}
Fix one of the six schools and let $p^*$ and $q^*$ be the $\succ^1$-top
member of $P$ and the $\succ^2$-top member of $Q$. Since $n(I)=4$ and every
matching from $I$ excludes at least two reserve-eligible students,
\autoref{lem:mru-demand} implies that every matching in
$\RE(I)\cap\PR(I)$ fills all four reserve seats and uses no open seat. Each
admits $i_2$ and $i_1$, who rank first and second at both reserves:
excluding either leaves a lower-priority occupant in a category she can
claim. If $i_1,i_2$ both occupy $t_1$, then $t_2$ admits $Q$; if both occupy
$t_2$, then $t_1$ admits $P$; if they are split, priority-respect
places $p^*$ and $q^*$ in the remaining seats. All three routings respect
priorities and maximize utilization, so the selected sets are exactly
$X=\{i_1,i_2\}\cup P$, $Y=\{i_1,i_2,p^*,q^*\}$, and $Z=\{i_1,i_2\}\cup Q$.

\smallskip
\noindent\emph{Step 2: Reduction to the pairs.}
Fix a strict ranking $\succ$ over $\mathcal I$. For sets $I',I''$ of equal
size, $I'$ Gale-dominates $I''$ with respect to $\succ$ if and only if, for
every $m$, $I'$ contains at least as many of the $m$ highest-ranked students
as $I''$: if $i'_k\succeq i''_k$ for all $k$, then $i''_k$ among the top $m$
implies $i'_k$ is too; if $i''_k\succ i'_k$ for some $k$, then with $m$ the
rank of $i''_k$, $I''$ contains at least $k$ of the top $m$ and $I'$ at most
$k-1$. Since $X$, $Y$, and $Z$ share $i_1$ and $i_2$, dominance among them
is decided by their two remaining members.

\smallskip
\noindent\emph{Step 3: Choosing the school.}
Let $w_1\succ w_2\succ w_3\succ w_4$ be the members of $W$ ordered by
$\succ$. Exactly one partition puts $w_1$ and $w_4$ in the same pair; take
it, and take the school at which that pair's reserve priority ranks $w_4$
above $w_1$, so that $p^*=w_4$ if the pair is $P$ and $q^*=w_4$ if it is
$Q$. Since $w_1\succ w_2$ but $w_3\succ w_4$, the pairs $\{w_1,w_4\}$ and
$\{w_2,w_3\}$ are Gale-incomparable, so by Step~2 neither $X$ nor $Z$ is
Gale-dominant; and $Y$, whose two remaining members are $w_4$ and one of
$w_2,w_3$, does not dominate the set containing $w_1$. Hence no matching in
$\RE_s(I)\cap\PR_s(I)$ at this school is Gale-dominant with respect to
$\succ$.\qed

\section{Proof of \autoref{thm:existence}}
\label{supp:existence}

We first establish school-level and choice-theoretic properties of
classify-and-rank, some used again later, then apply cumulative-offer
results and establish joint priority-respect.

\subsection{Properties of Classify-and-Rank}
\label{app:cnr-properties}

Fix a school $s\in\mathcal S$ and an applicant set $I\subseteq\mathcal I$.
Let $\Hset_s(I)$ denote the set of matchings $(I_s,\sigma_s)$ for $s$ with
$I_s\subseteq I$. Recall from \appref{sec:single_priority} the notions
of being \emph{priority-respecting from $I$} and \emph{maximizing reserve
utilization from $I$}. A matching $(I_s,\sigma_s)$ is \textbf{non-wasteful
from $I$} if $|I_s|=\min\{q_s,|I|\}$. Let $\PR_s(I)$, $\RE_s(I)$, and
$\NW_s(I)$ denote the corresponding subsets of $\Hset_s(I)$.

For the first two stages we use the following notation. Let
$F_s(I)\coloneqq \{(k,i)\in\{1,\ldots,K_s\}\times I: r_s^t>0\text{ for some }t\in B_s^k\cap\tau(i)\}$
be the feasible class--student pairs and
$\mathcal Z_s(I)\coloneqq \{\pi_s(X):X\in\mathcal X_s(I)\}$ the realizable class
commitments. The orders of Stages~1 and~2 are $(k,i)\vdash_s(k',j)$ if and
only if $k<k'$, or $k=k'$ and $i\succ_s^{B_s^k}j$; and $(t,i)\vdash_s^R(t',j)$
if and only if $t\triangleright_s t'$, or $t=t'$ and $i\succ_s^t j$.

The first result records the greedy structure of the first two stages of
classify-and-rank.

\begin{lemma}[Lexicographic maximality]
\label{lem:lexicographic_dominance}
For every school $s$ and applicant set $I$:
\begin{enumerate}[label=(\roman*)]
\item $Z_s(I)$ is the unique lexicographically maximal element, under
$\vdash_s$, of $\{Z\in\mathcal Z_s(I):|Z|=n_s(I)\}$. In particular,
$Z_s(I)$ is realizable and $|Z_s(I)|=n_s(I)$.

\item $X_s(I)$ is the unique lexicographically maximal element, under
$\vdash_s^R$, of
$\mathcal X_s(I;Z_s(I))\coloneqq \{X\in\mathcal X_s(I):\pi_s(X)=Z_s(I)\}$.
\end{enumerate}
\end{lemma}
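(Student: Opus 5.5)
The plan is a direct greedy-exchange argument. It uses only the extendability test built into Stages~1 and~2, with no matroid structure. Both parts have the same form. There is a finite ground set ($F_s(I)$ ordered by $\vdash_s$, or $E_s(I)$ ordered by $\vdash_s^R$) and a nonempty family $\mathcal F$ of target sets. For (i), $\mathcal F=\{Z\in\mathcal Z_s(I):|Z|=n_s(I)\}$, which is nonempty because $n_s(I)$ is attained by some $X\in\mathcal X_s(I)$. For (ii), $\mathcal F=\mathcal X_s(I;Z_s(I))$. The procedure scans elements in order and retains an element iff the retained set plus that element is contained in some member of $\mathcal F$. So I would prove one abstract claim: the scan outputs a member of $\mathcal F$, and that member is the unique lexicographic maximum of $\mathcal F$. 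I would then instantiate the claim twice.

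\emph{Step 1 (invariant and membership).} By induction along the scan, the retained set $S$ is always contained in some member of $\mathcal F$. This holds initially because $\mathcal F\neq\varnothing$, and it is preserved by the retention rule. Let $S^*$ be the final set and pick $F^*\in\mathcal F$ with $S^*\subseteq F^*$. Suppose some $e\in F^*\setminus S^*$. When $e$ was scanned, the set $S$ retained so far satisfied $S\subseteq S^*\subseteq F^*$, so $S\cup\{e\}\subseteq F^*$ and $e$ would have been retained, a contradiction. Hence $S^*=F^*\in\mathcal F$. For (i) this gives that $Z_s(I)$ is realizable with $|Z_s(I)|=n_s(I)$. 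For (ii) it gives $X_s(I)\in\mathcal X_s(I)$ with $\pi_s(X_s(I))=Z_s(I)$; this requires the family in (ii) to be nonempty, which follows from (i).

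\emph{Step 2 (lexicographic maximality and uniqueness).} Let $F'\in\mathcal F$ with $F'\neq S^*$, and let $e$ be the first element in the scan order at which membership in $F'$ and $S^*$ differs. Suppose $e\in F'\setminus S^*$. The set $S$ retained just before $e$ equals $S^*$ restricted to earlier elements. By the choice of $e$, this also equals $F'$ restricted to earlier elements. Then $S\cup\{e\}\subseteq F'$, so $e$ would have been retained, which is a contradiction. Hence $e\in S^*\setminus F'$, which says exactly that $S^*$ is lexicographically greater than $F'$ under the given order. Because any two distinct sets are strictly comparable at their first difference, the maximum is unique.

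The step needing the most care is the instantiation for Stage~1. There the scan runs over $F_s(I)$, not over all of $\{1,\ldots,K_s\}\times I$. I must check that every pair in a member of $\mathcal F$ lies in $F_s(I)$, which holds because $(t,i)\in X$ forces $t\in\tau(i)$ with $r_s^t>0$. The Stage~1 test in the text is ``contained in $\pi_s(X)$ for some $X$ with $|X|=n_s(I)$''. This is the same as containment in a member of $\mathcal F$, since $|\pi_s(X)|=|X|$: each student appears in $X$ at most once. For Stage~2, the scan over $E_s(I)$ similarly contains every pair of every $X\in\mathcal X_s(I;Z_s(I))$.
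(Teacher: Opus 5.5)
Your proof is correct and follows the paper's own argument. An invariant shows that the final retained set coincides with its extension in the target family. A first-difference argument then shows that any competitor's first differing pair would have passed the extendability test, and Stage~2 is handled the same way relative to $Z_s(I)$. Packaging this as one abstract greedy claim, and checking that every pair of a target member is actually scanned and that $|\pi_s(X)|=|X|$, are details the paper leaves implicit.
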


\begin{proof}
Consider Stage~1. After each pair is processed, the retained set is
contained in some realizable class commitment of size $n_s(I)$, by the
definition of the extendability test. Let $Z$ be such an extension of the
final retained set. If the final set were a proper subset of $Z$, take a
pair in $Z$ that was not retained. When that pair was considered, $Z$
itself would have provided a size-$n_s(I)$ extension of the previously
retained pairs together with that pair, so it would have been retained, a
contradiction. Thus the final retained set $Z_s(I)$ is realizable and has
size $n_s(I)$.

Now let $Z'\neq Z_s(I)$ be any other realizable class commitment of size
$n_s(I)$, and let $f$ be the first pair under $\vdash_s$ at which they
differ. If $f\in Z'\setminus Z_s(I)$, then $Z'$ would have certified the
extendability of $f$ when it was considered, so $f$ would have been
retained. Hence $f\in Z_s(I)\setminus Z'$, and $Z_s(I)$ is
lexicographically superior to $Z'$.

The same argument applied to the Stage~2 extendability procedure,
conditional on $Z_s(I)$ and using the order $\vdash_s^R$, proves part~(ii).
\end{proof}

\begin{proposition}[School-level properties]
\label{prop:greedy}
For every school $s$, applicant set $I$, and classify-and-rank specification,
$\CR_s(I)\in\PR_s(I)\cap\RE_s(I)\cap\NW_s(I)$.
\end{proposition}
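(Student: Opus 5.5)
I would verify the three memberships separately, using \autoref{lem:lexicographic_dominance}.

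\textbf{Non-wastefulness from $I$.} By \autoref{lem:lexicographic_dominance}(ii), $X_s(I)$ is a reserve assignment, so it assigns at most $\min\{q_s,|I|\}$ students: its size is at most $\sum_t r_s^t\le q_s$ and each student appears at most once. Stage~3 then adds students from $I\setminus A(X_s(I))$ until either capacity or the pool is exhausted. Hence $|\overline{\CR}_s(I)|=\min\{q_s,|I|\}$. Reserve feasibility holds because Stage~2 uses only pairs in $E_s(I)$ and respects reserve capacities.

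\textbf{Maximum reserve utilization from $I$.} The realized utilization is $\rho_s=|X_s(I)|=|Z_s(I)|=n_s(I)$ by part~(i). Any alternative $(I_s',\sigma_s')$ with $I_s'\subseteq\overline{\CR}_s(I)\cup\{i\}\subseteq I$ has utilization at most $n_s(I)$. So no strict increase is possible, whatever the claimant.

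\textbf{Priority-respect from $I$.} This is the main step. Take $i\in\overline{\CR}_s(I)$ and $j\in I\setminus\overline{\CR}_s(I)$.

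\emph{Open seats.} If $\sigma(i)=o$, then $i$ was chosen in Stage~3 over $j$, so $i\succ_s^o j$.

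\emph{Reserved seats.} Suppose $\sigma(i)=t$, $t\in B_s^k$, and $j$ is eligible for $t$. Assume for contradiction that $j\succ_s^t i$, which equals $j\succ_s^{B_s^k} i$.

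\begin{itemize}
\item Consider $X'=(X_s(I)\setminus\{(t,i)\})\cup\{(t,j)\}$. Since $j$ is unassigned, $X'$ is a reserve assignment of size $n_s(I)$.
\item Its class commitment is $Z'=(Z_s(I)\setminus\{(k,i)\})\cup\{(k,j)\}$, which is realizable and has size $n_s(I)$.
\item Under $\vdash_s$, the pair $(k,j)$ precedes $(k,i)$. The two commitments agree on all other pairs, so the first pair at which they differ is $(k,j)\in Z'\setminus Z_s(I)$.
\item Thus $Z'$ is lexicographically superior to $Z_s(I)$, contradicting the uniqueness and maximality in \autoref{lem:lexicographic_dominance}(i).
\end{itemize}

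Hence $i\succ_s^t j$.

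The only point to check is that this first differing pair is genuinely $(k,j)$. That needs two facts: $j$ appears in no pair of $Z_s(I)$, and $(k,j)\vdash_s(k,i)$. Both hold by construction. So I expect no real obstacle beyond bookkeeping; Stage~2 is not even needed for priority-respect, only for feasibility.
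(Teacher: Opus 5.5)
Your proof is correct and follows essentially the paper's argument. Non-wastefulness and maximum reserve utilization come directly from the construction ($|X_s(I)|=n_s(I)$, and Stage~3 fills to $\min\{q_s,|I|\}$). Priority-respect for reserved seats rests on the same swap of $(t,i)$ for $(t,j)$ in $X_s(I)$. The only cosmetic difference is in how the contradiction is phrased: the paper says the swapped assignment would have passed the Stage~1 extendability test for $(k,j)$, whereas you invoke the lexicographic maximality of $Z_s(I)$ from \autoref{lem:lexicographic_dominance}(i), which encodes the same fact.
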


\begin{proof}
Write $\CR_s(I)=(I_s,\sigma_s)$. By construction, the reserve stages use
$n_s(I)$ reserved seats, and Stage~3 fills any remaining capacity whenever
enough students are available. Hence
$\CR_s(I)\in\RE_s(I)\cap\NW_s(I)$.

For priority-respect, suppose toward a contradiction that there exist
$i\in I_s$, $j\in I\setminus I_s$, and $t\in\mathcal T\cup\{o\}$ with
$\sigma_s(i)=t$, $j\in\mathcal I^t$, and $j\succ_s^t i$.

Suppose first that $t\in\mathcal T$, and let $B_s^k$ be the class
containing $t$. Replacing $(t,i)$ by $(t,j)$ in $X_s(I)$ gives a feasible
reserve assignment of the same cardinality, and $(k,j)\vdash_s(k,i)$.
Because $(k,j)$ precedes $(k,i)$, the replacement leaves intact every pair
retained before $(k,j)$ was considered. It therefore provides a
size-$n_s(I)$ extension of those pairs together with $(k,j)$, so Stage~1
would have retained $(k,j)$, contradicting $j\notin I_s$.

If $t=o$, then $i$ is selected in Stage~3 while $j$ remains available for
an open seat. Since Stage~3 selects the highest-$\succ_s^o$ remaining
students and $j\succ_s^o i$, this is again impossible.
\end{proof}

\subsection{Restricted Choice and Cumulative Offers}

Let $\overline{\CR}_s(I)$ denote the set of students selected by
$\CR_s(I)$, ignoring their seat types. We establish the properties of
$\overline{\CR}_s$ needed to apply the cumulative-offer results of
\citet{hatfi04}.

A restricted choice rule $C_s$ is \textbf{substitutable} if
$C_s(I')\cap I\subseteq C_s(I)$ whenever $I\subseteq I'$, and is
\textbf{size monotone} if $|C_s(I)|\leq|C_s(I')|$ whenever $I\subseteq I'$.
It is \textbf{path independent} \citep{plott} if
$C_s(I\cup I')=C_s(C_s(I)\cup I')$ for all $I,I'\subseteq\mathcal I$.

\begin{lemma}
\label{substitutability}
For every school $s$, $\overline{\CR}_s$ is substitutable.
\end{lemma}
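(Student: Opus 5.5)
We prove substitutability by representing $\overline{\CR}_s(I)$ as the student set of the unique maximum-weight element of a family of feasible assignments, under a weight function that does not depend on $I$. This is the approach the text after \autoref{thm:existence} announces. Substitutability then follows from the standard fact that maximum-weight independent sets in a transversal-type (gammoid) structure are substitutable: removing elements from the ground set can only cause surviving chosen elements to remain chosen.

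\textbf{Step 1: the bipartite model.} Build a bipartite graph with students on one side. On the other side, place $r_s^t$ copies of each category $t$, plus enough open slots to bring the total capacity to $q_s$. Student $i$ is linked to a copy of $t$ when $t\in\tau(i)$, and to every open slot. A matching for $s$ from $I$ corresponds to a matching in this graph restricted to $I$.

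\textbf{Step 2: a lexicographic weight.} Give every edge a weight with four components, in decreasing order of importance.
\begin{enumerate}[label=(\roman*)]
\item A very large constant $L_1$ for every reserve edge. This makes any maximum-weight matching first maximize reserve use, so it attains $n_s(I)$.
\item A weight $L_2\cdot w_1(k,i)$ for reserve edges of class $B_s^k$. Choose the values $w_1(k,i)$ as distinct powers of two decreasing along $\vdash_s$, so that comparing sums of $w_1$ over sets of equal size reproduces the lexicographic order of class commitments.
\item A weight $L_3\cdot w_2(t,i)$ encoding $\vdash_s^R$ in the same way.
\item For open edges, a weight $w_3(i)$ encoding $\succ_s^o$. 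Add a base constant $L_4$ on every edge so that cardinality is maximized before $w_3$ is compared.
\end{enumerate}
The scales are chosen so that each level strictly dominates all lower levels combined. Then \autoref{lem:lexicographic_dominance}, together with the greedy definition of Stage~3, shows that the classify-and-rank assignment is the unique maximum-weight matching from $I$. Stage~1 is the lexicographic maximum under $\vdash_s$ among realizable commitments of size $n_s(I)$. Stage~2 is the lexicographic maximum within that commitment. Stage~3 then takes the best open-priority students for the remaining seats.

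The base constant must be handled with care. Maximum cardinality must rank below reserve use, because adding open students should never trade off against reserved seats. Since reserves are soft and open capacity equals $q_s$ minus the reserves used, this needs a check. Concretely, I would model the open slots as $q_s$ slots minus the reserve edges used, which turns the structure into a capacitated flow on a single school sink. Alternatively, use $q_s$ "seat" nodes, each able to take any label.

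\textbf{Step 3: substitutability.} Use the known result that, for weighted bipartite matching with distinct weights (more generally, for maximum-weight bases of a gammoid or of a matroid intersection of this flow type), the chosen-student correspondence $I\mapsto$ (students covered by the maximum-weight matching) is substitutable. This is the choice rule induced by the student side of a weighted assignment, and such rules satisfy the gross-substitutes property (Kelso–Crawford; Hatfield–Milgrom), which for unit-demand buyers yields substitutability.

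The main obstacle is Step 2's verification that a single fixed weight vector reproduces all three stages for every $I$. The Stage-1 criterion compares only class commitments of maximum size $n_s(I)$, and that size varies with $I$. The $L_1$ layer handles this. What remains is to ensure that the power-of-two encoding of $\vdash_s$ compares sets correctly once they have equal size. I would first try this route and verify it directly through the exchange argument of \autoref{lem:lexicographic_dominance}. If the open-seat coupling complicates matters, a fallback is to prove substitutability directly: when a student $j$ is removed from $I'$, every pair retained earlier in Stage~1 stays retained. The argument goes by induction along $\vdash_s$ using augmenting-path exchanges, and the Stage~3 monotonicity then follows because the reserve selection shrinks.
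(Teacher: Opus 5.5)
Your strategy matches the paper's. You use a single assignment valuation, fixed independently of $I$, whose weights layer four terms: reserve use, a power-of-two encoding of $\vdash_s$, an encoding of $\vdash_s^R$, and an open-priority term. \autoref{lem:lexicographic_dominance} then identifies the student set of every maximum-weight assignment as $\overline{\CR}_s(I)$, and \citet[Theorem~14]{hatfi04} supplies substitutability.

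However, Step~1 as written does not model soft reserves, and no choice of scales can fix this. Copies of category $t$ are adjacent only to $t$-eligible students, and there are only $q_s-\sum_t r_s^t$ open slots, so unused reserve capacity is missing from the graph. Suppose some $r_s^t>0$, no student in $I$ is reserve-eligible, and $|I|\ge q_s$. Then $\overline{\CR}_s(I)$ contains $q_s$ students, but every matching in your graph covers at most $q_s-\sum_t r_s^t$, whatever $L_4$ is. Your capacitated flow into a single sink is a correct model, but it is not the bipartite assignment that your Step~3 citation covers. Your ``seat nodes able to take any label'' cannot enforce the reserve capacities inside an assignment structure unless the labels are fixed on the seat side.

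The missing step is the paper's construction. Keep $r_s^t$ jobs labelled $t$ and $q_s-\sum_t r_s^t$ open jobs, but let \emph{every} student fill \emph{every} job. The reserve components of the weight (the $\omega^3$ term, the class term, and the category term) are paid only when the student is eligible for the job's category. An ineligible student in a $t$-job gets only her base weight $u_i$, so unused reserve jobs act as open seats. This is your flow model written as an assignment, so \citet[Theorem~14]{hatfi04} applies directly.

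It also makes your $L_4$ layer unnecessary. Once the reserve edges form $X_s(I)$, each remaining job offers each remaining student only her positive base weight. Filling up to $\min\{q_s,|I|\}$ in $\succ_s^o$ order therefore involves no trade-off.

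Two smaller points. First, the maximum-weight matching is not unique, since copies of a category are interchangeable; only its student set is unique, and that is all substitutability needs. Second, the substitutes fact you need is the one for the school's assignment valuation over student sets. The paper uses productivity $1+a_{ij}$ and wage one so that no demanded set contains an unmatched student. The fact does not extend to general matroid intersections, whose valuations need not be substitutes.
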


\begin{proof}
Fix $s$. We construct a single assignment valuation, with jobs,
productivities, and wages fixed independently of the applicant set, whose
demand from every applicant set $I$ is $\overline{\CR}_s(I)$.

List $F_s(\mathcal I)=\{f_1,\ldots,f_m\}$ with
$f_1\vdash_s\cdots\vdash_s f_m$, and
$E_s(\mathcal I)=\{e_1,\ldots,e_n\}$ with
$e_1\vdash_s^R\cdots\vdash_s^R e_n$. Choose positive integers
$(u_i)_{i\in\mathcal I}$ representing $\succ_s^o$, so that
$i\succ_s^o j$ if and only if $u_i>u_j$. Let $V\coloneqq \sum_{i\in\mathcal I}u_i$ and $\omega\coloneqq 2^{m+n}(V+1)$.

Construct $q_s$ jobs: for each $t\in\mathcal T$, there are $r_s^t$
reserve jobs $(t,1),\ldots,(t,r_s^t)$, together with
$q_s-\sum_t r_s^t$ ordinary open jobs. Every student can fill every job.
Whenever $t\in\tau(i)$ and $r_s^t>0$, let $k(t,i)$ and $\ell(t,i)$ be the
positions of the corresponding class--student and reserve--student pairs
in the two orders above. Define

$$
a_{ij}\coloneqq u_i+
\begin{cases}
\omega^3+\omega^2 2^{m-k(t,i)}+\omega 2^{n-\ell(t,i)},
&\text{if }j=(t,p)\text{ and }t\in\tau(i),\\
0,
&\text{otherwise.}
\end{cases}
$$

Give job $j$ productivity $1+a_{ij}$ when filled by student $i$, and offer
each student the single fixed-wage contract with wage one. A demanded set
never contains an unassigned student, since dropping such a student saves
the wage without reducing productivity. Thus demand is determined by a
maximum-weight assignment under the weights $(a_{ij})$.

For an assignment $\nu$, let $A(\nu)$ denote its assigned students and
let $X(\nu)$ be the eligible reserve--student pairs it uses. Define
$\lambda(Z)\coloneqq \sum_{{k:f_k\in Z}}2^{m-k}$ and
$\eta(X)\coloneqq \sum_{{\ell:e_\ell\in X}}2^{n-\ell}$. The total net weight is

$$
W(\nu)
=
\omega^3|X(\nu)| + \omega^2\lambda\bigl(\pi_s(X(\nu))\bigr) + \omega\,\eta\bigl(X(\nu)\bigr)
+
\sum_{i\in A(\nu)}u_i.
$$

Since $\omega>V$, $\omega^2>\omega(2^n-1)+V$, and $\omega^3>\omega^2(2^m-1)+\omega(2^n-1)+V$ (each follows from $\omega\ge 2^{m+n}$ and $\omega>V$), maximizing $W$ first maximizes $|X(\nu)|$, then the class commitment
lexicographically under $\vdash_s$, then the reserve assignment
lexicographically under $\vdash_s^R$. By
\autoref{lem:lexicographic_dominance}, the resulting reserve assignment is
therefore $X_s(I)$.

Let $S$ be the set of students assigned by $X_s(I)$. If a reserve category is underfilled in $X_s(I)$, no student in
$I\setminus S$ is eligible for that category; otherwise that student could
be added to the underfilled category, contradicting maximality of reserve
utilization. Hence, conditional on $X_s(I)$, every remaining job yields
only the base weight $u_i$ to every remaining student. Since the $u_i$ are
positive and represent $\succ_s^o$, the remaining jobs are filled, up to
capacity, by precisely the highest-$\succ_s^o$ students in $I\setminus S$.
Thus every maximum-weight assignment selects exactly
$\overline{\CR}_s(I)$.

The construction is fixed independently of $I$ and defines an assignment
valuation. Assignment valuations are substitutes valuations
\citep[Theorem~14]{hatfi04}. Since the demanded student set is uniquely
$\overline{\CR}_s(I)$ for every $I$, the restricted choice rule
$\overline{\CR}_s$ is substitutable.
\end{proof}

\begin{lemma}
\label{size_monotonicity}\label{path_independence}
For every school $s$, $\overline{\CR}_s$ is size monotone and path independent.
\end{lemma}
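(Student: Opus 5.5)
Size monotonicity comes straight from the cardinality of the choice; path independence follows from substitutability (\autoref{substitutability}) plus an irrelevance-of-rejected-applicants property, via the standard equivalence of \citet{aizerman_malishevski}. I would argue it directly rather than cite it.

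\emph{Size monotonicity.} For any $I$, $\overline{\CR}_s(I)$ has exactly $\min\{q_s,|I|\}$ students. Stages~1--2 assign $n_s(I)\le\min\{q_s,|I|\}$ students; this is feasible because $\sum_t r_s^t\le q_s$. Stage~3 then adds the top $\min\{q_s-n_s(I),\,|I|-n_s(I)\}$ remaining students by $\succ_s^o$. Since $I\mapsto\min\{q_s,|I|\}$ is weakly increasing, $|\overline{\CR}_s(I)|\le|\overline{\CR}_s(I')|$ whenever $I\subseteq I'$.

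\emph{Irrelevance of rejected applicants.} Let $J=\overline{\CR}_s(I)$ and $J\subseteq I'\subseteq I$. I would show $\overline{\CR}_s(I')=J$ using the fixed assignment valuation from the proof of \autoref{substitutability}. There, $\overline{\CR}_s(I)$ is the unique demanded set, equivalently the unique student set of a maximum-weight assignment from $I$, and the weights do not depend on the applicant set. The optimal assignment from $I$ uses only students in $J\subseteq I'$, so it is feasible from $I'$. Every assignment from $I'$ is also an assignment from $I$, so the optimal value from $I'$ is at most the value from $I$. Hence that assignment is optimal from $I'$. Uniqueness of the demanded student set from $I'$ (again by the proof of \autoref{substitutability}) then gives $\overline{\CR}_s(I')=J$.

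\emph{Path independence.} Let $K=\overline{\CR}_s(I\cup I')$. By substitutability, $K\cap I\subseteq\overline{\CR}_s(I)$. Therefore $K\subseteq\overline{\CR}_s(I)\cup I'\subseteq I\cup I'$. Applying irrelevance of rejected applicants with the set $\overline{\CR}_s(I)\cup I'$ in the role of $I'$ and $I\cup I'$ in the role of $I$ yields $\overline{\CR}_s(\overline{\CR}_s(I)\cup I')=K$.

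The only delicate step is the irrelevance argument. It depends on uniqueness of the maximum-weight student set for every applicant set, not just for $I$. The substitutability proof established this: every maximum-weight assignment selects $\overline{\CR}_s(I)$. I would restate that point explicitly. If there were concern about ties in Stage~3, the positive integers $u_i$ representing the strict order $\succ_s^o$, together with the lexicographic weight scales, rule them out.
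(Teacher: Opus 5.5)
Your proof is correct, and its overall structure matches the paper's: size monotonicity from $|\overline{\CR}_s(I)|=\min\{q_s,|I|\}$, and path independence from substitutability plus consistency (irrelevance of rejected applicants). The two proofs differ in how they obtain consistency.

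The paper derives consistency in two lines from facts it has already established. If $i\notin\overline{\CR}_s(I)$, substitutability gives $\overline{\CR}_s(I)\subseteq\overline{\CR}_s(I\setminus\{i\})$. Since $i$ was rejected, $|I|>q_s$, so both sets have exactly $q_s$ members and therefore coincide. The paper then cites Aizerman--Malishevski for path independence.

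You instead return to the fixed assignment valuation in \autoref{substitutability}. You use the classical fact that a choice rule rationalized by a menu-independent objective with a unique optimal set satisfies irrelevance of rejected alternatives, and you then prove the Aizerman--Malishevski implication inline.

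Your route makes consistency independent of both substitutability and the cardinality formula. However, it relies on the stronger claim, made in the substitutability proof, that every maximum-weight assignment from every applicant set has student set $\overline{\CR}_s(\cdot)$, and on the weights not depending on the applicant set. The paper's counting argument needs only substitutability and choice size $\min\{q,|I|\}$. That is why it carries over verbatim to rules with no optimization representation, as the paper later uses for the sibling-adjusted residual rule (\autoref{lem:sa-pi-phi}).
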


\begin{proof}
By \autoref{prop:greedy},
$|\overline{\CR}_s(I)|=\min\{q_s,|I|\}$, giving size monotonicity.
If $i\notin\overline{\CR}_s(I)$, substitutability gives
$\overline{\CR}_s(I)\subseteq\overline{\CR}_s(I\setminus\{i\})$,
and size monotonicity makes their cardinalities equal. Thus the choices
coincide, establishing consistency. Consistency and substitutability imply
path independence \citep{AizMal:1981AutCont}.
\end{proof}

We next relate DA based on the full choice rules $\CR$ to the
cumulative-offer mechanism based on the restricted choice rules
$\overline{\CR}$.

\begin{corollary}
\label{cor:da_co_seats}
Let $(\mu,\sigma)$ be the outcome of DA based on $\CR$, and let $I_s^*$
be the set of students who propose to school $s$ during the procedure.
Then DA based on $\CR$ and the cumulative-offer mechanism based on
$\overline{\CR}$ induce the same school matching, and
$\CR_s(I_s^*)=(\mu^{-1}(s),\sigma_s)$ for every $s$.
\end{corollary}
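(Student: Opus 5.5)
The plan is to first establish the school-matching equivalence through the standard link between DA with one-at-a-time proposals and the cumulative-offer process, and then handle seat types by path independence. By \autoref{substitutability} and \autoref{path_independence}, each $\overline{\CR}_s$ is substitutable and path independent. In DA based on $\CR$, school $s$ holds at every moment the set $\overline{\CR}_s(H\cup\{i\})$, where $H$ is the set it currently holds and $i$ the new applicant. I will prove by induction on the steps of the procedure that the currently held set equals $\overline{\CR}_s(P)$, where $P$ is the set of all students who have proposed to $s$ so far. The base case is trivial. For the inductive step, path independence gives
\[
\overline{\CR}_s(P\cup\{i\})=\overline{\CR}_s\bigl(\overline{\CR}_s(P)\cup\{i\}\bigr)=\overline{\CR}_s(H\cup\{i\}).
\]
Consequently, a student rejected by $s$ at some step is never chosen again from any larger proposer set. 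This follows from substitutability: if $i\notin\overline{\CR}_s(P)$ with $i\in P\subseteq P'$, then $i\notin\overline{\CR}_s(P')$. Hence DA's proposal sequence is a valid execution of the cumulative-offer process based on $\overline{\CR}$: each student proposes down her list exactly when she is not held.

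Next I invoke the order-independence of the cumulative-offer outcome under substitutability, and here also size monotonicity or path independence \citep{hatfi04}. This shows that the final school matching coincides with the cumulative-offer mechanism's matching, namely $\mu^{-1}(s)=\overline{\CR}_s(I_s^*)$ for every $s$. I will take care that the terminal proposer set $I_s^*$ is the same in both descriptions, since the proposal histories coincide step by step.

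For seat types, the final step at $s$ is the last time $s$ applies $\CR_s$. It applies $\CR_s$ to $H\cup\{i\}$, with $H=\overline{\CR}_s(P)$, and $\sigma_s$ is $\sigma^{\CR}_s(H\cup\{i\})$. I need
\[
\sigma_s^{\CR}\bigl(\overline{\CR}_s(P')\bigr)=\sigma_s^{\CR}(P')
\]
for $P'=P\cup\{i\}=I_s^*$. This is the main obstacle, because the seat routing in Stages 1–2 could depend on rejected applicants. I will argue via irrelevance of rejected students within $\CR_s$ itself. Deleting an unselected student $j$ keeps $n_s$ unchanged, because $X_s(P')$ survives. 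It also keeps $Z_s$ and $X_s$ unchanged: $j$ appears in no pair of $Z_s(P')$ or $X_s(P')$, and the feasible sets only shrink while still containing these lexicographic maxima. So by \autoref{lem:lexicographic_dominance} the maxima persist. Stage 3 is unaffected because $j$ was not among the top open-priority remaining students. Iterating over all rejected students gives $\CR_s(\overline{\CR}_s(P'))=\CR_s(P')$. A school that receives no proposals is trivial. This yields $\CR_s(I_s^*)=(\mu^{-1}(s),\sigma_s)$.
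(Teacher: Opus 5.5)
Your proposal is correct and follows essentially the same route as the paper. Path independence makes the DA and cumulative-offer updates coincide step by step, and the lexicographic characterization in \autoref{lem:lexicographic_dominance} shows that discarding rejected students leaves $n_s$, $Z_s$, $X_s$, and the open-seat top-up unchanged; the paper does this in one step for the final pool $J_s$ rather than by iterated deletion. One bookkeeping slip: the seat types come from $\CR_s(H\cup\{i\})$, not $\CR_s(\overline{\CR}_s(P'))$, but your deletion argument gives $\CR_s(Y)=\CR_s(P')$ for every $Y$ with $\overline{\CR}_s(P')\subseteq Y\subseteq P'$, which covers $Y=H\cup\{i\}$.
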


\begin{proof}
Let $I_s(k)$ and $P_s(k)$ be the cumulative and new proposals received by
$s$ through and at step $k$. Path independence gives
$\overline{\CR}_s(I_s(k-1)\cup P_s(k))=\overline{\CR}_s(\overline{\CR}_s(I_s(k-1))\cup P_s(k))$,
so the cumulative-offer and DA updates coincide at every step and
$\overline{\CR}_s(I_s^*)=\mu^{-1}(s)$. For seat types, let
$J_s\subseteq I_s^*$ be the pool at the final DA update, so that
$\CR_s(J_s)=(\mu^{-1}(s),\sigma_s)$. Since $X_s(I_s^*)$ uses only students
in $\mu^{-1}(s)\subseteq J_s$, $n_s(J_s)=n_s(I_s^*)$, and every commitment
realizable from $J_s$ is realizable from $I_s^*$; hence
\autoref{lem:lexicographic_dominance} gives $Z_s(J_s)=Z_s(I_s^*)$ and, by
the same argument at Stage~2, $X_s(J_s)=X_s(I_s^*)$. The common selected
set then determines identical open-seat assignments.
\end{proof}

\subsection{Concluding the Proof}

\begin{lemma}[Greedy basis plus top-up]
\label{lem:greedy_topup}
Let $\mathcal N$ be a matroid on a finite ground set $U$ with rank $r$, and
let $\succ$ be a strict order on $U$. Let $H$ be the greedy basis of
$\mathcal N$ under $\succ$, and let $O$ consist of the $m$
highest-$\succ$ elements of $U\setminus H$. If
$i\in U\setminus(H\cup O)$, $j\in H\cup O$, and $i\succ j$, then
$\operatorname{rank}_{\mathcal N}((H\cup O\setminus\{j\})\cup\{i\})<r$.

\end{lemma}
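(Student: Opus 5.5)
We argue by a short case split on where $j$ lies. The only matroid fact we need is the standard property of the greedy basis: every element $x\in U\setminus H$ was rejected because it was dependent on the elements already chosen, so $x\in\operatorname{span}_{\mathcal N}(\{h\in H:h\succ x\})$.

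First we rule out $j\in O$. Both $i$ and $j$ would then lie in $U\setminus H$ with $i\succ j$. Since $O$ consists of the $m$ highest-$\succ$ elements of $U\setminus H$, having $j\in O$ would force $i\in O$, contradicting $i\notin H\cup O$. Hence $j\in H$. The set $H\setminus\{j\}$ is independent of size $r-1$, so its span has rank $r-1$. It therefore suffices to show that every element of $O\cup\{i\}$ lies in $\operatorname{span}_{\mathcal N}(H\setminus\{j\})$. Then $(H\cup O\setminus\{j\})\cup\{i\}$ is contained in that span, and its rank is at most $r-1<r$.

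Take $x\in O\cup\{i\}$; then $x\in U\setminus H$. We claim $x\succ j$. For $x=i$ this is the hypothesis. For $x\in O$, we have $x\succ i$, because $i\in U\setminus H$ is not among the top $m$ elements of $U\setminus H$ while $x$ is; transitivity then gives $x\succ j$. Now $j\succ x$ fails, so $j\notin\{h\in H:h\succ x\}$. This gives $\{h\in H:h\succ x\}\subseteq H\setminus\{j\}$. By the greedy property, $x\in\operatorname{span}(\{h\in H:h\succ x\})\subseteq\operatorname{span}(H\setminus\{j\})$, which completes the argument.

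The main point requiring care is the greedy-span property itself. When the greedy algorithm considers $x$, the set of elements retained so far is exactly $\{h\in H:h\succ x\}$, and $x$ was rejected precisely because adding it created a dependence. We would state this explicitly, noting that it is the standard exchange argument for matroid greedy bases.
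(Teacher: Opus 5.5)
Your proof is correct, but it takes a different route from the paper's. The paper first proves a global statement. It applies the greedy theorem to the dual matroid under the reverse order: there $U\setminus H$ is the greedy basis and $U\setminus(H\cup O)$ is an initial segment of it. Taking complements, $H\cup O$ Gale-dominates every $Y\subseteq U$ of the same size with $\operatorname{rank}_{\mathcal N}(Y)=r$. The lemma then follows because the swapped set $(H\cup O\setminus\{j\})\cup\{i\}$ would strictly Gale-dominate $H\cup O$.

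You argue locally instead. You note that $O$ is an upper segment of $U\setminus H$, so $j\in O$ would force $i\in O$; hence $j\in H$. You then use the fact that each rejected element $x$ lies in $\operatorname{span}(\{h\in H:h\succ x\})$. Since every $x\in O\cup\{i\}$ satisfies $x\succ j$, these predecessor sets avoid $j$. The whole swapped set therefore lies in $\operatorname{span}(H\setminus\{j\})$, which has rank $r-1$.

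Your argument is more elementary: it needs only the definition of the greedy algorithm and closure, with no duality and no Gale-dominance. It also shows why the rank drops: the removed element must be a basis element, and everything retained is spanned without it. The paper's route costs the dual-matroid greedy theorem but yields a stronger by-product. That by-product is the Gale-optimality of $H\cup O$ among all spanning sets of its size, which ties the lemma to the Gale-dominance theme of the single-priority benchmark.

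One minor point: the span property you flag as needing care is immediate. If $A$ is independent and $A\cup\{x\}$ is dependent, then $x\in\operatorname{span}(A)$; no exchange argument is required.
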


\begin{proof}
The matroid greedy theorem implies that $H\cup O$ Gale-dominates every
$Y\subseteq U$ with $|Y|=|H|+|O|$ and
$\operatorname{rank}_{\mathcal N}(Y)=r$. To see this, apply the
greedy-basis theorem to the dual matroid under the reverse order:
$U\setminus H$ is its greedy basis, and $U\setminus(H\cup O)$ is an
initial segment of that basis. Taking complements gives the claim.

If $Y\coloneqq (H\cup O\setminus\{j\})\cup\{i\}$ had rank $r$, then $H\cup O$
would Gale-dominate $Y$. But $Y$ replaces $j$ by the strictly higher-ranked
element $i$, so $Y$ strictly Gale-dominates $H\cup O$, a contradiction.
\end{proof}

Individual rationality follows from DA. By \autoref{substitutability}
and \autoref{size_monotonicity}, the restricted rules are substitutable
and size monotone. The cumulative-offer mechanism is therefore
strategy-proof \citep{hatfi04}, and \autoref{cor:da_co_seats} transfers
this property to DA based on $\CR$.

Fix its outcome $M=(\mu,\sigma)$ and a school $s$. By
\autoref{cor:da_co_seats}, its terminal choice is $\CR_s(I_s^*)$, where
$I_s^*$ contains all students who proposed to $s$, including every student
who prefers $s$ to her assignment. Non-wastefulness and priority-respect
therefore follow from \autoref{prop:greedy}. Any deviation violating
maximum reserve utilization would use students from
$\mu^{-1}(s)\cup\{i\}\subseteq I_s^*$ and more than $n_s(I_s^*)$
reserved seats, which is impossible.

For joint priority-respect, fix $B\in\mathcal B_s$, write
$H\coloneqq A_s^B(M)$, let $Q$ be the students assigned by $X_s(I_s^*)$
other than those in $H$, and write $U\coloneqq I_s^*\setminus Q$. We hold the reserve assignments of $Q$ fixed.

Suppose first that open seats do not share $B$'s criterion, so the relevant
group is $G=B$ and $n_s^B(H)=|H|$. A violation would give an excluded
student $i\in I_s^*$ and $j\in H$ with $i\succ_s^B j$ and $n_s^B((H\setminus\{j\})\cup\{i\})=|H|$. The replacement set can be assigned through $B$. Together with $Q$'s
unchanged assignments, it induces a realizable commitment of size
$n_s(I_s^*)$ that improves $Z_s(I_s^*)$ within $B$ and agrees elsewhere,
contradicting \autoref{lem:lexicographic_dominance}.

Now suppose open seats share $B$'s criterion. Write
$\succ_s\coloneqq \succ_s^B=\succ_s^o$, $G\coloneqq B\cup\{o\}$, and
$O\coloneqq A_s^o(M)$, so $A_s^G(M)=H\cup O$.
Let $\mathcal N_B$ be the transversal matroid on $U$ presented by
$r_s^t$ copies of $U\cap\mathcal I^t$ for each $t\in B$.
Its independent sets are exactly those assignable through $B$, so
$\operatorname{rank}_{\mathcal N_B}(Y)=n_s^G(Y)$ for $Y\subseteq U$.

This matroid has rank $|H|$: $H$ is independent, and any larger independent
set, together with $Q$'s fixed assignments, would increase reserve utilization.
Moreover, $H$ is its greedy basis under $\succ_s$. A lexicographically
superior basis $H'\subseteq U$ could replace $H$ while retaining $Q$,
giving a realizable size-$n_s(I_s^*)$ class commitment superior to
$Z_s(I_s^*)$, again contradicting \autoref{lem:lexicographic_dominance}.
Stage~3 selects $O$ as the highest-$\succ_s$ students in $U\setminus H$
up to the remaining capacity.

A joint-priority violation would now give
$i\in U\setminus(H\cup O)$ and $j\in H\cup O$ with $i\succ_s j$ and $n_s^G((H\cup O\setminus\{j\})\cup\{i\})=n_s^G(H\cup O)=|H|$. This contradicts \autoref{lem:greedy_topup} applied to $\mathcal N_B$.

Thus the additional joint-priority condition holds for every
$B\in\mathcal B_s$. Since $s$ was arbitrary, the DA mechanism based on
$\CR$ satisfies joint priority-respect. Together with the preceding
arguments, this proves \autoref{thm:existence}.

\section{Proof of \autoref{thm:characterization}}
\label{supp:characterization}

For a school matching $\mu$ and school $s$, let
$\mathcal D_s(\mu)\coloneqq \{i\in\mathcal I:s\mathrel{R_i}\mu(i)\}$.

We first record an equivalent formulation of maximum reserve utilization.

\begin{lemma}\label{lem:mru-demand}
Fix a school $s$, an applicant set $I\subseteq\mathcal I$, and a matching
$M=(I_s,\sigma_s)$ for $s$ with $I_s\subseteq I$. Then $M$ satisfies
maximum reserve utilization from $I$ if and only if there exist no
reserve-eligible student $i\in I\setminus I_s$ and matching
$(I_s',\sigma_s')$ for $s$ with $i\in I_s'\subseteq I$ and
$\rho_s(I_s',\sigma_s')>\rho_s(I_s,\sigma_s)$.
\end{lemma}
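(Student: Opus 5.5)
One direction is immediate, and the other I would prove by contraposition. Maximum reserve utilization from $I$, as defined in \appref{sec:single_priority}, forbids deviations with $i\in I_s'\subseteq I_s\cup\{i\}$. The condition in the lemma forbids the larger family of deviations with $i\in I_s'\subseteq I$. So the condition in the lemma implies maximum reserve utilization from $I$. For the converse, I would assume a reserve-eligible $i\in I\setminus I_s$ and a matching $(I_s',\sigma_s')$ with $i\in I_s'\subseteq I$ and $\rho_s(I_s',\sigma_s')>\rho_s(I_s,\sigma_s)$. From these I would build a deviation that uses only $I_s\cup\{i\}$ for some reserve-eligible excluded student.

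The key tool is the augmenting-path structure of bipartite matching between students and reserve seats, where student $x$ is adjacent to the $r_s^t$ copies of category $t$ whenever $t\in\tau(x)$. Let $X$ be the reserve assignment realized by $\sigma_s$, with $|X|=\rho_s(I_s,\sigma_s)$. Let $X'$ be the one realized by $\sigma_s'$, with $|X'|>|X|$.

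\emph{Case 1: $X$ is not maximum within $I_s$}, that is, $n_s(I_s)>\rho_s(I_s,\sigma_s)$. Then some maximum assignment on $I_s$ alone beats $\rho_s$. I would take it on $I_s\cup\{i\}$, assign every remaining student of $I_s$, and also $i$, to open seats. This is not quite right, because open seats are capped by $q_s$. Instead I would drop one student not used in the reserve part, or $i$ itself if the capacity is tight. Since $|I_s|\le q_s$, adding $i$ requires dropping one non-reserve student when $|I_s|=q_s$. Such a student exists unless all $q_s$ students are reserved, and in that case $\rho_s=q_s\ge\sum_t r_s^t$, so $X$ is already maximum. Concretely, I would set $I_s'':=\{i\}\cup(\text{reserved students of the larger assignment})\cup(\text{others up to }q_s-1)$. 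This keeps $i\in I_s''\subseteq I_s\cup\{i\}$, gives reserve count above $\rho_s$, and respects the capacity $q_s$.

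\emph{Case 2: $X$ is maximum on $I_s$.} Then $X\triangle X'$ contains an alternating path that augments $X$. It starts at a student $y\in I\setminus I_s$ who is unmatched in $X$, or at one matched only in $X'$. It ends at a seat free in $X$. Along the path, $y$ takes a reserve seat, and each subsequent student of $I_s$ shifts to a new seat, until a free seat is reached. Every intermediate student is in $I_s$, because the students matched in $X$ all lie in $I_s$. Hence $X$ plus this path is a reserve assignment on $I_s\cup\{y\}$ of size $|X|+1$. The student $y$ is reserve-eligible and lies in $I\setminus I_s$. I would take $y$ as the witness student and handle capacity by dropping an open-seat student as in Case 1. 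The full $I_s$ is reserve-matched only if $|X|=|I_s|$; then, if $|I_s|=q_s$, we have $|X|\ge\sum_t r_s^t$, which is impossible given $|X'|>|X|$.

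The main obstacle is the gap between ``some reserve-eligible $y$'' and the given $i$. The lemma as stated only requires the existence of some reserve-eligible excluded student, so choosing $y$ is legitimate, and I would phrase the proof that way. A secondary obstacle is the capacity bookkeeping, because open seats share the cap $q_s$. I expect the argument that a droppable open-seat holder always exists to be routine, using $\sum_t r_s^t\le q_s$.
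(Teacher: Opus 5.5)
Your route is the paper's. The paper splits on whether $n_s(I_s)>\rho_s(I_s,\sigma_s)$: in the first case it keeps the given $i$, and in the second it uses matroid augmentation to find a possibly different reserve-eligible student in $I\setminus I_s$. Your alternating path in $X\triangle X'$ is that augmentation made concrete. Case~2 is fine. Note that $y\notin I_s$ is exactly where the hypothesis $n_s(I_s)=|X|$ is used, which you should say.

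The gap is in the capacity step of Case~1. Let $Y$ be your maximum assignment on $I_s$. Your set $\{i\}$ together with the students of $Y$ fits only if $|Y|\le q_s-1$. You dismiss $|Y|=q_s$ by asserting $\rho_s(I_s,\sigma_s)=q_s$, but that would be a statement about $X$, not $Y$. The case does occur. Take a single category $t$ with $q_s=r_s^t=1$, $I_s=\{j\}$, $t\in\tau(j)\cap\tau(i)$, and $\sigma_s(j)=o$. Then $\rho_s=0<1=n_s(I_s)=q_s$, yet $\{i,j\}$ exceeds capacity. Your fallback of dropping $i$ itself is not allowed, since the deviation must contain $i$.

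The missing step is to route $i$ through a reserve rather than an open seat. If $|Y|=q_s$, then $|Y|\le\sum_t r_s^t\le q_s$ forces every category to be full in $Y$. Replacing an occupant of some $t\in\tau(i)$ with $r_s^t>0$ by $i$ then gives a deviation inside $I_s\cup\{i\}$ that still uses $|Y|>\rho_s(I_s,\sigma_s)$ reserved seats.

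The paper avoids this bookkeeping altogether. Since $i$ is reserve-eligible, it is a nonloop of the transversal matroid, so it lies in a maximum reserve assignment of $I_s\cup\{i\}$ of size at least $n_s(I_s)$. The deviation consists of that assignment's students alone, which fits within capacity because $\sum_t r_s^t\le q_s$. The same observation, that the deviation may drop every open-seat student, makes your capacity discussion in Case~2 unnecessary as well.
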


\begin{proof}
Only the ``only if'' direction requires proof. Write
$r\coloneqq \rho_s(I_s,\sigma_s)$ and suppose the stated condition fails. Thus
$n_s(I)>r$ and there is a reserve-eligible student outside $I_s$.

If $n_s(I_s)>r$, take any such student $i$. Since $i$ is a nonloop of
the transversal matroid induced by the reserve categories, it belongs to
a basis of $I_s\cup\{i\}$, whose size is at least $n_s(I_s)>r$.
If instead $n_s(I_s)=r$, matroid augmentation and $n_s(I)>r$ give some
$i\in I\setminus I_s$ with $n_s(I_s\cup\{i\})>r$. In either case there
is a matching contained in $I_s\cup\{i\}$, containing $i$, and using
more than $r$ reserved seats, contradicting maximum reserve utilization
from $I$.
\end{proof}

Since preferences are strict,
$\mathcal D_s(\mu)\setminus\mu^{-1}(s)=\{i:s\mathrel{P_i}\mu(i)\}$. Hence
\autoref{lem:mru-demand} implies that matching-level maximum reserve
utilization at $s$ is equivalent to maximum reserve utilization from
$\mathcal D_s(\mu)$.

For a matching $M=(I_s,\sigma_s)$ for $s$ drawn from $I$, define
\textbf{class precedence from $I$} by conditions~(i)--(v) of
\autoref{def:class_precedence}, taking $M'$ to be any matching for $s$
drawn from $I$ and replacing $\mu^{-1}(s)$ by $I_s$.
In particular, condition~(v) requires $I_s'\neq I_s$.
With $I=\mathcal D_s(\mu)$, this is exactly class precedence at $s$ for
$(\mu,\sigma)$.

For $r=0,\ldots,n_s(I)$, let
$\mathcal Z_s^r(I)\coloneqq \{Z\in\mathcal Z_s(I):|Z|=r\}$.
These sets are nonempty: any $r$-edge subassignment of a maximum reserve
assignment induces an element of $\mathcal Z_s^r(I)$. Let $Z_s^r(I)$
denote the unique lexicographically maximal element of
$\mathcal Z_s^r(I)$ under $\vdash_s$, and write
$S_s^r(I)\coloneqq \{i:(k,i)\in Z_s^r(I)\text{ for some }k\}$.
In particular, $Z_s^{n_s(I)}(I)=Z_s(I)$.

\begin{lemma}\label{lem:nested-class-access}
For every school $s$, applicant set $I$, and $r<n_s(I)$,
$S_s^r(I)\subseteq S_s^{r+1}(I)$.
\end{lemma}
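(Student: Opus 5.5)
The plan is to recast class commitments as projections of bipartite matchings and then use the augmenting-path structure of maximum-weight matchings of prescribed size. Build the bipartite graph $\Gamma$ whose left vertices are the students in $I$ and whose right vertices are the reserved seats: $r_s^t$ copies of each $t$ with $r_s^t>0$. Join $i$ to every copy of $t$ with $t\in\tau(i)$. Reserve assignments in $\mathcal X_s(I)$ correspond, up to permuting copies, to matchings of $\Gamma$, and $\pi_s$ sends a matching to its set of class--student pairs.

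Next I would give each edge $(i,(t,p))$ the weight $2^{m-k(t,i)}$, as in the proof of \autoref{substitutability}. Here $k(t,i)$ is the position of the pair $(k,i)$, with $t\in B_s^k$, in the order $\vdash_s$. The total weight of a matching is then $\lambda(\pi_s(X))$. Since $\lambda$ is injective on sets of class--student pairs, distinct commitments have distinct weights. Among commitments of a fixed size, lexicographic maximality under $\vdash_s$ coincides with maximality of $\lambda$. Hence $Z_s^r(I)$ is exactly $\pi_s(X)$ for any maximum-weight matching $X$ of cardinality $r$, and likewise for $r+1$. The student set $S_s^r(I)$ is the set of left vertices covered by such a matching.

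The core step is the standard successive-augmentation fact: if $X_r$ is a maximum-weight matching of size $r$ and a matching of size $r+1$ exists, then some maximum-weight matching of size $r+1$ has the form $X_r\triangle P$ for an augmenting path $P$ relative to $X_r$. To prove it, I would take any maximum-weight $(r+1)$-matching $Y$ and decompose $X_r\triangle Y$ into alternating paths and cycles. I would then pair off components with equal numbers of $X_r$- and $Y$-edges against one path with one more $Y$-edge. A swap argument would show that each balanced component can be exchanged without loss in either matching, because both matchings are optimal for their sizes. Toggling the single surplus path $P$ on $X_r$ then gives a size-$(r+1)$ matching of weight at least $w(Y)$. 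This exchange argument is where the main care is needed: the even-path and cycle components must be handled so that each swap keeps both matchings feasible and of the right size. If the direct argument is delicate, I would instead appeal to the known LP/Hungarian-method result that successive shortest augmenting paths produce optimal $k$-matchings for every $k$.

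Finally, augmenting along a path never uncovers a vertex. Every left vertex covered by $X_r$ remains covered by $X_r\triangle P$, so the covered student set of $X_r\triangle P$ contains $S_s^r(I)$. The matching $X_r\triangle P$ has maximum weight among size-$(r+1)$ matchings, and the maximizing commitment is unique, so its projection is $Z_s^{r+1}(I)$. Its covered students are therefore $S_s^{r+1}(I)$, which gives $S_s^r(I)\subseteq S_s^{r+1}(I)$.
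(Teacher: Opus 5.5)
Your proof is correct, but its key step differs from the paper's. You encode $\vdash_s$-lexicographic comparison through the binary edge weights of the substitutability proof, so that $Z_s^r(I)$ is the projection of every maximum-weight $r$-matching. You then import the successive-augmentation theorem for extreme matchings. Nestedness follows because augmenting along a path never uncovers a vertex.

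The paper instead argues by contradiction directly on two optimal slot matchings $X^r$ and $X^{r+1}$. If $i\in S_s^r(I)\setminus S_s^{r+1}(I)$, let $P$ be the component of $X^r\triangle X^{r+1}$ containing $i$. Either $P$ is balanced, or it has one surplus $X^r$-edge and is switched together with a component $Q$ having one surplus $X^{r+1}$-edge. In both cases the switch preserves both cardinalities. At the first changed class--student pair, one of the two commitments then becomes lexicographically better, a contradiction.

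The paper's route is shorter and self-contained: it needs neither weights nor the general theorem, because it only has to show that the particular student $i$ cannot be lost. Yours isolates a standard matching fact and ties the lemma to the weighted-matching machinery already used for substitutability and computation.

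If you prove the augmentation step yourself rather than cite it, your sketch needs one more case. $X_r\triangle Y$ can contain several paths with a surplus $Y$-edge as well as paths with a surplus $X_r$-edge, not just balanced components plus a single surplus path. Each surplus-$X_r$ path must be switched together with a surplus-$Y$ path, which is exactly the paper's $P\cup Q$ device. Optimality of both matchings then forces these weight changes to cancel, leaving a single surplus-$Y$ path to toggle.
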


\begin{proof}
Expand every reserve category into its distinct slots, and choose slot
matchings $X^r$ and $X^{r+1}$ realizing $Z_s^r(I)$ and
$Z_s^{r+1}(I)$, respectively. Suppose some
$i\in S_s^r(I)\setminus S_s^{r+1}(I)$, and let $P$ be the component of
$X^r\triangle X^{r+1}$ containing $i$.

If $P$ contains equally many edges from the two matchings, switching along
$P$ preserves both cardinalities. At the first class--student pair under
$\vdash_s$ whose membership changes, one of the switched matchings is
lexicographically superior to the corresponding $Z_s^r(I)$ or
$Z_s^{r+1}(I)$, a contradiction.

Otherwise $P$ contains one more edge from $X^r$. Since
$|X^{r+1}|=|X^r|+1$, another component $Q$ contains one more edge from
$X^{r+1}$. Switching along $P\cup Q$ again preserves both cardinalities,
and the first changed class--student pair yields the same contradiction.
Thus $S_s^r(I)\subseteq S_s^{r+1}(I)$.
\end{proof}

\begin{lemma}\label{lem:access_choice}
For every school $s$ and applicant set $I$, $\CR_s(I)$ respects class
precedence from $I$. Moreover, if
$M=(I_s,\sigma_s)\in\NW_s(I)\cap\PR_s(I)\cap\RE_s(I)$ respects class
precedence from $I$, then $I_s=\overline{\CR}_s(I)$.
\end{lemma}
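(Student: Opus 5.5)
The plan has two parts. In the first I show that $\CR_s(I)$ respects class precedence, using the lexicographic maximality in \autoref{lem:lexicographic_dominance}. In the second I show uniqueness: from a matching $M$ with the listed properties I build a class-precedence violation whenever $I_s\neq\overline{\CR}_s(I)$.

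\emph{Step 1: $\CR_s(I)$ respects class precedence.} Suppose there are a class $B_s^k$, a student $i\notin A_s^k(\CR_s(I))$ and an alternative $M'$ drawn from $I$ satisfying (i)--(iv). By (i), $M'$ uses at least $n_s(I)$ reserved seats, hence exactly $n_s(I)$. So its commitment $Z'\coloneqq\pi_s(X(M'))$ lies in $\mathcal Z_s(I)$ and has size $n_s(I)$. Compare $Z'$ with $Z_s(I)$ under $\vdash_s$:
\begin{itemize}
\item By (ii), the two commitments contain the same pairs in all classes $\ell<k$.
\item By (iii), every pair $(k,j)\in Z_s(I)$ with $j\succ_s^{B_s^k} i$ lies in $Z'$.
\item By (iv), $(k,i)\in Z'\setminus Z_s(I)$.
\end{itemize}
Hence the first pair at which the two commitments differ belongs to $Z'\setminus Z_s(I)$. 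It is either some $(k,j)$ with $j\succ_s^{B_s^k}i$, or $(k,i)$ itself. Either way $Z'$ is lexicographically superior to $Z_s(I)$, contradicting \autoref{lem:lexicographic_dominance}(i). Condition (v) is not needed here, which matches the remark after \autoref{def:class_precedence}.

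\emph{Step 2: uniqueness when some reserve-eligible student of $I$ is rejected.} Let $M=(I_s,\sigma_s)\in\NW_s(I)\cap\PR_s(I)\cap\RE_s(I)$ respect class precedence. If some reserve-eligible student is outside $I_s$, \autoref{lem:mru-demand} gives $\rho_s(M)=n_s(I)$, so $\pi_s(X(M))$ is a realizable commitment of maximum size. I then show that the class-by-class student sets $A_s^k(M)$ coincide with those of $Z_s(I)$. If not, take the first pair $f=(k,i)$ under $\vdash_s$ at which $\pi_s(X(M))$ and $Z_s(I)$ differ; by lexicographic maximality, $f\in Z_s(I)$. Expand categories into slots and take slot matchings realizing both commitments. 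In their symmetric difference, switching $M$'s assignment along the component through $i$ yields a reserve assignment of the same size: when the component is unbalanced, I pair it with a compensating component, as in the proof of \autoref{lem:nested-class-access}. This assignment keeps all earlier pairs (so (ii) and (iii) hold) and commits $i$ to class $k$. Filling the remaining capacity with $M$'s other admitted students then gives an alternative $M'$ satisfying (i)--(iv).

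\emph{Main obstacle.} The hard part is condition (v), $I_s'\neq I_s$. If $i\notin I_s$, it holds automatically. If $i\in I_s$ (on an open seat or in another class), the switch may merely reroute students without changing the admitted set, and class precedence then says nothing. In that case I would first try to use priority-respect: a lower-priority holder in class $k$ must be dislodgeable, so the switch either forces out a student of $I_s$ who is then replaced by a rejected eligible student (possible because some reserve-eligible student is rejected), or it contradicts the lexicographic ordering. Once the reserve-selected sets agree, the open seats are pinned down. By non-wastefulness both sets have size $\min\{q_s,|I|\}$. By priority-respect every excluded student ranks below every open holder under $\succ_s^o$, so both sets take the top-$\succ_s^o$ remaining students, as Stage 3 does.

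\emph{Step 3: uniqueness when every reserve-eligible student of $I$ is admitted.} Then maximum reserve utilization does not restrict routing, and $M$ may use only $r<n_s(I)$ reserved seats. Here I would use \autoref{lem:nested-class-access}: the class-selected sets $S_s^r(I)$ grow with $r$. Class precedence with (i) forces the students $M$ admits through classes to form $S_s^{r}(I)$ at $M$'s realized $r$. This lies inside $S_s^{n_s(I)}(I)$, and the remaining reserve-eligible students are admitted anyway. Non-wastefulness and open-seat priority-respect then show that the non-reserve-eligible admitted students are the top-$\succ_s^o$ ones in both $M$ and $\CR_s(I)$, so $I_s=\overline{\CR}_s(I)$.
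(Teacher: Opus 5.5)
Your Step~1 is correct and is the paper's argument. The gap is in Steps~2 and~3, at exactly the obstacle you flag. Because of condition~(v), class precedence says nothing about alternatives that only reroute students. It therefore cannot force $M$'s class routing to agree with classify-and-rank, and the intermediate claim of Step~2 is in fact false.

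Take capacity two, singleton classes $\{t_1\}$ before $\{t_2\}$ with one seat each, students $a,b$ eligible for both categories and $e$ eligible only for $t_1$, priorities $a\succ_s^{t_1}b\succ_s^{t_1}e$, and $I=\{a,b,e\}$. Classify-and-rank routes $a\to t_1$ and $b\to t_2$. Consider the matching $M$ with $b\to t_1$ and $a\to t_2$.
\begin{itemize}
\item It is non-wasteful and priority-respecting, since $b\succ_s^{t_1}e$ and $e$ is ineligible for $t_2$.
\item It fills both reserves while the reserve-eligible $e$ is rejected, so it falls under your Step~2.
\item It respects class precedence. At class~1 with $i=a$, condition~(i) forces $b$ into $t_2$, so the admitted set is again $\{a,b\}$ and (v) fails. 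With $i=e$, conditions (iii) and (iv) would need both $b$ and $e$ in the single $t_1$ seat. At class~2, condition~(ii) keeps $b$ in $t_1$, and $e$ is ineligible for $t_2$.
\end{itemize}
Yet $A_s^1(M)=\{b\}\neq\{a\}$. Your priority-respect repair has nothing to act on, because the switch forces no one out. The assertion in Step~3 that class precedence forces $M$'s reserve-assigned students to form $S_s^r(I)$ has the same defect: you never produce a witness whose admitted set differs from $I_s$.

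The missing idea is to assume $I_s\neq\overline{\CR}_s(I)$ from the outset and to target only the admitted set.

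\emph{Case $m=n_s(I)$.} Suppose $M$ fills $n_s(I)$ reserved seats and its commitment $Z$ differs from $Z_s(I)$. Take $M'=\CR_s(I)$ itself, with no component switching. Let $(k,i)$ be the first pair where the two commitments differ; it lies in $Z_s(I)$. Conditions (ii)--(iv) hold because the commitments agree on all earlier pairs, (i) holds because both fill $n_s(I)$ seats, and (v) is the contradiction hypothesis. Hence $Z=Z_s(I)$. Then any $x\in\overline{\CR}_s(I)\setminus I_s$ and $y\in I_s\setminus\overline{\CR}_s(I)$ are both Stage~3 students with $x\succ_s^o y$, contradicting priority-respect.

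\emph{Case $m<n_s(I)$.} Here \autoref{lem:mru-demand} puts every reserve-eligible student in $I_s$. If $\min\{q_s,|I|\}=n_s(I)$, the conclusion is immediate. Otherwise there is a student $x\in\overline{\CR}_s(I)\setminus I_s$, necessarily not reserve-eligible, and she does double duty.
\begin{itemize}
\item Since $x$ is chosen at Stage~3 over every $y\in I_s\setminus\overline{\CR}_s(I)$, priority-respect puts each such $y$ on a reserved seat.
\item If $Z\neq Z_s^m(I)$, realize $Z_s^m(I)$, drop an $I_s$-student it does not use (possible since $m<|I_s|$), and admit $x$ openly. This is a witness whose admitted set differs from $I_s$ because it contains $x$.
\end{itemize}
Thus $Z=Z_s^m(I)$, and \autoref{lem:nested-class-access} gives $I_s\setminus\overline{\CR}_s(I)\subseteq S_s^m(I)\subseteq S_s^{n_s(I)}(I)\subseteq\overline{\CR}_s(I)$, a contradiction.
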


\begin{proof}
First consider $\CR_s(I)$. If $B_s^k$, $i$, and $M'$ witnessed a
violation of class precedence, then, since $\CR_s(I)$ uses $n_s(I)$
reserved seats, $M'$ would also use $n_s(I)$ reserved seats.
Conditions~(ii)--(iv) would make its class commitment lexicographically
superior to $Z_s(I)$, contradicting
\autoref{lem:lexicographic_dominance}. Thus $\CR_s(I)$ respects class
precedence.

Now let $M=(I_s,\sigma_s)$ satisfy the hypotheses. Write
$C\coloneqq \overline{\CR}_s(I)$, $c\coloneqq \min\{q_s,|I|\}$, let $Z$ be the class
commitment of the reserve assignment of $M$, and set $m\coloneqq |Z|$ and
$n\coloneqq n_s(I)$. Non-wastefulness gives $|I_s|=|C|=c$. Suppose
$I_s\neq C$.

\medskip
\noindent\emph{Case 1: $m=n$.}

If $Z\neq Z_s(I)$, then $Z_s(I)$ is lexicographically superior to $Z$.
At their first differing pair, $\CR_s(I)$ satisfies conditions
(i)--(iv) of class precedence; condition~(v) follows from $I_s\neq C$.
This contradicts class precedence. Hence $Z=Z_s(I)$.

Thus $M$ and $\CR_s(I)$ select the same students through reserves.
Choose $x\in C\setminus I_s$ and $y\in I_s\setminus C$. Neither is
reserve-selected. Since Stage~3 selects $x$ but not $y$, we have
$x\succ_s^o y$, contradicting priority-respect. Hence $I_s=C$.

\medskip
\noindent\emph{Case 2: $m<n$.}

We first claim that every reserve-eligible student in $I$ belongs to
$I_s$; the claim uses the reformulation in \autoref{lem:mru-demand}, which
lets the alternative matching draw on all of $I$. Otherwise take a reserve-eligible $i\in I\setminus I_s$ and a maximum reserve
assignment. Every category for which $i$ is eligible is full in it, since
otherwise adding $i$ would enlarge it, so replacing one occupant of such a
category by $i$ gives a maximum reserve assignment containing $i$. A matching
drawn from $I$ therefore contains $i$ and uses $n>m$ reserved seats,
contradicting \autoref{lem:mru-demand}. In
particular $S_s^n(I)\subseteq I_s$.

If $c=n$, then $C=S_s^n(I)$, so this inclusion and $|I_s|=|C|$ imply $I_s=C$.
Hence $m<n<c$.

Choose $x\in C\setminus I_s$. By the preceding claim, $x$ is not
reserve-eligible and therefore enters $C$ at Stage~3. For any
$y\in I_s\setminus C$, $y\notin S_s^n(I)\subseteq C$, so $y$ is also
available at Stage~3 but is not selected. Hence $x\succ_s^o y$.
Since every student is eligible for open seats, priority-respect implies
that every such $y$ must be reserve-assigned under $M$. Thus $I_s\setminus C\subseteq\{i:(k,i)\in Z\text{ for some }k\}$.

We claim that $Z=Z_s^m(I)$. Otherwise $Z_s^m(I)$ is lexicographically
superior to $Z$. Let $\widehat X$ realize $Z_s^m(I)$. Every student in
$\widehat X$ is reserve-eligible and hence belongs to $I_s$. Since
$m<c$, choose $z\in I_s$ not used by $\widehat X$. Replacing the reserve assignment of $M$ by $\widehat X$, dropping $z$, and
admitting $x$ through an open seat gives an alternative with the same reserve
utilization and a different student set; it is feasible because it admits
$c\le q_s$ students and open seats are bounded only by capacity. Let $(k,i)$
be the first pair under $\vdash_s$ at which $Z_s^m(I)$ and $Z$ differ, so
that $(k,i)\in Z_s^m(I)\setminus Z$. Since $\vdash_s$ orders pairs first by
class, $Z$ and $Z_s^m(I)$ agree on classes before $k$, giving
condition~(ii); pairs of class $k$ ranked above $i$ precede $(k,i)$,
giving~(iii); and $(k,i)\in Z_s^m(I)$ gives~(iv). Conditions~(i) and~(v)
hold by construction, so this alternative witnesses a violation of class
precedence.
Therefore $Z=Z_s^m(I)$.

Applying \autoref{lem:nested-class-access} $n-m$ times gives
$S_s^m(I)\subseteq S_s^n(I)\subseteq C$. Together with $I_s\setminus C\subseteq\{i:(k,i)\in Z\text{ for some }k\}$ and $Z=Z_s^m(I)$, this
implies $I_s\setminus C\subseteq C$, impossible because
$I_s\setminus C$ is nonempty and disjoint from $C$. Hence $I_s=C$.
\end{proof}

For each $s$, let $\psi_s^C(I)$ consist of the sets $I_s\subseteq I$
admitting a routing in $\NW_s(I)\cap\PR_s(I)\cap\RE_s(I)$ that respects
class precedence from $I$; by \autoref{lem:access_choice},
$\psi_s^C(I)=\{\overline{\CR}_s(I)\}$, and the matching-level axioms
require $\mu^{-1}(s)\in\psi_s^C(\mathcal D_s(\mu))$. DA based on $\CR$
satisfies every required property except possibly class precedence by
\autoref{thm:existence}; its terminal choice $\CR_s(I_s^*)$
(\autoref{cor:da_co_seats}) respects class precedence from $I_s^*$ by
\autoref{lem:access_choice}, hence from $\mathcal D_s(\mu)\subseteq I_s^*$.
Conversely, any mechanism satisfying the axioms yields at each school a
matching for $s$ in
$\NW_s(\mathcal D_s(\mu))\cap\PR_s(\mathcal D_s(\mu))\cap\RE_s(\mathcal D_s(\mu))$
respecting class precedence from $\mathcal D_s(\mu)$ (reserve utilization
by \autoref{lem:mru-demand}, the rest by definition), so
\autoref{lem:access_choice} gives
$\mu^{-1}(s)=\overline{\CR}_s(\mathcal D_s(\mu))$. The restricted
mechanism therefore satisfies the extended choice axioms, and since
$\overline{\CR}_s$ is path independent and size monotone, Theorem~1 of
\citet{Dogan/Imamura/Yenmez:2025} identifies it as DA based on
$\overline{\CR}$, whose school matching is that of DA based on $\CR$.

\section{Proof of \autoref{ord:thm-two}}
\label{ord:proof-two}

We first show that any departure from the $t_1\triangleright t_2$ rule reveals an elementary
conflict resolved through $t_2$. For some reserve problem $(I,\tau)$,
write $X\coloneqq \Psi(I,\tau)$ and $Y\coloneqq X^{1\triangleright2}(I,\tau)$, and
suppose that $A(X)\ne A(Y)$. Both assignments maximize reserve utilization
and hence have the same cardinality. By
\autoref{lem:lexicographic_dominance}, $Y$ is the lexicographically
maximal maximum reserve assignment under the order $t_1\triangleright t_2$.

Choose $i_2\in A(Y)\setminus A(X)$, and let $t\in\{t_1,t_2\}$ be the reserve
through which $i_2$ is assigned under $Y$.

Because $i_2$ is eligible for $t$ but unassigned under $X$, maximum
reserve utilization implies that $t$ is full under $X$. Hence
$|X_t|=r^t\geq |Y_t|$. Since
$i_2\in Y_t\setminus X_t$, there exists
$j\in X_t\setminus Y_t$.

Suppose first that some such $j$ is unassigned under $Y$.
Replacing $i_2$ by $j$ through reserve $t$ gives another feasible maximum
reserve assignment. Since $j$ is assigned through $t$ under $X$ while
$i_2$ is unassigned, priority-respect of $X$ gives $j\succ^t i_2$.
The replacement therefore lexicographically improves $Y$, contradicting
its maximality.

Thus every student in $X_t\setminus Y_t$ is reserve-selected
under $Y$, and hence is assigned there through the other reserve, denoted
$t'$. In particular, $X_t\subseteq A(Y)$. Since
$i_2\in A(Y)\setminus A(X)$ and $|A(X)|=|A(Y)|$, there exists
$d\in A(X)\setminus A(Y)$. The preceding inclusion implies
$d\in X_{t'}$. Choose $j\in X_t\setminus Y_t$; then
$j\in Y_{t'}$.

Consider the assignment obtained from $Y$ by replacing
$(t,i_2),(t',j)$ with $(t,j),(t',d)$.
It is feasible and uses the same number of reserved seats. As above,
priority-respect of $X$ gives $j\succ^t i_2$.

If $t=t_1$, the new assignment improves $Y$ at the first reserve in the
order $t_1\triangleright t_2$, contradicting lexicographic maximality. Hence $t=t_2$ and
$t'=t_1$. Moreover, lexicographic maximality of $Y$ requires
$j\succ^1 d$; otherwise replacing $j$ by $d$ at the earlier reserve $t_1$
would improve $Y$. We have therefore found students $i_2,j,d$ such that
$(t_2,j),(t_1,d)\in X$, $t_2\in\tau(i_2)$, $t_1,t_2\in\tau(j)$,
$j\succ^1 d$, and $j\succ^2 i_2$, where the last inequality follows from
priority-respect of $X$.

Both reserves are full under $X$. Reserve $t_2$ is full because $i_2$ is
$t_2$-eligible and unassigned. If $t_1$ were not full, we could move
$j$ from $t_2$ to $t_1$ and assign $i_2$ through the resulting vacant $t_2$ seat,
increasing reserve utilization, a contradiction.

Let $i_1$ be the lowest-$\succ^1$ student in $X_1$. Since
$d\in X_1$ and $j\succ^1 d$, we have $j\succ^1 i_1$. Define
$D_1\coloneqq X_1\setminus\{i_1\}$ and
$D_2\coloneqq X_2\setminus\{j\}$. Then
$|D_1|=r^1-1$ and $|D_2|=r^2-1$. Every student in $D_1$ has
higher $t_1$-priority than $i_1$, while priority-respect of $X$ implies that
every student in $D_2$ has higher $t_2$-priority than $i_2$.

Now delete every unassigned student other than $i_2$. By irrelevance of rejected students, the
chosen assignment remains $X$. Next, for every remaining selected student
other than $j$, delete all eligibility links except the one used by $X$;
retain both eligibility links of $j$ and only the $t_2$-eligibility of $i_2$.
Each deleted eligibility link is unused, so irrelevance of unused eligibility again leaves the chosen
assignment unchanged.

The resulting problem is an elementary conflict with students $i_1,j,i_2$
and dedicated students $D_1,D_2$. The preserved assignment has
$i_1\to t_1$ and $j\to t_2$, so $\Psi$ resolves this elementary conflict
through $t_2$.

Thus any departure from the $t_1\triangleright t_2$ rule produces an elementary conflict
resolved through $t_2$. By symmetry, any departure from the $t_2\triangleright t_1$ rule
produces an elementary conflict resolved through $t_1$.

If elementary conflicts occur, consistent resolution requires that they all
be resolved through the same reserve. If they are resolved through $t_1$,
the first implication rules out every departure from the $t_1\triangleright t_2$ rule. If
they are resolved through $t_2$, the symmetric implication rules out every
departure from the $t_2\triangleright t_1$ rule. If no elementary conflict occurs, the first
implication rules out every departure from the $t_1\triangleright t_2$ rule.

Hence there exists an order $\triangleright$, either $t_1\triangleright t_2$ or $t_2\triangleright t_1$, such that, for
every reserve problem $(I,\tau)$, $A(\Psi(I,\tau))=A(X^{\triangleright}(I,\tau))$. Whenever elementary conflicts occur, the first reserve in
$\triangleright$ is precisely the reserve through which $\Psi$ assigns
their multi-eligible student.

\section{Proof of \autoref{prop:access}}
\label{supp:access}

Fix a reserve problem $(I,\tau)$, and let
$I^1\coloneqq \{i\in I:t_1\in\tau(i)\}$,
$I^2\coloneqq \{i\in I:t_2\in\tau(i)\}$, and $I^{1\cap 2}\coloneqq I^1\cap I^2$.
Let $n$ denote the maximum feasible
number of reserve assignments at $(I,\tau)$.

We prove the stronger inclusion $X_2^{2\triangleright1}\subseteq A(X^{1\triangleright2})$. The symmetric statement follows by interchanging $t_1$ and $t_2$.

Let $W\coloneqq X_2^{2\triangleright1}$ and $T\coloneqq X_1^{1\triangleright2}$, and write $q_2\coloneqq |W|$ and $q_1\coloneqq |T|$. Because the first class is processed lexicographically subject to preserving maximum reserve utilization, $q_2=\min\{r^2,|I^2|,n\}$ and $q_1=\min\{r^1,|I^1|,n\}$. Under $t_1\triangleright t_2$, after $T$ is fixed, the rule assigns through $t_2$ the $m\coloneqq n-q_1$ highest-$\succ^2$ students in $I^2\setminus T$.

Under $t_2\triangleright t_1$, a $q_2$-set $Q\subseteq I^2$ can be assigned through $t_2$ as part of an $n$-seat reserve assignment if and only if $|Q\cap I^{1\cap 2}|\le d_2\coloneqq |I^1|-(n-q_2)$, because the remaining $n-q_2$ students must be distinct students in $I^1\setminus Q$. Hence $W$ is the $\succ^2$-lexicographically maximal $q_2$-subset of $I^2$ satisfying this bound.

We first show that $|W\setminus T|\le m$. If $q_1+q_2\le n$, this is immediate. Suppose $q_1+q_2>n$. For two reserve categories, $n=\min\{r^1+r^2,\ |I^1\cup I^2|,\ r^1+|I^2|,\ r^2+|I^1|\}$. The inequality $q_1+q_2>n$ rules out each term other than $|I^1\cup I^2|$ as the binding one, so $n=|I^1\cup I^2|$. Therefore $|W\cap T|\ge q_1+q_2-n$, and hence $|W\setminus T|\le n-q_1=m$.

Now fix $i\in W\setminus T$ and let
$H_i\coloneqq \{j\in I^2\setminus(W\cup T):j\succ^2 i\}$.
If $H_i=\varnothing$, then the number of students in $I^2\setminus T$ weakly above $i$ is at most
$|W\setminus T|\le m$,
so $i\in X_2^{1\triangleright2}$.

Suppose instead that $H_i\neq\varnothing$. Every $j\in H_i$ must be dual-eligible: otherwise $j$ could have been retained before $i$ under $t_2\triangleright t_1$ without affecting the possibility of filling $t_1$. Moreover, when such a $j$ was rejected, the bound on $|Q\cap I^{1\cap 2}|$ was already binding. Writing $g\coloneqq |W\cap T|$, this gives
$|H_i|\le |I^{1\cap 2}\setminus T|-d_2+g$.

But $|I^{1\cap 2}\setminus T|-d_2=|I^{1\cap 2}|-|T\cap I^{1\cap 2}|-|I^1|+n-q_2\le n-q_1-q_2$, since $|I^1\setminus I^{1\cap 2}|+|T\cap I^{1\cap 2}|\ge |T|=q_1$. Thus
$|H_i|\le n-q_1-q_2+g$.
The number of students in $I^2\setminus T$ weakly above $i$ is therefore at most $|W\setminus T|+|H_i|\le (q_2-g)+(n-q_1-q_2+g)=n-q_1=m$. Hence $i\in X_2^{1\triangleright2}$.

It follows that $W\setminus T\subseteq X_2^{1\triangleright2}$. Students in $W\cap T$ are already reserve-selected through $t_1$ under $t_1\triangleright t_2$, so $X_2^{2\triangleright1}=W\subseteq A(X^{1\triangleright2})$. Finally, a student outside $I^1$ cannot be routed through $t_1$. Therefore $X_2^{2\triangleright1}\setminus I^1\subseteq X_2^{1\triangleright2}\setminus I^1$. Interchanging $t_1$ and $t_2$ gives the second inclusion.

\ifsupplement
\clearpage
\newgeometry{left=1.20in, top=1.20in, right=1.20in, bottom=1.20in}

\setcounter{section}{0}\setcounter{equation}{0}
\renewcommand{\thesection}{S\arabic{section}}
\renewcommand{\theequation}{S.\arabic{equation}}
\setcounter{table}{0}\renewcommand{\thetable}{S\arabic{table}}\def\theHtable{supp.\arabic{table}}
\setcounter{theorem}{0}\setcounter{lemma}{0}\setcounter{proposition}{0}
\setcounter{corollary}{0}\setcounter{definition}{0}\setcounter{example}{0}\setcounter{remark}{0}
\renewcommand{\thetheorem}{S.\arabic{theorem}}
\renewcommand{\thelemma}{S.\arabic{lemma}}
\renewcommand{\theproposition}{S.\arabic{proposition}}
\renewcommand{\thecorollary}{S.\arabic{corollary}}
\renewcommand{\thedefinition}{S.\arabic{definition}}
\renewcommand{\theexample}{S.\arabic{example}}
\renewcommand{\theremark}{S.\arabic{remark}}

\def\theHsection{supp.\arabic{section}}
\def\theHequation{supp.\arabic{equation}}
\def\theHtheorem{supp.\arabic{theorem}}
\def\theHlemma{supp.\arabic{lemma}}
\def\theHproposition{supp.\arabic{proposition}}
\def\theHcorollary{supp.\arabic{corollary}}
\def\theHdefinition{supp.\arabic{definition}}
\def\theHexample{supp.\arabic{example}}
\def\theHremark{supp.\arabic{remark}}

\begin{center}
{\large\textbf{Supplementary Appendix}}\\[0.3em]
\end{center}

\vspace{1em}



\section{Computation of the Classify-and-Rank Choice Rule}
\label{supp:computation}

The classify-and-rank choice rule is defined through extendability tests, each
asking whether a candidate commitment is compatible with maximum reserve
utilization. We show that these tests, and hence the rule itself, are
computationally tractable.

\begin{proposition}\label{prop:computation}
For every school $s\in\mathcal S$, applicant set $I\subseteq\mathcal I$, and
classify-and-rank specification, $\CR_s(I)$ can be computed in polynomial time.
\end{proposition}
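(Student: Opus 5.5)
The plan is to reduce every extendability test to a single maximum-weight bipartite matching computation; the rule then requires polynomially many such tests. Build the bipartite graph with one side the students in $I$ and the other side the reserve slots, namely $r_s^t$ copies of each category $t$ with $r_s^t>0$. Put an edge between student $i$ and every slot of $t$ whenever $t\in\tau(i)$. This graph has at most $|I|\cdot q_s$ edges, and its maximum matchings have size $n_s(I)$. Computing $n_s(I)$ is therefore one run of Hopcroft--Karp, or of the Hungarian algorithm.

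For Stage~1, suppose the retained pairs form a set $R$ and the candidate is $(k,i)$. The test asks whether some reserve assignment $X$ of size $n_s(I)$ satisfies $R\cup\{(k,i)\}\subseteq\pi_s(X)$. I would encode it with weights:
\begin{itemize}
\item give weight $1$ to each edge between a student $j$ and a slot of a category in $B_s^\ell$ when $(\ell,j)$ is a required pair;
\item give weight $0$ to all other edges;
\item add a large constant $L>|I|$ to every edge, so that the weights favour cardinality first.
\end{itemize}
A maximum-weight matching then has size $n_s(I)$ and, among those, covers as many required pairs as possible. Each student appears in at most one required pair, because earlier retained pairs already use their students and a student already in $R$ gets no new pair. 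Hence the test succeeds exactly when the optimal weight equals $L\,n_s(I)+|R|+1$. The Stage~2 test is the same with category-level requirements: required edges go to slots of a single category $t$. The extra constraint $\pi_s(X)=Z_s(I)$ is imposed by deleting, for each student $j$, all edges to categories outside her committed class, and deleting all edges of students not in $Z_s(I)$. A reserve assignment of size $n_s(I)$ in this pruned graph automatically has $\pi_s(X)=Z_s(I)$, since $|Z_s(I)|=n_s(I)$ forces every committed student to be used.

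Counting steps, Stage~1 processes at most $|F_s(I)|\le K_s|I|$ pairs and Stage~2 at most $|E_s(I)|\le|\mathcal T||I|$ pairs. Each test is one maximum-weight matching on a graph with $O(|I|+q_s)$ vertices and $O(|I|q_s)$ edges. The sorting behind the orders $\vdash_s$ and $\vdash_s^R$, and the Stage~3 open-seat selection by $\succ_s^o$, are clearly polynomial. The total is polynomial in $|I|$, $q_s$, and $|\mathcal T|$. Alternatively, the lexicographic weight construction from the proof of \autoref{substitutability} gives $\CR_s(I)$ in one shot. Its weights have $O(m+n)$ bits, so the Hungarian algorithm still runs in polynomial time in the bit-complexity model; I would mention this as a second route.

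The step I expect to need the most care is the Stage~1 encoding. A class commitment only constrains the class, not the category, so the weight-$1$ edges must cover all slots of every category in $B_s^\ell\cap\tau(j)$. I also need to verify that a covered required pair cannot be double-counted, and that the weight threshold is tight. Choosing $L>|I|$ ensures that no trade of cardinality for requirement coverage is profitable, which is the key inequality to check.
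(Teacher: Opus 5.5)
Your proposal is correct and follows essentially the paper's route. It uses a slot-expanded bipartite graph in which each Stage~1 and Stage~2 extendability test is one maximum-weight matching, with polynomially many tests. Your weighting (a uniform offset $L>|I|$ forcing cardinality $n_s(I)$, plus a unit bonus on required edges) differs only cosmetically from the paper's (deleting committed students' out-of-class edges and weighting them by $|\mathcal I|+1$), and both thresholds are valid.

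The one refinement the paper adds is to use only $\min\{r_s^t,|I|\}$ slots per category. This keeps the bound polynomial in the input size even when capacities are encoded in binary. Your count is polynomial in $q_s$, so in that case it is only pseudo-polynomial.
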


\begin{proof}
For the computational claim, consider the bipartite graph whose left vertices
are the students in $I$, whose right vertices consist of $r_s^t$ slots for each
category $t\in\mathcal T$ with $r_s^t>0$, and whose edges join each student
$i\in I$ to the slots of the categories in $\tau(i)$. Reserve assignments from
$I$ correspond to matchings in this graph, so $n_s(I)$ is computable in
polynomial time as the cardinality of a maximum matching.

The extendability
test at step $\ell$ of the class-commitment stage asks whether some reserve
assignment of cardinality $n_s(I)$ realizes every commitment in
$Z^{\ell-1}\cup\{f_\ell\}$; because the committed students must be covered,
this does not reduce to maximum-cardinality matching alone. Instead, restrict
the edges of each committed student to the slots of the categories of her
committed class, give the restricted students' edges weight $|\mathcal I|+1$
and all remaining edges weight one, and compute a maximum-weight matching.
If $Z^{\ell-1}\cup\{f_\ell\}$ contains two pairs with the same student, no
class commitment contains it and the test fails outright; so assume its pairs
involve $c_\ell$ distinct students, each restricted as above. A matching that
covers all $c_\ell$ restricted students and has total cardinality $n_s(I)$ has
weight $(|\mathcal I|+1)c_\ell+n_s(I)-c_\ell$; since the restricted graph is a
subgraph, no matching has cardinality above $n_s(I)$, so any matching covering
at most $c_\ell-1$ restricted students has weight at most this value minus
$|\mathcal I|$, and no accumulation of unrestricted students can compensate.
Hence the test succeeds if and only if the maximum weight equals the displayed
value.

The tests in the exact-assignment stage are analogous: every student
committed in $Z_s(I)$ is restricted to the slots of the categories of her
class, the students of already-selected pairs and of the pair under
consideration are restricted further to the slots of their own category. If
the pair under consideration and an already-selected pair involve the same
student but different categories, the test fails outright. Otherwise, the
count of restricted students again includes the student of the pair being
tested. It suffices to test containment rather than equality of the induced
class commitment, because $|Z_s(I)|=n_s(I)$ leaves no room for an uncommitted
student. Taking $\min\{r_s^t,|I|\}$ slots per category keeps the graph
polynomial in the input even when capacities are encoded in binary. The first
stage performs at most $|F_s(I)|$ tests and the second at most $|E_s(I)|$, so
the rule requires a polynomial number of maximum-weight bipartite matching
computations, after which the open seats are filled according to
$\succ_s^o$.
\end{proof}

\section{Limits of the Access Comparative Static}
\label{ord:boundary}

The student-by-student access protection underlying
\autoref{prop:access} has two limits: it need not extend from two singleton
reserve classes to arbitrary common-priority classes, and, when the school
also has open seats, it concerns reserved-seat access rather than admission.

\smallskip
\noindent\emph{Multi-category classes.}
Let the categories be $t_1,t_2,t_3$, with one seat in each, grouped into
the classes $B^1=\{t_1,t_2\}$ and $B^2=\{t_3\}$.
There are four students with
$\tau(u)=\{t_2,t_3\}$, $\tau(v)=\{t_1\}$,
$\tau(w)=\{t_1,t_2,t_3\}$, and $\tau(z)=\{t_2\}$.
Class priorities are
$u\succ^{B^1} w\succ^{B^1} z\succ^{B^1} v$ and
$z\succ^{B^2} v\succ^{B^2} u\succ^{B^2} w$.

Maximum reserve utilization is three. With $B^1$ before $B^2$,
classify-and-rank assigns $u,v$ through $B^1$ and $w$ through $B^2$,
realized by $(t_2,u),(t_1,v),(t_3,w)$. With $B^2$ before $B^1$, it assigns
$u$ through $B^2$ and $w,z$ through $B^1$, realized by
$(t_3,u),(t_1,w),(t_2,z)$. Hence $v$ is reserve-selected when $B^1$ is
first but not when $B^2$ is first, so the corresponding class-level access
inclusion fails.

The failure arises from internal routing within a multi-category class.
The order still determines whose priority prevails, but no longer
guarantees student-by-student access inclusion.

\smallskip
\noindent\emph{Open seats.}
\autoref{prop:access} continues to hold when the school also has open seats,
because classify-and-rank fills the reserves before the open seats and its
reserve stage does not depend on them. It compares reserved-seat access,
however, and does not extend to admission. Let $r^1=r^2=1$ with one open
seat, and let the students be $j$, eligible for both categories, $i_1$,
eligible only for $t_1$, and $i_2$ and $i_3$, eligible only for $t_2$, with
priorities $j\succ^1 i_1$, $j\succ^2 i_2\succ^2 i_3$, and
$i_1\succ^o i_3\succ^o i_2$; the position of $j$ under $\succ^o$ is
immaterial. Under $t_1\triangleright t_2$, $j$ takes $t_1$, $i_2$ takes $t_2$,
and the open seat goes to $i_1$, so $i_3$ is rejected. Under
$t_2\triangleright t_1$, $j$ takes $t_2$, $i_1$ takes $t_1$, and the open seat
goes to $i_3$. Thus $i_3$, eligible only for $t_2$, is admitted when $t_2$ is
first but not when $t_1$ is first, although no student eligible only for
$t_2$ holds a reserved seat when $t_2$ is first, so the inclusion of
\autoref{prop:access} holds. The order changes who competes for the open
seat.

\section{Proofs for the Single-Priority Case}
\label{supp:single-priority}

This section proves the three parts of \autoref{cor:single-priority}.

\begin{proof}[Proof of \autoref{cor:single-priority}(i)]
Fix an applicant set $I\subseteq\mathcal I$. Since
$\mathcal B_s=\{\mathcal T\}$, the class-commitment stage applies the
greedy procedure under $\succ_s$: it selects the highest-priority remaining
student whenever doing so remains compatible with attaining $n_s(I)$
utilized reserves. This is the first stage of the meritorious horizontal
choice rule of \citet{sonmez/yenmez:22}. The open-seat stage then fills the
remaining capacity with the highest-priority students not yet selected,
which is its second stage. Hence the two rules select the same student set
from every applicant set $I$.
\end{proof}

\begin{proof}[Proof of \autoref{cor:single-priority}(ii)]
Both conditions require ordinary priority-respect. Fix an applicant set
$I\subseteq\mathcal I$ and a matching $(J,\sigma_s)$
for school $s$ that satisfies ordinary priority-respect from $I$. In the
single-priority environment with $\mathcal B_s=\{\mathcal T\}$, the sole
joint-priority group is $G=\mathcal T\cup\{o\}$ and $A_s^G=J$.
Open-seat assignments do not contribute to reserve utilization, so
$n_s^G(Y)=n_s(Y)$ for every student set $Y$. Conditional on the assumed
priority-respect, \autoref{def:joint-priority} therefore requires exactly
\[
n_s\bigl((J\setminus\{j\})\cup\{i\}\bigr)<n_s(J)
\quad\text{for every }i\in I\setminus J,\ j\in J
\text{ with }i\succ_s j.
\]
This is the no-justified-envy condition of \citet{sonmez/yenmez:22}.
The same argument applies to a school in a full matching by taking $I$ to
consist of its assignees and the students who prefer it to their assignments.
\end{proof}

\begin{proof}[Proof of \autoref{cor:single-priority}(iii)]
By part~(ii), no justified envy is equivalent to the condition displayed
there. Fix an applicant set $I\subseteq\mathcal I$ and write
$c\coloneqq \min\{q_s,|I|\}$ and $r\coloneqq n_s(I)$.

For existence, \autoref{prop:greedy} gives non-wastefulness and maximum
reserve utilization for $\CR_s$. Its selected set has size $c$ and supports
$r$ reserve assignments. In the single-priority environment, that set is
the greedy basis of the reserve transversal matroid, topped up to size $c$
in priority order. By \autoref{lem:greedy_topup}, replacing a selected
student with a higher-priority excluded student reduces its reserve rank.
This is precisely the no-justified-envy condition in part~(ii).

We next establish uniqueness. Let $J\neq J'$ be two student sets selected
from $I$, each satisfying non-wastefulness and no justified envy and supporting
maximum reserve utilization. Non-wastefulness gives $|J|=|J'|=c$, and the
support condition gives $n_s(J)=n_s(J')=n_s(I)=r$.
If $|I|\leq q_s$, non-wastefulness gives $J=J'=I$, a contradiction. Hence
$|I|>q_s$.

Let $\mathcal N$ be the transversal matroid on $I$ presented by
$r_s^t$ copies of $I^t$ for each $t\in\mathcal T$, so its rank function
is $n_s$. Let $\mathcal N^*$ be its dual. Since $\mathcal N$ has rank $r$, a set
$Y\subseteq I$ satisfies $n_s(Y)=r$ if and only if $I\setminus Y$ is
independent in $\mathcal N^*$. Consequently,
$Z\coloneqq I\setminus J$ and $Z'\coloneqq I\setminus J'$
are distinct independent sets of $\mathcal N^*$ with
$|Z|=|Z'|=|I|-c$.

Let $i$ be the $\succ_s$-highest student in
$J\mathbin{\triangle}J'$. Without loss of generality,
$i\in J\setminus J'=Z'\setminus Z$.
Since $Z'\setminus\{i\}$ and $Z$ are independent in $\mathcal N^*$ and
$|Z'\setminus\{i\}|<|Z|$,
the augmentation axiom yields a student
$j\in Z\setminus Z'=J'\setminus J$
such that $(Z'\setminus\{i\})\cup\{j\}$
is independent in $\mathcal N^*$. Taking complements,
$Y\coloneqq (J'\setminus\{j\})\cup\{i\}$
satisfies $|Y|=c$ and $n_s(Y)=r=n_s(J')$.
Moreover, $i\succ_s j$ by the choice of $i$, while
$i\notin J'$ and $j\in J'$. This contradicts no justified envy for $J'$.

Thus at most one restricted choice from each applicant set satisfies
non-wastefulness and no justified envy and supports maximum reserve
utilization. By the
existence argument and part~(i), this unique restricted choice rule is the
meritorious horizontal choice rule of \citet{sonmez/yenmez:22}.
\end{proof}

\section{More on Gale Dominance}
\label{supp:gale-path}

Path independence is equivalent to substitutability and
consistency \citep{AizMal:1981AutCont}. The next result shows that choosing a
Gale-dominant matching whenever one exists need not preserve this property.

\begin{proposition}\label{prop:PI}
There exist an environment and a school $s\in\mathcal S$ with the following
property. Let $C_s$ be any choice rule that, for every applicant set
$I\subseteq\mathcal I$, selects a matching from
$\RE_s(I)\cap\PR_s(I)$ and selects a Gale-dominant matching whenever one
exists. Then the restricted choice rule $\overline C_s$ induced by $C_s$
violates path independence.
\end{proposition}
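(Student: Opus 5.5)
We will exhibit a single school with a small applicant pool and show directly that path independence fails for every admissible rule $C_s$. Path independence implies consistency: if $\overline C_s(I')\subseteq I\subseteq I'$, then $\overline C_s(I)=\overline C_s(I')$. So it suffices to find applicant sets $I\subseteq I'$ with the following property. At $I'$, a Gale-dominant matching exists in $\RE_s(I')\cap\PR_s(I')$, and its student set $J$ lies in $I$. At $I$, either no element of $\RE_s(I)\cap\PR_s(I)$ has student set $J$, or a different student set is forced.

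Our construction reuses the two-category environment of \appref{app:gale_dominance}, with two seats per category, an overlap structure, and reserve-specific priorities. Removing a rejected student should enlarge the set of priority-respecting matchings, or shrink it to a set with a different Gale-dominant element. The mechanism is that a rejected student $x$ constrains $\PR_s(I')$: any matching that admits, through category $t$, a student whom $x$ outranks under $\succ^t$ is excluded. Deleting $x$ removes this constraint. A matching $M^*$ whose student set Gale-dominates $J$ under $\succ_s^o$ then becomes available at $I=I'\setminus\{x\}$. Its priority-respect at $I'$ had been blocked only by $x$.

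The steps are as follows.

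\begin{enumerate}[label=(\arabic*)]
\item Pick priorities so that, from $I'$, \autoref{lem:mru-demand} and a counting argument like Step~1 of \appref{app:gale_dominance} pin down the selected sets of $\RE_s(I')\cap\PR_s(I')$ as exactly two sets. One of them, $J$, Gale-dominates the other under $\succ_s^o$.
\item Verify that $J\subseteq I$ and $x\notin J$.
\item From $I$, list the selected sets again. A new set $J^*$ appears, using a routing previously blocked by $x$, and $J^*$ strictly Gale-dominates $J$. Then the rule must choose $J^*\neq J$ at $I$, which violates consistency.
\end{enumerate}

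To keep $x$ out of every maximum-utilization matching from $I'$, we make $x$ eligible only for a category whose seats are forced onto higher-priority students. We use the open-seat priority $\succ_s^o$ to rank the would-be entrant of $J^*$ above the student it displaces. Since $n_s=4$ and every admissible matching fills all four reserved seats with no open seat, the comparison reduces to the two non-common members, as in Step~2 of \appref{app:gale_dominance}.

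The main obstacle is step~(3), making sure $J^*$ really lies in $\RE_s(I)\cap\PR_s(I)$ and is Gale-dominant there. We also have to check that no other set from $I$ blocks dominance, and that $J$ itself remains admissible at $I$; this is harmless, since the rule must still choose $J^*$. If a direct four-seat instance is too tight, we will fall back on one seat per category with three or four students. There we would make $x$ a dual-eligible student whose presence forces a priority-respecting routing. We would check the few candidate matchings by hand.
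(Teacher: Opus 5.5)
Your route differs from the paper's and can be made to work, but as written it has a gap. You never exhibit the instance, and the one concrete rule you give for building it would make step~(3) impossible.

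Read literally, ``keep $x$ out of every maximum-utilization matching from $I'$'' cannot be arranged. If $x$ is eligible for a category with reserved seats, she is a nonloop of the reserve transversal matroid and so lies in some maximum reserve assignment. If she is eligible for none, she blocks nothing, because your admissible matchings use no open seats.

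Read as ``out of every matching in $\RE_s(I')\cap\PR_s(I')$,'' the rule defeats your construction in the no-open-seat regime you adopt. Any admissible matching from $I'$ then avoids $x$, so $n_s(I)=n_s(I')$ for $I=I'\setminus\{x\}$. Hence every matching admissible from $I$ is either admissible from $I'$ or blocked by $x$. In the latter case, run a displacement chain:
\begin{itemize}
\item $x$ displaces the lowest holder of a category where she has a claim, the displaced student does likewise, and so on.
\item Only the current chain student can ever hold a claim.
\item Each step replaces one category's lowest holder by a higher-ranked student, so the chain terminates.
\item Utilization stays at $n_s(I')$ throughout.
\end{itemize}
The chain therefore stops at a matching in $\RE_s(I')\cap\PR_s(I')$ whose student set is the original set plus $x$ minus the last displaced student. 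Since $x$ lies in no such matching, that last student is $x$. So deleting $x$ leaves the family of admissible student sets unchanged, and with it the Gale-dominant set. What you actually need is only $x\notin J$, with $x$ belonging to some other admissible matching from $I'$.

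With that correction, your one-seat fallback works with four students $a,b,c,x$. Set $q_s=2$, $r_s^{t_1}=r_s^{t_2}=1$, $\tau(a)=\{t_1,t_2\}$, $\tau(b)=\tau(x)=\{t_1\}$, $\tau(c)=\{t_2\}$. Take priorities $a\succ^1 x\succ^1 b$, $a\succ^2 c$, and $b\succ^o c\succ^o x$. By \autoref{lem:mru-demand}, every admissible matching fills both reserved seats.
\begin{itemize}
\item From $I'=\{a,b,c,x\}$, the admissible sets are $\{a,c\}$ (with $a$ in $t_1$) and $\{a,x\}$ (with $a$ in $t_2$). Here $b$ cannot hold $t_1$ because $x\succ^1 b$. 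So $\overline C_s(I')=\{a,c\}$.
\item From $I=\{a,b,c\}$, the admissible sets are $\{a,c\}$ and $\{a,b\}$, so $\overline C_s(I)=\{a,b\}$.
\end{itemize}
This violates consistency, and also substitutability.

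The paper instead uses the six-student school of \appref{app:gale_dominance}. Its full pool has no Gale-dominant matching, so the paper runs a three-case analysis over the possible choices $X$, $Y$, $Z$, pairing each with a five-student subpool where the choice is forced. Your repaired route needs no case split and uses fewer students. It also shows that Gale dominance alone breaks consistency, even at pools where it fully determines the choice.
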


The result is existential: in the single-priority environment,
classify-and-rank satisfies the premise and is path independent
(\autoref{path_independence}).

\begin{proof}
Consider a single school $s$ of the form constructed in the proof of
\autoref{prop_nonexistence_gale}, with $P=\{i_3,i_6\}$, $Q=\{i_4,i_5\}$,
reserve priorities $i_2\succ_s^1 i_1\succ_s^1 i_6\succ_s^1 i_3$ and
$i_2\succ_s^2 i_1\succ_s^2 i_4\succ_s^2 i_5$, open-seat priority
$i_1\succ_s^o\cdots\succ_s^o i_6$, and student universe
$\mathcal I=\{i_1,\ldots,i_6\}$. By Step~1 of that proof, for the full
applicant set $\mathcal I$, the matchings in
$\RE_s(\mathcal I)\cap\PR_s(\mathcal I)$ have selected student sets
\[
X=\{i_1,i_2,i_3,i_6\},
\qquad
Y=\{i_1,i_2,i_4,i_6\},
\qquad
Z=\{i_1,i_2,i_4,i_5\},
\]
none of which is Gale-dominant: $X$ and $Z$ both dominate $Y$, but $X$ has
the higher third-ranked member and $Z$ the higher fourth-ranked member.
Hence
$\overline C_s(\mathcal I)\in\{X,Y,Z\}$, and we consider the three
possibilities. We use twice the fact that
$\RE_s(\mathcal I\setminus\{i_3\})\cap\PR_s(\mathcal I\setminus\{i_3\})$
has selected student sets $Y$ and $Z$, of which $Z$ is Gale-dominant, since the
two sets agree except that $Z$ contains $i_5$ where $Y$ contains $i_6$ and
$i_5\succ_s^o i_6$. Consequently
$\overline C_s(\mathcal I\setminus\{i_3\})=Z$.

\begin{enumerate}
\item
Suppose $\overline C_s(\mathcal I)=X$. Then $i_6\in\overline C_s(\mathcal I)$
and $i_6\in\mathcal I\setminus\{i_3\}$, but
$\overline C_s(\mathcal I\setminus\{i_3\})=Z$ does not contain $i_6$.
Substitutability is violated.

\item
Suppose $\overline C_s(\mathcal I)=Y$. Since $i_3\notin Y$ and
$\overline C_s(\mathcal I\setminus\{i_3\})=Z\neq Y$, consistency is violated.

\item
Suppose $\overline C_s(\mathcal I)=Z$. Deleting $i_6$ instead, the selected
student sets of
$\RE_s(\mathcal I\setminus\{i_6\})\cap\PR_s(\mathcal I\setminus\{i_6\})$ are
$Z$ and $W\coloneqq \{i_1,i_2,i_3,i_4\}$, and $W$ is Gale-dominant because
$i_3\succ_s^o i_4$ and $i_4\succ_s^o i_5$. Hence
$\overline C_s(\mathcal I\setminus\{i_6\})=W$. Since $i_6\notin Z$ and
$W\neq Z$, consistency is violated.
\end{enumerate}
\end{proof}

The example also establishes failure of path independence for the
\emph{reverse-rejecting rule} of \citet{aziz2024efficient} when its baseline
ranking is fixed at $\succ_s^o$. In this environment all four seats can be
filled through reserves in each of the three pools used above. Reverse
rejection selects $Z$ from $\mathcal I$ and $W$ after deleting the rejected
student $i_6$, violating consistency. The lexicographic maximizer based on
$\succ_s^o$ selects $X$ from $\mathcal I$ and $Z$ after deleting $i_3$;
the retained student $i_6$ then loses admission, violating substitutability.

\section{Independence of the Characterizing Properties in \autoref{thm:characterization}}
\label{supp:independence}

For each property, we give an environment and a mechanism that satisfies
the other five properties at every preference profile but violates that
property at some profile. In the one-school examples, write
$A(R)\coloneqq \{i\in\mathcal I:s\mathrel{P_i}\varnothing\}$ for the students who report
$s$ acceptable. Unspecified priority comparisons may be completed
arbitrarily.

\emph{Individual rationality.}
There is one student $a$ and one school $s$ with capacity one and one
reserved seat $t$, for which $a$ is eligible. Assign $a$ to $s$ through
$t$ at every profile. Individual rationality fails when
$\varnothing\mathrel{P_a}s$. The school is always full and the reserved seat
always used, so non-wastefulness and maximum reserve utilization hold.
Priority-respect and class precedence are vacuous, and the school
assignment is constant, so the mechanism is strategy-proof.

\emph{Non-wastefulness.}
There are two students $h,\ell$, one school with capacity two, and one
reserved seat $t$. Both students are eligible and
$h\succ_s^t\ell$. Whenever $A(R)$ is nonempty, assign through $t$ the
highest-$\succ_s^t$ student in $A(R)$ and assign nobody else. If both
students find $s$ acceptable, $\ell$ is left unmatched despite a vacant
seat, so non-wastefulness fails. Individual rationality and
priority-respect hold, and the reserve is used whenever feasible, so
reserve utilization is maximal. The rule is strategy-proof. Class
precedence holds because the student assigned through the unique class is
the highest-priority student in the relevant demand set.

\emph{Priority-respect.}
There are three students $d,h,\ell$, one school with capacity two, and
one reserved seat $t$, for which only $d$ is eligible. Let
$h\succ_s^o\ell$. If $d\in A(R)$, admit $d$ through $t$ and fill the
remaining seat with $\ell$ if possible, using $h$ only if $\ell$ is
unavailable. If $d\notin A(R)$, admit all acceptable students among
$h,\ell$ openly. When all three students find $s$ acceptable, $\ell$ is
admitted openly while $h$ is unmatched, so priority-respect fails.
Individual rationality, non-wastefulness, and maximum reserve utilization
hold. The rule is strategy-proof: $d$ and $\ell$ are assigned exactly
when they report $s$ acceptable, while $h$ cannot obtain a seat by
changing her report when both $d$ and $\ell$ are acceptable. Class
precedence holds because $d$ is the only student eligible for the reserve
class and is assigned through it whenever she belongs to the relevant
demand set.

\emph{Maximum reserve utilization.}
There are three students $x,y,h$, one school with capacity two, and singleton
reserve classes $B_s^1=\{t_1\}$ and $B_s^2=\{t_2\}$, each with capacity one,
ordered $t_1\triangleright_s t_2$. Student $x$ is eligible for both reserves,
$y$ only for $t_1$, and $h$ for neither. Let
$x\succ_s^1y$ and $x\succ_s^o h\succ_s^o y$.
At every profile, admit the two highest students in the fixed order
$x\succ h\succ y$ among those who find $s$ acceptable. If $x$ is admitted,
assign her through $t_1$; otherwise assign $y$ through $t_1$ if she is
admitted. Assign any remaining admitted student openly. Thus $t_2$ is never
used.

When all three students find $s$ acceptable, the mechanism admits $x$ and
$h$, assigning $x$ through $t_1$ and $h$ openly, while $y$ is unassigned.
Yet the alternative assigning $y$ through $t_1$ and $x$ through $t_2$
admits $y$ and uses two reserved seats. Maximum reserve utilization therefore
fails. Individual rationality and non-wastefulness are immediate.
Priority-respect holds because $x\succ_s^1y$ and
$x\succ_s^o h\succ_s^o y$. The admission rule is responsive to a fixed
ranking and hence strategy-proof. Class precedence also holds. At the first class, condition~(iii) of
\autoref{def:class_precedence} requires every alternative to keep $x$, who
outranks $y$ under $\succ_s^1$, in that class, which has a single seat, so
$y$ cannot be admitted through it. At the second class, condition~(ii)
keeps $x$ in the first class, and neither $y$ nor $h$ is eligible for $t_2$.

\emph{Strategy-proofness.}
There are three students $a,b,c$ and three schools $x,y,z$, each with
capacity one and one reserved seat in a common category $t$. Every
student is eligible. Reserve and open priorities coincide, with
$x:a\succ b\succ c$, $y:a\succ b\succ c$, and
$z:b\succ a\succ c$. Use school-proposing deferred acceptance, assigning
every matched student through $t$.

The resulting matching is stable with respect to these priorities, and
therefore individually rational, non-wasteful, and priority-respecting.
A filled school uses its reserved seat; if a school is vacant, no student
strictly prefers it to her assignment. Hence maximum reserve utilization
holds. Class precedence follows from
stability: if $s$ is assigned to $i$, every other student in
$\mathcal D_s(\mu)$ has lower $s$-priority than $i$, and therefore cannot replace
$i$ through the unique reserve class without violating condition~(iii).

The mechanism is nevertheless manipulable. Suppose the true preferences
are $a:z\mathrel{P_a}x\mathrel{P_a}y\mathrel{P_a}\varnothing$,
$b:x\mathrel{P_b}z\mathrel{P_b}y\mathrel{P_b}\varnothing$, and
$c:x\mathrel{P_c}y\mathrel{P_c}z\mathrel{P_c}\varnothing$. School-proposing DA assigns
$(a,x),(b,z),(c,y)$. If $a$ instead reports
$zP'_ayP'_axP'_a\varnothing$, it assigns
$(a,z),(b,x),(c,y)$. Thus $a$ profitably manipulates.

\emph{Class precedence.}
There are three students $a,b,c$, one school with capacity two, and
singleton reserve classes $B_s^1=\{t_1\}$ and $B_s^2=\{t_2\}$, each with
capacity one. Student $a$ is eligible for both reserves, $b$ only for
$t_1$, and $c$ only for $t_2$, with
$a\succ_s^1b$ and $a\succ_s^2c$. Fix the specification
$t_1\triangleright_s t_2$, but consider the DA mechanism based on
classify-and-rank under the reverse order
$t_2\triangleright_s t_1$.

By \autoref{thm:existence}, this mechanism satisfies individual
rationality, non-wastefulness, priority-respect, maximum reserve
utilization, and strategy-proofness at every profile. Suppose all three
students find $s$ acceptable. The reverse-order mechanism assigns $a$
through $t_2$ and $b$ through $t_1$, leaving $c$ unmatched. Consider
instead the alternative assigning $a$ through $t_1$ and $c$ through
$t_2$. It uses the same number of reserved seats and changes the admitted
set. At the first class there is no earlier class to preserve, no student
assigned there under the original matching has higher $t_1$-priority than
$a$, and the alternative assigns $a$ through that class. Hence the
alternative witnesses a violation of class precedence.

\section{Proof of \autoref{prop:sibling-adjusted-outcome}}
\label{supp:sa-da-pi}\label{supp:sa-da-allocation}

We first establish properties of the SA-DA choice rules and then prove
the proposition. Fix a program and suppress its index. Reserves $\exc$ and $\dis$ form
singleton classes in a fixed order. Let $\succ$ be the open-seat priority
and $\Sigma$ the students with sibling priority. Assume that every member
of $\Sigma$ precedes every nonmember under $\succ$. All priority rankings
and sibling statuses are fixed across applicant sets.

For reserve capacities $r=(r^{\exc},r^{\dis})$, open capacity $o$, and menu $N$,
write $\mathcal R^r(N)$ for the students selected through reserves by
classify-and-rank, and $C^{r,o}(N)$ for its complete selected set. Put
$Q\coloneqq r^{\exc}+r^{\dis}+o$. By the construction of classify-and-rank,
\begin{equation}\label{eq:sa-pi-open}
 C^{r,o}(N)=\mathcal R^r(N)\cup
 \operatorname{top}_{Q-|\mathcal R^r(N)|}^{\succ}(N\setminus \mathcal R^r(N)).
\end{equation}
For fixed capacities, classify-and-rank is path independent and
substitutable, selecting $\min\{Q,|N|\}$ students. Here
$\operatorname{top}_{m}^{\succ}$ selects up to $m$ highest-priority students.

\begin{lemma}[One-seat comparisons]\label{lem:sa-pi-exchange}
For a fixed menu $N$:
\begin{enumerate}[label=(\roman*)]
\item Removing one unit of $\exc$- or $\dis$-capacity removes at most one student
from the reserve-selected set, and adds none. Consequently, reducing any
one of $r^{\exc},r^{\dis},o$ by one weakly shrinks $C^{r,o}(N)$.
\item Fix $\mathbf c=(a,d,o)$ and put
$C_k(N)\coloneqq C^{(a,d-k),o+k}(N)$, $0\le k\le d$.
Either $C_{k+1}(N)=C_k(N)$ or
\[
 C_{k+1}(N)=(C_k(N)\setminus\{h\})\cup\{y\},
\]
where $h$ was reserve-selected at $k$, and $y$ is the
$\succ$-highest student excluded at $k$, with $y\succ h$.
\end{enumerate}
\end{lemma}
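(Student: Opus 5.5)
The plan is to prove part~(i) at the level of the reserve stage, and then derive its consequence and part~(ii) from the open-stage formula \eqref{eq:sa-pi-open}. Capacity enters \eqref{eq:sa-pi-open} only through $\mathcal R^r(N)$ and the number $Q-|\mathcal R^r(N)|$ of open picks.

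For the first claim of (i), I expand each reserve into slots. By \autoref{lem:lexicographic_dominance}, with singleton classes in the fixed order, the reserve assignment is the unique lexicographic maximizer among maximum reserve assignments. Equivalently, it is the unique maximizer of the reserve part of the weight $W$ in the proof of \autoref{substitutability} (the $\omega^3,\omega^2,\omega$ terms). Let $X$ be the optimal slot matching before one slot $s_0$ of category $t$ is removed, and $X'$ the optimal one after. Consider the components of $X\triangle X'$.
\begin{enumerate}[label=(\alph*)]
\item A component avoiding $s_0$ is feasible in both problems. Switching along it keeps both matchings feasible, so uniqueness of each optimum forces that component to be trivial, as in \autoref{lem:nested-class-access}.
\item Hence at most one component is nontrivial: the alternating path $P$ starting at $s_0$ with an $X$-edge.
\item If $P$ ends at a student, that student is matched in $X$ but not in $X'$, so one student leaves the reserve-selected set. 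This is allowed.
\item If $P$ ends at a slot, the student set is unchanged.
\item The case to exclude is $P$ ending with an $X'$-edge at a student $y$ unmatched in $X$, which would add $y$. Then $P$ has equally many edges of each matching, and both $X$ and $X'$ have the same cardinality. Switching along $P$ minus its first edge gives a matching in the original problem with the same count as $X$. Comparing weights then gives a contradiction with the uniqueness of $X$ or of $X'$, because the terms in the two objectives differ only by the edge at $s_0$.
\end{enumerate}
This weight comparison in the last case is the step I expect to be the main obstacle. If the direct argument stalls, I would first try an alternative: treat slots as the agents and apply gross substitutability of the assignment valuation on the job side (Hatfield--Milgrom). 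Removing a job should then weakly raise demand for the other jobs, and not add students.

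Given the first claim of (i), reducing $r^t$ by one either leaves $\mathcal R$ unchanged or removes one student $h$.
\begin{itemize}
\item If $\mathcal R$ is unchanged, the open stage picks one fewer student from the same pool, so $C$ weakly shrinks.
\item If $h$ is removed, the open capacity $Q-|\mathcal R|$ is unchanged, but the pool grows by $h$. The new top picks are then contained in the old top picks together with $h$, so again $C$ weakly shrinks.
\item Reducing $o$ only removes one open pick.
\end{itemize}

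For (ii), $Q$ is constant in $k$. By (i), moving one $\dis$ unit to the open stage either leaves $\mathcal R$ unchanged, or drops one reserve-selected student $h$.
\begin{itemize}
\item If $\mathcal R$ is unchanged, the open capacity $Q-|\mathcal R|$ is also unchanged, so $C_{k+1}(N)=C_k(N)$.
\item If $h$ is dropped, the open stage fills one more seat from the old pool plus $h$. The new picks are the old top picks plus the better of $h$ and $y$, where $y$ is the $\succ$-highest student excluded at $k$.
\end{itemize}
If $h\succ y$, the chosen set is unchanged. Otherwise $C_{k+1}(N)=(C_k(N)\setminus\{h\})\cup\{y\}$ with $y\succ h$, which is the stated dichotomy.
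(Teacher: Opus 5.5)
Your route is the paper's: expand reserves into slots, show that every component of $X\triangle X'$ avoiding the removed slot $s_0$ is trivial, so that only the alternating path $P$ from $s_0$ survives, and then do the open-stage bookkeeping through \eqref{eq:sa-pi-open}. Your treatment of the consequence in (i) and of part (ii), including the comparison of $h$ with $y$, is the same as the paper's.

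The one thing to correct is case (e), which you single out as the main obstacle: it cannot occur. Since $X'$ avoids $s_0$, the slot $s_0$ is an endpoint of $P$. Traversing $P$ from $s_0$, the vertices alternate slot, student, slot, \ldots\ and the edges alternate $X$, $X'$, $X$, \ldots. Thus every $X$-edge of $P$ arrives at a student and every $X'$-edge arrives at a slot. A path from $s_0$ to a student therefore has odd length and ends with an $X$-edge, so its endpoint is matched in $X$ but not in $X'$ (your case (c)). A path with equally many edges of each matching, as posited in (e), ends at a slot (your case (d)). This parity observation is what licenses the paper's one-line conclusion that every student selected under the reduced capacity was already reserve-selected. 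No weight comparison and no gross-substitutes detour is needed. That is just as well, because the switch you sketch would not work: switching $X$ along $P$ minus its first edge keeps $(s_0,i_1)$ and adds the $X'$-edge at $i_1$, so $i_1$ would be matched twice.

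A smaller point concerns (a). The uniqueness delivered by \autoref{lem:lexicographic_dominance} holds for category--student pairs, not for slot labels. So a component avoiding $s_0$ can be a nontrivial slot swap within one category, say a $4$-cycle exchanging two students' slots of the same category. Such a component has the same pair set, hence the same weight, and uniqueness does not rule it out. You can fix this in either of two ways:
\begin{itemize}
\item choose the two slot representations to agree on common category--student pairs, as the paper does; or
\item conclude only that such components leave the reserve-selected set unchanged, which is all the lemma needs.
\end{itemize}
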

\begin{proof}
For the reserve-stage claim, represent reserve assignments as bipartite
matchings of students to labelled reserve slots, choosing a labelled slot
$v$ removed when capacity is reduced. Choose representations agreeing on
common student--category pairs, with the reduced-capacity assignment
avoiding $v$. Under singleton classes, each selected reserve assignment
is a lexicographically maximal maximum-cardinality matching, ordered by
category precedence and priority within category
(\autoref{lem:lexicographic_dominance}). In the symmetric difference of
the two assignments, an alternating component not containing $v$ would
either augment one of the matchings or permit a cardinality-preserving
lexicographic improvement of one of them. Thus the symmetric difference
is empty or consists of a single alternating path from $v$. Every student
selected under the reduced capacity was already reserve-selected, and at
most one reserve-selected student disappears.

Now use \eqref{eq:sa-pi-open}. Removing an open seat only reduces the
open top-up. Removing a reserve seat either leaves the reserve-selected
set unchanged, decreasing the top-up quota by one, or removes a single
reserve-selected student $h$, leaving the top-up quota unchanged. In both
cases the complete selected set weakly shrinks. Finally, replacing one
$\dis$ seat by one open seat leaves $Q$ constant. If $h$ drops from the
reserve stage, the open top-up increases by one and selects the higher
of $h$ and the best previously excluded student. Hence either the
selected set is unchanged or the asserted improving exchange occurs.
\end{proof}

Fix $\mathbf c=(a,d,o)$, $q\coloneqq a+d+o$, and define
\[
 \mathcal K(N)\coloneqq\{k:\Sigma\cap N\subseteq C_k(N)\},\qquad
 \Phi^{\mathbf c}(N)\coloneqq C_{k(N)}(N),
\]
where
\[
 k(N)\coloneqq\begin{cases}\min\mathcal K(N),&\mathcal K(N)\ne\varnothing,\\
 d,&\mathcal K(N)=\varnothing.\end{cases}
\]
Thus $k(N)$ is the release level of \bodysecref{sec:sibling-adjusted-da}, and
$\Phi^{\mathbf c}$ is the rule applied to the residual students when
$\mathbf c$ gives the $\exc$, $\dis$, and open capacities remaining after the
returning-student and PIE steps.
By \autoref{lem:sa-pi-exchange}(ii), $\mathcal K(N)$ is upward closed: if all siblings are chosen at $k$,
then an excluded entrant at the next release cannot be a sibling; since
$y\succ h$ and all siblings rank above non-siblings, $h$ cannot be a
sibling either.

\begin{lemma}[Residual substitutability]\label{lem:sa-pi-phi}
For every $\mathbf c$, $\Phi^{\mathbf c}$ is substitutable and path independent, and
$|\Phi^{\mathbf c}(N)|=\min\{q,|N|\}$.
\end{lemma}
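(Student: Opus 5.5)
The plan is to treat $\Phi^{\mathbf c}$ as a selection among the finitely many path-independent rules $C_0,\ldots,C_d$, indexed by the release level. I would first control how the release level $k(\cdot)$ moves with the menu, and then use the one-seat exchange structure of \autoref{lem:sa-pi-exchange}(ii) to transfer substitutability and consistency from the $C_k$ to $\Phi^{\mathbf c}$. Path independence will follow from substitutability and consistency by \citet{AizMal:1981AutCont}, exactly as in \autoref{path_independence}.

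\emph{Size.} Each $C_k$ is classify-and-rank with total capacity $(a)+(d-k)+(o+k)=q$. By \eqref{eq:sa-pi-open} and the fact that classify-and-rank selects $\min\{Q,|N|\}$ students, $|C_k(N)|=\min\{q,|N|\}$ for every $k$. Hence $|\Phi^{\mathbf c}(N)|=\min\{q,|N|\}$ regardless of $k(N)$, and in particular $\Phi^{\mathbf c}$ is size monotone.

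\emph{Monotonicity of the release level.} Let $N\subseteq N'$ and take $k\in\mathcal K(N')$, so every sibling in $N'$ is chosen by $C_k(N')$. Substitutability of $C_k$ gives $C_k(N')\cap N\subseteq C_k(N)$. Since $\Sigma\cap N\subseteq \Sigma\cap N'$, every sibling in $N$ is then chosen by $C_k(N)$, so $k\in\mathcal K(N)$. Thus $\mathcal K(N')\subseteq\mathcal K(N)$. If $\mathcal K(N')=\varnothing$, then $k(N')=d$. In either case $k(N)\le k(N')$, so enlarging the menu weakly raises the release level. By upward closure of $\mathcal K(N)$, every level $j\ge k(N)$ selects all siblings of $N$. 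Hence, along the chain $C_{k(N)}(N),C_{k(N)+1}(N),\ldots$, each exchange in \autoref{lem:sa-pi-exchange}(ii) replaces a non-sibling reserve holder $h$ by a non-sibling entrant $y$ with $y\succ h$.

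\emph{Substitutability.} It suffices to take $N'=N\cup\{x\}$. Write $k\coloneqq k(N)\le k'\coloneqq k(N')$. Let $z\in\Phi^{\mathbf c}(N')\cap N=C_{k'}(N')\cap N$; substitutability of $C_{k'}$ gives $z\in C_{k'}(N)$. If $k=k'$ we are done, so assume $k<k'$ and, toward a contradiction, $z\notin C_k(N)$. Then $z$ is a non-sibling that enters at some step $j\to j+1$ with $k\le j<k'$, as the $\succ$-highest excluded student of $C_j(N)$.

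Because $j<k'=\min\mathcal K(N')$, some sibling $\varsigma\in N'$ is excluded at level $j$ from $N'$. There are two possibilities. If $\varsigma\ne x$, then $\varsigma\in N$ is chosen by $C_j(N)$ but not by $C_j(N')$, so adding $x$ displaces a sibling at level $j$. Since all siblings rank above all non-siblings under $\succ$, and entrants at every release step are the $\succ$-highest excluded student, I would follow the exchange chain for $N'$ from $j$ up to $k'$. The aim is to show that at each step the entrant is a sibling whenever one is excluded. Then the excluded siblings of $N'$ are readmitted before the non-sibling $z$ could enter. At level $k'$ the number of non-sibling open entrants from $N'$ would then be too small to include $z$, while $z$ is still ranked below an earlier entrant. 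This contradicts $z\in C_{k'}(N')$. If $\varsigma=x$, the same counting applies with $x$ in place of $\varsigma$.

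\emph{Main obstacle.} This comparison of the two exchange chains, for $N$ and for $N'$, is the step I expect to be hardest. I would try first to prove a coupling claim: for every level $j$, $C_j(N')$ and $C_j(N)$ differ by at most one student plus $x$. I would prove it by induction on $j$, using the single-alternating-path structure from the proof of \autoref{lem:sa-pi-exchange}(i). If the coupling fails, the fallback is to redo the assignment-valuation argument of \autoref{substitutability}, adding a top-weight term rewarding coverage of siblings, so that maximizing weight first minimizes released $\dis$ seats subject to covering siblings.

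\emph{Consistency.} Let $x\notin\Phi^{\mathbf c}(N)$. The monotonicity step gives $k(N\setminus\{x\})\le k(N)$. Conversely, for any $j<k(N)$, some sibling is excluded by $C_j(N)$. Since $x$ is not chosen at level $k(N)$, consistency of each $C_j$ together with the coupling claim should show that the same sibling is excluded from $N\setminus\{x\}$. Hence the two release levels coincide, and consistency of $C_{k(N)}$ gives $\Phi^{\mathbf c}(N\setminus\{x\})=\Phi^{\mathbf c}(N)$. Substitutability and consistency then yield path independence by \citet{AizMal:1981AutCont}.
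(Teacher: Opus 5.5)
Your size step and your release-level monotonicity ($\mathcal K(N')\subseteq\mathcal K(N)$ for $N\subseteq N'$) are correct. However, the substitutability step is left open, and the consistency step rests on a false claim.

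\textbf{Substitutability.} The step you call the main obstacle needs no coupling claim, case split on $\varsigma$, or counting. Let $N\subseteq N'$ with $k=k(N)<k'=k(N')$. Then $\mathcal K(N)\neq\varnothing$, so $\Sigma\cap N\subseteq C_k(N)$.

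For each $j$ with $k\le j<k'$, some sibling of $N'$ is excluded from $C_j(N')$. By \autoref{lem:sa-pi-exchange}(ii), the only possible entrant at step $j\to j+1$ of the $N'$-chain is the $\succ$-highest excluded student. Since siblings precede non-siblings, that entrant is a sibling. Hence $C_{k'}(N')\setminus C_k(N')\subseteq\Sigma$. Every non-sibling in $\Phi^{\mathbf c}(N')\cap N$ therefore lies in $C_k(N')\cap N\subseteq C_k(N)=\Phi^{\mathbf c}(N)$, by substitutability of $C_k$.

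In your contradiction framing, the missing line is $z\notin C_j(N')$. It follows from $z\notin C_j(N)$ and substitutability of $C_j$; since only siblings enter the $N'$-chain below $k'$, you then get $z\notin C_{k'}(N')$.

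Completed this way, your route is genuinely different from the paper's and shorter. The paper deletes one student $x$ from a menu $M$ and shows three things:
\begin{itemize}
\item between the two thresholds, each $C_j(M)$ excludes exactly one sibling $e_j$;
\item $C_j(M\setminus\{x\})$ replaces $x$ by $e_j$, or equals $C_j(M)$ when $e_j=x$;
\item the chain for $M\setminus\{x\}$ is constant on those levels.
\end{itemize}
From these it gets $\Phi^{\mathbf c}(M\setminus\{x\})=C_{k(M)-1}(M\setminus\{x\})\supseteq C_{k(M)}(M)\setminus\{x\}$. Your argument avoids this fine structure and handles arbitrary $N\subseteq N'$ at once.

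\textbf{Consistency.} Your argument asserts that deleting a rejected student leaves the release level unchanged. This is false. Take $\mathbf c=(0,1,0)$ and $N=\{x,e\}$, where $x$ is a non-sibling eligible for $\dis$ and $e$ is a sibling who is not.
\begin{itemize}
\item $C_0(N)=\{x\}$: $x$ takes the $\dis$ seat and no capacity remains, so $e$ is excluded.
\item $C_1(N)=\{e\}$, so $k(N)=1$ and $x\notin\Phi^{\mathbf c}(N)=\{e\}$.
\item But $C_0(\{e\})=\{e\}$, so $k(\{e\})=0$.
\end{itemize}
The sibling excluded at level $0$ is readmitted once $x$ is removed, so ``the same sibling is excluded'' fails whatever coupling you prove.

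Consistency nonetheless holds, and you should obtain it as in the proof of \autoref{path_independence}, which you already cite. If $x\notin\Phi^{\mathbf c}(N)$ and $|N|>q$, substitutability gives $\Phi^{\mathbf c}(N)\subseteq\Phi^{\mathbf c}(N\setminus\{x\})$, and both sets have $q$ elements by your size step. If $|N|\le q$, then $\Phi^{\mathbf c}(N)=N$ and there is nothing to check. This is how the paper concludes before invoking path independence.
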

\begin{proof}
The cardinality follows from fixed-capacity classify-and-rank. It is
enough to check substitutability under deletion of one student $x$.
Let $N'=N\setminus\{x\}$; the claim is immediate if $|N|\le q$.
Write $k\coloneqq k(N)$, $k'\coloneqq k(N')$. By substitutability of each $C_j$,
$k'\le k$: if $\mathcal K(N)$ is nonempty, any release accommodating
siblings in $N$ also accommodates those in $N'$; if it is empty,
$k=d$. When $k'=k$, fixed-level substitutability gives the claim.

Suppose $k'<k$. For each $j=k',\ldots,k-1$, at least one sibling is
excluded from $C_j(N)$, whereas $C_j(N')$ accommodates every sibling
of $N'$. All these sets have size $q$. Substitutability and consistency
of $C_j$ imply that exactly one sibling $e_j$ is excluded from $C_j(N)$,
and
\begin{equation}\label{eq:sa-pi-deletion}
 C_j(N')=\begin{cases}
 (C_j(N)\setminus\{x\})\cup\{e_j\},&x\in C_j(N),\\
 C_j(N),&x\notin C_j(N),\quad(e_j=x).
 \end{cases}
\end{equation}
By sibling precedence $e_j$ is the highest-ranked excluded student.
Comparing consecutive levels using \autoref{lem:sa-pi-exchange}(ii)
shows that $C_{j+1}(N')=C_j(N')$ for $j<k-1$:
if $C_{j+1}(N)=C_j(N)$ this follows immediately from
\eqref{eq:sa-pi-deletion}; otherwise
$C_{j+1}(N)=(C_j(N)\setminus\{h\})\cup\{e_j\}$, and the
unique excluded sibling at $j+1$ must be $h$. Inserting this into
\eqref{eq:sa-pi-deletion} at both levels gives the same set, whether
$x=h$, $x\notin C_j(N)$, or $x\in C_j(N)\setminus\{h\}$.
Therefore $\Phi^{\mathbf c}(N')=C_{k-1}(N')$.

By the one-seat comparison,
$C_k(N)\subseteq C_{k-1}(N)\cup\{e_{k-1}\}$; by
\eqref{eq:sa-pi-deletion},
\[
 C_k(N)\setminus\{x\}\subseteq C_{k-1}(N')=\Phi^{\mathbf c}(N').
\]
This proves substitutability. Because the rule selects
$\min\{q,|N|\}$ students, substitutability implies consistency:
if $\Phi^{\mathbf c}(N)\subseteq M\subseteq N$, then substitutability and equal
choice cardinalities yield $\Phi^{\mathbf c}(M)=\Phi^{\mathbf c}(N)$. These properties imply path independence.
\end{proof}

\begin{lemma}[Residual capacity monotonicity]\label{lem:sa-pi-capacity}
If $\mathbf c'\le \mathbf c$ coordinatewise, then $\Phi^{\mathbf c'}(N)\subseteq\Phi^{\mathbf c}(N)$.
\end{lemma}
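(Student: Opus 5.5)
It suffices to treat a reduction of a single coordinate by one; the general case follows by chaining such steps. Write $C^{\mathbf c}_j(N)\coloneqq C^{(a,d-j),o+j}(N)$, so that $\Phi^{\mathbf c}(N)=C^{\mathbf c}_{k^{\mathbf c}(N)}(N)$, and fix $N$ with $|N|\ge q$; otherwise both rules select all of $N$.

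The first step is a levelwise inclusion between the release families. If the reduced coordinate is $a$ or $o$, \autoref{lem:sa-pi-exchange}(i) gives $C^{\mathbf c'}_j(N)\subseteq C^{\mathbf c}_j(N)$ for every $j\le d$. If the reduced coordinate is $d$, I will use the identity $C^{\mathbf c'}_j=C^{(a,d-1-j),o+j}$. This is $C^{\mathbf c}_{j+1}$ with one open seat removed, and also $C^{\mathbf c}_j$ with one $\dis$ seat removed, so \autoref{lem:sa-pi-exchange}(i) gives both $C^{\mathbf c'}_j\subseteq C^{\mathbf c}_j$ and $C^{\mathbf c'}_j\subseteq C^{\mathbf c}_{j+1}$.

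From the levelwise inclusion, any level accommodating all siblings under $\mathbf c'$ also does so under $\mathbf c$. Hence $\mathcal K^{\mathbf c'}(N)\subseteq\mathcal K^{\mathbf c}(N)$, and so $k\coloneqq k^{\mathbf c}(N)\le k'\coloneqq k^{\mathbf c'}(N)$ (if $\mathcal K^{\mathbf c'}$ is empty, $k'$ is maximal). If $k'=k$, the levelwise inclusion finishes the proof. In the $d$-reduction case with $k'=k-1$ (possible because the index range shrinks), the inclusion $C^{\mathbf c'}_{k-1}\subseteq C^{\mathbf c}_k$ finishes it.

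The main case is $k'>k$. Here I mimic the proof of \autoref{lem:sa-pi-phi}, with the lost seat playing the role of the deleted student. For each $j$ with $k\le j<k'$, $C^{\mathbf c}_j$ contains all siblings of $N$ while $C^{\mathbf c'}_j$ does not. Since $C^{\mathbf c'}_j\subseteq C^{\mathbf c}_j$ and the sizes are $q-1$ and $q$, we get $C^{\mathbf c'}_j=C^{\mathbf c}_j\setminus\{e_j\}$ for a single sibling $e_j$. This sibling is the $\succ$-highest excluded student under $\mathbf c'$, because siblings precede nonsiblings.

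I then compare consecutive levels with \autoref{lem:sa-pi-exchange}(ii) in both families. Under $\mathbf c$, every exchange for $j\ge k$ brings in a nonsibling and removes a lower-ranked nonsibling, because all siblings are already in. Under $\mathbf c'$, an exchange must bring in $e_j$. I aim to show that the only way the $\mathbf c'$-family regains $e_j$ is by dropping a student that the $\mathbf c$-family has also dropped between $k$ and $j+1$, or by mirroring the same exchange. This should yield by induction that $C^{\mathbf c'}_{j+1}\subseteq C^{\mathbf c}_k\cup\{e_{k},\dots\}$, with the added elements all being siblings already in $C^{\mathbf c}_k$. At $j+1=k'$ this gives $\Phi^{\mathbf c'}(N)\subseteq C^{\mathbf c}_k(N)=\Phi^{\mathbf c}(N)$.

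I expect this induction to be the main obstacle. The two families may exchange different nonsiblings at the same level, and the extra seat under $\mathbf c$ could let it keep a student that $\mathbf c'$ later admits. The first thing I would try is to prove the sharper invariant $C^{\mathbf c'}_j\setminus\Sigma\subseteq C^{\mathbf c}_k\setminus\Sigma$ for all $k\le j\le k'$. The argument would use that the $\mathbf c$-exchanges above $k$ only add higher-ranked nonsiblings, and that the $\mathbf c'$-exchanges only add siblings. Siblings pose no difficulty, since $\Sigma\cap N\subseteq C^{\mathbf c}_k$.
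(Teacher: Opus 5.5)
Your setup matches the paper's proof step for step. It has the reduction to one coordinate and the levelwise inclusions from \autoref{lem:sa-pi-exchange}(i), with $C^{\mathbf c'}_j\subseteq C^{\mathbf c}_{j+1}$ handling the exceptional $\dis$-reduction case $k'=k-1$. It also has the threshold comparison. Finally, for $k\le j<k'$ it has the identity $C^{\mathbf c}_j=C^{\mathbf c'}_j\cup\{e_j\}$, with $e_j$ a sibling and the $\succ$-highest student outside $C^{\mathbf c'}_j$.

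The gap is the main case $k'>k$, where the content of the lemma lies. You only say what an induction ``should yield'' and flag it as the main obstacle, without proving it. The obstacle you anticipate is also a red herring. You compare against the fixed set $C^{\mathbf c}_k(N)=\Phi^{\mathbf c}(N)$, so you never need to track the $\mathbf c$-family beyond level $k$. Your remarks about $\mathbf c$-exchanges bringing in higher-ranked nonsiblings play no role.

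The missing step is short and uses only your own observation that $\mathbf c'$-exchanges add siblings. Fix $k\le j<k'$. Applying \autoref{lem:sa-pi-exchange}(ii) to the $\mathbf c'$-family gives $C^{\mathbf c'}_{j+1}\subseteq C^{\mathbf c'}_j\cup\{e_j\}$ with $e_j\in\Sigma$. Hence $C^{\mathbf c'}_{j+1}\setminus\Sigma\subseteq C^{\mathbf c'}_j\setminus\Sigma$. Iterating gives $C^{\mathbf c'}_{k'}\setminus\Sigma\subseteq C^{\mathbf c'}_k\setminus\Sigma\subseteq C^{\mathbf c}_k$. Also $C^{\mathbf c'}_{k'}\cap\Sigma\subseteq\Sigma\cap N\subseteq C^{\mathbf c}_k$, because $k\in\mathcal K^{\mathbf c}(N)$. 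This is exactly your proposed invariant, and it closes the proof.

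The paper closes the case a little differently: it shows that the $\mathbf c$-family does not move between levels $k$ and $k'-1$. Take $k\le j$ with $j+1<k'$. If $C^{\mathbf c'}_{j+1}=C^{\mathbf c'}_j$, then $e_{j+1}=e_j$. If instead the exchange removes $h$ and adds $e_j$, then $h$ must be the unique sibling excluded at level $j+1$, so $h=e_{j+1}$. Either way $C^{\mathbf c}_{j+1}=C^{\mathbf c'}_{j+1}\cup\{e_{j+1}\}=C^{\mathbf c}_j$. One more exchange step then gives $C^{\mathbf c'}_{k'}\subseteq C^{\mathbf c'}_{k'-1}\cup\{e_{k'-1}\}=C^{\mathbf c}_{k'-1}=C^{\mathbf c}_k$. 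In particular, the scenario you feared, with the two families exchanging different nonsiblings at the same level, cannot arise in that range.
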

\begin{proof}
It suffices to reduce one capacity coordinate by one. Write $U_j\coloneqq C^{\mathbf c'}_j(N)$ and $V_j\coloneqq C^{\mathbf c}_j(N)$ at common release levels. By
\autoref{lem:sa-pi-exchange}(i), $U_j\subseteq V_j$. If $|N|<q$,
the conclusion is immediate. Otherwise $|V_j|=q$,
$|U_j|=q-1$, and $V_j=U_j\cup\{g_j\}$. Let $k,k'$ be the two
release thresholds. If the smaller-capacity rule accommodates siblings
at a level, so does the larger one. Thus either $k\le k'$, or the only
exception occurs when the $\dis$ capacity was reduced and $k=d$, $k'=d-1$.
In that exception, $U_{d-1}\subseteq V_d$ directly, as these are
fixed-rule choices with the same zero $\dis$ capacity and one fewer open seat.

When $k=k'$, use $U_k\subseteq V_k$. If $k<k'$, then for each
$j=k,\ldots,k'-1$, $V_j$ accommodates every sibling but $U_j$ does not.
Thus $g_j$ is its unique excluded sibling, which is the highest-priority
student outside $U_j$. By \autoref{lem:sa-pi-exchange}(ii),
$U_{j+1}$ either equals $U_j$ or exchanges some $h$ for $g_j$.
For $j+1<k'$, in the latter event $g_{j+1}=h$, and in the former
$g_{j+1}=g_j$. In both events $V_{j+1}=U_{j+1}\cup\{g_{j+1}\}=V_j$.
Therefore $V_{k'-1}=V_k$, and one more application of the exchange
lemma gives
$U_{k'}\subseteq U_{k'-1}\cup\{g_{k'-1}\}=V_k$.
Iterate single-coordinate reductions to conclude.
\end{proof}

\paragraph{PIE and returning students.}
Let $I^{\mathrm{ret}}$ be the returning students. Fix a menu-independent total
order on $I^{\mathrm{ret}}$ and, for each returning student, a fixed list of
\textbf{physical} seat types, that is, the $\exc$, $\dis$, open, and PIE positions
counted in the program's reported capacity, as opposed to fallback
positions. Given $E\subseteq I^{\mathrm{ret}}$, process the members of $E$ in this
order, assigning each to the first type on her list with an unoccupied
position, or to a personal, nontransferable fallback position if none is
available. Let $\kappa(E)=(a,d,o,p)$ be the remaining capacities of the
four physical seat types, and write $|\kappa(E)|\coloneqq a+d+o+p$.

\begin{lemma}[Returning-student routing capacities]\label{lem:incumbent-routing}
If $E\subseteq F$, then $\kappa(F)\le\kappa(E)$ coordinatewise. If
$F=E\cup\{i\}$, then
\[
 0\le |\kappa(E)|-|\kappa(F)|\le 1.
\]
\end{lemma}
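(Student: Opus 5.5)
The plan is to compare the two sequential routing runs step by step, by induction on the number of processed returning students. We process $F$ in the fixed order and delete the extra students in $F\setminus E$. The state after each step is the vector of remaining physical capacities. Deleting students from the sequence should only leave more room for the others. For the coordinatewise claim, it suffices to handle $F=E\cup\{i\}$ and then iterate along a chain $E=E_0\subseteq E_1\subseteq\cdots\subseteq E_m=F$. So we fix $F=E\cup\{i\}$ throughout.

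\emph{Step 1 (agreement before $i$).} Students of $E$ that precede $i$ in the fixed order face identical remaining capacities in both runs. Hence they take identical seat types, and the remaining-capacity vectors coincide just before $i$.

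\emph{Step 2 (the effect of $i$).} In the $F$-run, $i$ takes either a fallback position, in which case all subsequent runs coincide and $\kappa(F)=\kappa(E)$, or one physical position of some type $t$. After $i$ moves, the $F$-state equals the $E$-state minus one unit in coordinate $t$.

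\emph{Step 3 (propagation invariant).} We claim that after each subsequent student, the $F$-state either equals the $E$-state or equals it minus a unit vector in a single coordinate. Suppose the current difference is a unit in coordinate $t$, and consider the next student $j\in E$ with her fixed list.
\begin{itemize}
\item[(a)] If the first type on her list with free capacity is the same in both runs, she takes it in both, and the difference persists.
\item[(b)] Otherwise, the only coordinate where availability differs is $t$: coordinate $t$ is positive in the $E$-state and zero in the $F$-state. Then in the $E$-run she takes $t$. In the $F$-run she takes the next available type $t'$ on her list, whose availability is the same in both runs, or a fallback. In the first case the difference moves from coordinate $t$ to coordinate $t'$. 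In the fallback case, the $E$-state loses one unit of $t$ while the $F$-state is unchanged, so the two states become equal.
\end{itemize}
Once the states are equal, they remain equal. By induction the invariant holds at the end, so $\kappa(F)=\kappa(E)-e_{t^*}$ for some coordinate $t^*$, or $\kappa(F)=\kappa(E)$. This gives both $\kappa(F)\le\kappa(E)$ coordinatewise and $0\le|\kappa(E)|-|\kappa(F)|\le 1$.

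The main obstacle is case (b), where we must check that only coordinate $t$ can change the choice. A type that is available in the $F$-run is also available in the $E$-run, since the $F$-state is coordinatewise below the $E$-state. The $E$-run's choice is the first available type, so it is weakly earlier on $j$'s list than the $F$-run's choice. If the two choices differ, the $E$-choice must be unavailable in the $F$-run, which forces it to be $t$ with exactly one unit free in the $E$-state. One point needs care: fallback positions are personal and nontransferable and lie outside the reported capacity. They therefore never enter $\kappa$, and this is what makes the fallback branch collapse the difference rather than create a new one.
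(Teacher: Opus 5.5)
Your proposal is correct and follows essentially the same argument as the paper. Both reduce to inserting a single returning student, note that the two runs agree before that student, and propagate by induction the invariant that the enlarged run's residual vector either equals the original one or falls short by one unit in a single coordinate, a deficit that persists, shifts to a later type, or vanishes when the student uses a fallback position. Your justification of why a differing choice can arise only at the deficit coordinate spells out a step the paper leaves implicit.
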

\begin{proof}
It suffices to insert one returning student $i$ into the fixed processing order.
Before $i$ is processed, both runs have identical remaining capacities.
In the enlarged run, $i$ either takes one available physical position or
uses fallback. Hence, immediately after $i$, the enlarged run has the same
remaining capacities as the original run or has one fewer position of one
type. We claim that this relation persists as the remaining returning students are
processed in their common order.

Indeed, suppose just before a common returning student $j$ the remaining capacity
vectors are equal, or the enlarged run has exactly one fewer position of
some type $t$. If the vectors are equal, $j$ makes the same choice in both
runs. Otherwise, $j$ either selects the same available type in both runs
(in which case the one-position difference persists), or chooses $t$ in the
original run when no $t$-position is available in the enlarged run. In the
latter case, $j$ takes a later available type in the enlarged run, moving
the one-position deficit to that type, or uses fallback, eliminating the
deficit. In each case, all coordinates of the enlarged residual vector
remain weakly smaller and its total differs by zero or one. Induction over
all subsequent incumbents proves both claims.
\end{proof}

The subsequent PIE step selects the top $p$ non-returning students
according to the fixed order of the PIE procedure, in which PIE-eligible
students precede all others, and releases unused PIE capacity to open
positions. For menu $M$ of non-returning students, put
\[
 P_p(M)\coloneqq\operatorname{top}^{\mathrm{PIE}}_p(M),
 \qquad
 G_\kappa(M)\coloneqq P_p(M)\cup
 \Phi^{(a,d,o+p-|P_p(M)|)}(M\setminus P_p(M)).
\]
\begin{lemma}[PIE composition]\label{lem:sa-pi-G}
$G_\kappa$ is path independent, chooses $\min\{|\kappa|,|M|\}$
students, and is monotone in capacities:
$\kappa'\le\kappa$ implies $G_{\kappa'}(M)\subseteq G_\kappa(M)$.
\end{lemma}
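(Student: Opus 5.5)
The plan is to treat $G_\kappa$ as the composition of two path-independent pieces: the responsive PIE step $P_p$, which takes the top $p$ students in a fixed order, and the residual rule $\Phi^{\mathbf c}$ of \autoref{lem:sa-pi-phi}. The only link between them is the open capacity $o+p-|P_p(M)|$ passed to $\Phi$. Everything will be reduced to deleting one student or lowering one capacity coordinate by one.

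\emph{Cardinality.} Since $P_p$ is a top-$p$ rule, $|P_p(M)|=\min\{p,|M|\}$.
\begin{itemize}
\item If $|M|\le p$, then $P_p(M)=M$, the residual menu is empty, and $G_\kappa$ chooses $|M|$ students.
\item If $|M|>p$, then $|P_p(M)|=p$. The residual capacity is $a+d+o$, and by \autoref{lem:sa-pi-phi} $\Phi$ chooses $\min\{a+d+o,|M|-p\}$ students.
\end{itemize}
In both cases the total is $\min\{|\kappa|,|M|\}$.

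\emph{Substitutability.} Let $M'=M\setminus\{x\}$. It suffices to show $G_\kappa(M)\setminus\{x\}\subseteq G_\kappa(M')$. There are three cases.
\begin{itemize}
\item If $|M|\le p$, every student in $M'$ is chosen, so the claim is trivial.
\item If $x\notin P_p(M)$, the PIE selection and the released open capacity are unchanged. The residual menu loses only $x$, so substitutability of $\Phi^{\mathbf c}$ from \autoref{lem:sa-pi-phi} gives the claim.
\item If $x\in P_p(M)$ and $|M|>p$, then $P_p(M')=(P_p(M)\setminus\{x\})\cup\{y\}$, where $y$ is the next student in PIE order. The residual capacity is still $(a,d,o)$, and the residual menu loses exactly $y$. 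Members of $P_p(M)$ other than $x$ stay selected. Members of $\Phi(M\setminus P_p(M))$ other than $y$ stay selected by substitutability of $\Phi$. The student $y$ itself is now selected through PIE.
\end{itemize}
Substitutability together with the exact cardinality formula gives consistency, as in the proof of \autoref{size_monotonicity}. Consistency and substitutability then give path independence \citep{AizMal:1981AutCont}.

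\emph{Capacity monotonicity.} I would lower one coordinate of $\kappa$ by one and iterate.
\begin{itemize}
\item Lowering $a$, $d$, or $o$ leaves $P_p(M)$ unchanged and lowers one residual coordinate, so \autoref{lem:sa-pi-capacity} applies directly.
\item Lowering $p$ when $|P_p(M)|<p$ leaves the PIE selection unchanged. It lowers the released open capacity by one, so again \autoref{lem:sa-pi-capacity} applies.
\item Lowering $p$ when $|P_p(M)|=p$ is the main case to check. Here $P_{p-1}(M)=P_p(M)\setminus\{z\}$, where $z$ is the lowest PIE-ranked selected student. The residual capacity stays $(a,d,o)$, since $o+(p-1)-(p-1)=o$, while the residual menu $N\coloneqq M\setminus P_p(M)$ gains $z$.
\end{itemize}
In this last case, substitutability of $\Phi$ gives $\Phi(N\cup\{z\})\setminus\{z\}\subseteq\Phi(N)$. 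Hence $G_{\kappa'}(M)\subseteq (P_p(M)\setminus\{z\})\cup\Phi(N)\cup\{z\}=G_\kappa(M)$. I expect this $p$-reduction, with its interaction between PIE release and the residual menu, to be the only step needing care. The remaining steps are bookkeeping on top of \autoref{lem:sa-pi-phi} and \autoref{lem:sa-pi-capacity}.
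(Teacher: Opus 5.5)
Your proof is correct. The cardinality count and the capacity-monotonicity argument are essentially the paper's. The paper does monotonicity in one step: it uses $P_{p'}(M)\subseteq P_p(M)$, substitutability of $\Phi$ on the smaller residual menu $M\setminus P_p(M)$, and the fact that the number of released PIE slots, $\max\{0,p-|M|\}$, is monotone in $p$. Your unit-decrement case split, including the $z$-swap when $|P_p(M)|=p$, is the same argument unrolled.

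Where you differ is path independence. The paper verifies the identity $G_\kappa(M\cup M')=G_\kappa(G_\kappa(M)\cup M')$ directly. It shows that both sides have PIE set $P^*=P_p(M\cup M')$ and residual capacity $\mathbf c^*\le\mathbf c(M)$. It then needs \autoref{lem:sa-pi-capacity} to get $\Phi^{\mathbf c^*}(M\setminus P^*)\subseteq G_\kappa(M)\setminus P^*$, before invoking residual consistency and path independence. You instead check substitutability one deletion at a time, derive consistency from the exact cardinality formula, and apply Aizerman--Malishevski.

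Your route is lighter. A deletion from a menu with $|M|>p$ keeps $|P_p|=p$, so the released open capacity never changes. Path independence therefore needs only residual substitutability from \autoref{lem:sa-pi-phi}, and the interaction between PIE release and residual capacities is confined to the monotonicity claim. It also follows the deletion template the paper uses in \autoref{size_monotonicity} and \autoref{lem:sa-pi-phi}.

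The paper's route makes explicit that both sides of the identity share the same PIE set and residual capacities. It also shows why capacity monotonicity matters: enlarging the menu can fill PIE and withdraw released open seats. The same pattern of capacity monotonicity, then consistency, then path independence is reused right afterward to compose $G_\kappa$ with the returning-student step in the proof of \autoref{prop:sibling-adjusted-outcome}.
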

\begin{proof}
The size identity is immediate. For capacity monotonicity, note that
$P_{p'}(M)\subseteq P_p(M)$ when $p'\le p$. Any student chosen by the
smaller-capacity residual rule is either in $P_p(M)$ or is chosen by
$\Phi$ from the smaller residual menu $M\setminus P_p(M)$, by
substitutability of $\Phi$. The corresponding residual capacities are
coordinatewise smaller because the number of released PIE slots is
$\max\{0,p-|M|\}$. Apply
\autoref{lem:sa-pi-capacity}.

For path independence, set $U\coloneqq M\cup M'$, $P^*\coloneqq P_p(U)$, and $\mathbf c^*\coloneqq(a,d,o+p-|P^*|)$. The top-$p$ property gives
$P^*\subseteq G_\kappa(M)\cup M'$ and
$P_p(G_\kappa(M)\cup M')=P^*$, so both choices have the same PIE set.
Furthermore $\mathbf c^*\le \mathbf c(M)\coloneqq (a,d,o+p-|P_p(M)|)$. We claim
\[
 \Phi^{\mathbf c^*}(M\setminus P^*)\subseteq G_\kappa(M)\setminus P^*.
\]
Students in $P_p(M)\setminus P^*$ are already in $G_\kappa(M)$;
for all other students the claim follows by residual substitutability
and capacity monotonicity, applied to $M\setminus P_p(M)$.
Residual consistency now permits replacing $M\setminus P^*$ by
$G_\kappa(M)\setminus P^*$, and residual path independence permits
adjoining $M'\setminus P^*$. The resulting equality is
$G_\kappa(M\cup M')=G_\kappa(G_\kappa(M)\cup M')$.
\end{proof}

\paragraph{Proof of the proposition.}
\begin{proof}[Proof of \autoref{prop:sibling-adjusted-outcome}]
\emph{Strategy-proofness.} For applicant sets $J,J'$, put $E\coloneqq J\cap I^{\mathrm{ret}}$, $F\coloneqq(J\cup J')\cap I^{\mathrm{ret}}$, and $N\coloneqq J\setminus I^{\mathrm{ret}}$. Both sides of the path-independence identity
contain exactly $F$. Because $\kappa(F)\le\kappa(E)$, capacity
monotonicity yields
\[
 G_{\kappa(F)}(N)\subseteq G_{\kappa(E)}(N)\subseteq N.
\]
Consistency of $G_{\kappa(F)}$ permits replacing $N$ by
$G_{\kappa(E)}(N)$; its path independence then gives the identity
after adjoining $J'\setminus I^{\mathrm{ret}}$. Hence
$\overline C^{\mathrm{SA}}$ is path independent. Moreover,
\[
 |\overline C^{\mathrm{SA}}(J)|=|E|+\min\{|\kappa(E)|,|J\setminus I^{\mathrm{ret}}|\}.
\]
Adding a non-returning student cannot reduce the selected set size;
adding a returning student increases the first term by one and decreases
residual capacity by at most one. Hence the rule is size monotone.
Substitutability and size monotonicity imply strategy-proofness of the
associated cumulative-offer mechanism. Path independence implies the
standard deferred-acceptance/cumulative-offer equivalence, independence
of the order in which proposals are processed, and that the students
held at termination are $\overline C_s^{\mathrm{SA}}(I_s^*)$. These
conclusions are within-grade statements, taking sibling-priority
statuses and adjustments to submitted preference lists as fixed.

For the remaining properties, fix an outcome and a program $s$, and
suppress $s$. Let $E^*$ and $P^*$ denote the returning and PIE-selected students in $I^*$, $N^*\coloneqq I^*\setminus(E^*\cup P^*)$ the residual students, $(a,d,o)$ the $\exc$, $\dis$, and open capacities remaining after the returning-student and PIE steps, $q\coloneqq a+d+o$, and, with $\mathbf c=(a,d,o)$, $\mathcal K^*\coloneqq\mathcal K(N^*)$, $k\coloneqq k(N^*)$, $R^*\coloneqq\mathcal R^{(a,d-k)}(N^*)$, $\Sigma^*\coloneqq N^*\cap\Sigma$, and $\Phi\coloneqq\Phi^{\mathbf c}(N^*)$. The assignees at $s$ are
$E^*\cup P^*\cup\Phi$, and the residual students holding $\exc$ or $\dis$
positions through classify-and-rank are exactly $R^*$, all eligible.
Any student who prefers $s$ to her assignment proposed to $s$ and
belongs to $I^*$; if she is not assigned to $s$, she belongs to
$N^*\setminus\Phi$.

\emph{Secured enrollment.} A returning student who proposes to her
current program is retained regardless of subsequent proposals. She
therefore remains assigned there unless she is assigned to a
higher-ranked program.

\emph{Sibling accommodation.} At release level $j$, the open stage selects $q-|\mathcal R^{(a,d-j)}(N^*)|$ students outside $\mathcal R^{(a,d-j)}(N^*)$, with siblings ranked first. Thus $j\in\mathcal K^*$, that is, all siblings in $N^*$ are selected, if and only if $|\mathcal R^{(a,d-j)}(N^*)\cup\Sigma^*|\le q$. Whenever $\mathcal K^*\neq\varnothing$, every
sibling in $N^*$ is therefore assigned to $s$, and $k=\min\mathcal K^*$
is the minimum number of released positions doing so. Siblings among
returning and PIE-selected students are assigned to $s$ in any case.

\emph{Structure under release.} Suppose $k\ge1$. If
$k\in\mathcal K^*$, then $k-1\notin\mathcal K^*$, so
$|\mathcal R^{(a,d-k+1)}(N^*)\cup\Sigma^*|>q$. By
\autoref{lem:sa-pi-exchange}(i), $R^*$ is obtained from
$\mathcal R^{(a,d-k+1)}(N^*)$ by removing at most one student, so
$|R^*\cup\Sigma^*|=q$, and the open stage selects exactly
$\Sigma^*\setminus R^*$. If $\mathcal K^*=\varnothing$, then $k=d$ and
$|R^*\cup\Sigma^*|>q$, so the open stage selects $q-|R^*|$ siblings. In both
cases
\begin{equation}\label{eq:sa-structure}
 \Phi\setminus R^*\subseteq\Sigma^*\quad\text{and}\quad|\Phi|=q.
\end{equation}

\emph{Priority-respect.} Let $j$ prefer $s$ to her assignment, and let
$i$ be assigned to $s$ in seat type $t$. A returning holder ranks first
for every seat type at her current program, and a personal fallback
position admits no other student. A PIE holder belongs to the top $p$
students under the PIE order among the non-returning students in
$I^*$, which include $j$, so she precedes $j$. For residual holders of
$\exc$, $\dis$, and open positions selected by classify-and-rank, including
$\dis^r$ holders, priority-respect follows from \autoref{prop:greedy}
applied to $N^*$ at capacities $(a,d-k,o+k)$, since $j\in N^*$. It
remains to consider a disadvantaged student $i$ holding a released
position as a $\dis$ position. Then $k\ge1$, and by
\eqref{eq:sa-structure}, $i$ is a sibling selected in the open stage.
If $j$ is not a sibling, $i$ precedes $j$ under the disadvantaged-reserve
priority, which ranks siblings first. If $j$ is a sibling, $i$ precedes
$j$ under the open-seat priority, which orders siblings in the same way
as the disadvantaged-reserve priority.

\emph{Protection-respecting maximum reserve utilization.} Let $\delta$
denote the number of disadvantaged students in $\Phi\setminus R^*$, which
is zero when $k=0$. Since released positions are assigned first to
disadvantaged students, the residual reserve utilization of the
outcome is $|R^*|+\min\{k,\delta\}$. Consider a matching
$(I_s',\sigma_s')$ satisfying the constraints of
\autoref{def:immediate-violation}. Its members belong to $I^*$, and
students in $E^*\cup P^*$ keep their positions, so
$B\coloneqq I_s'\setminus(E^*\cup P^*)$ satisfies $B\subseteq N^*$ and $|B|\le q$,
with $a$ positions in $\exc$ and $d$ in $\dis$ available. Let $T\subseteq B$
denote the students holding reserve positions for which they are
eligible under $\sigma_s'$. It suffices to show
$|T|\le|R^*|+\min\{k,\delta\}$.

If $k=0$, $T\subseteq N^*$ is matched at capacities $(a,d)$, and $R^*$
maximizes reserve utilization over $N^*$ at these capacities
(\autoref{prop:greedy}); hence $|T|\le|R^*|$.

Let $k\ge1$. Since $R^*$ maximizes reserve utilization over $N^*$ at
$(a,d-k)$ and adding $k$ positions raises this maximum by at most $k$,
$|T|\le|R^*|+k$. By \eqref{eq:sa-structure}, $\Phi\setminus R^*$ consists
of $q-|R^*|$ admitted siblings, all of whom belong to $B$. Hence at most
$|R^*|-|R^*\cap\Sigma^*|$ members of $B$ are not admitted siblings. Admitted
siblings in $R^*$ contribute at most $|R^*\cap\Sigma^*|$ to $T$. Those
outside $R^*$ do not hold $\exc$ positions under $\sigma_s$, so they cannot
move into $\exc$, and they count in $\dis$ only if disadvantaged,
contributing at most $\delta$. Thus $|T|\le|R^*|+\delta$.
\end{proof}

\section{Additional Results on the Chilean Application}
\label{supp:empirical-details}

This section provides implementation and replication details for the empirical analysis in \bodysecref{sec:empirical}, and additional results on exposure, deployed reserve utilization, and the reserve order under SA-DA.

\subsection{Institutional Implementation and Replication}
\label{supp:empirical-implementation}

Throughout this section, the \textbf{focal} reserves are the two reserves
studied in the main text, $\exc$ and $\dis$; PIE positions are not focal.

Admissions in the SAE are conducted separately by grade, beginning with the
twelfth grade and proceeding downward so that sibling priorities can
incorporate higher-grade assignments. Returning students have secured
enrollment at their current program. In the deployed mechanism, they receive
highest priority at that program and, when necessary, a personal,
nontransferable fallback assignment that guarantees continued enrollment.
Returning students at their current program consume ordinary program capacity
whenever capacity permits; personal fallback capacity is used only when the
number of protected incumbents exceeds published program capacity.

The special-needs (PIE) procedure is institutionally distinct from the
disadvantaged and academic-excellence reserves. PIE contracts retain their
published capacity, application-specific position, and the priority order
used in the reconstruction. At programs with a focal reserve, PIE and the non-PIE component
share the program's published total capacity. Students held through PIE
therefore reduce the capacity available to the non-PIE choice problem.

In the deployed mechanism, PIE positions are allocated
according to the reconstructed PIE priorities and routing
procedures. The resulting assignments may include students
without PIE eligibility. SA-DA retains this procedure: PIE-eligible
students are selected first in PIE-priority order, followed by other
students in the order of the PIE subschool, and unused PIE capacity is
released to open competition.

The statutory basis is as follows. The successive priorities are established by Article~7~ter of DFL No.~2 of 1998, introduced by Law No.~20,845, and Decree No.~152 of 2016, particularly Articles~27 and~36. Article~37 specifies the disadvantaged-share calculation and Article~41 addresses sibling priority. The decree also regulates special admission procedures for students with permanent special educational needs.

The Ministry of Education reports a program-specific capacity for the
disadvantaged reserve, which we take as $r_s^{\dis}$. The
academic-excellence reserve changed form as the law was phased in. Under the
transitional regime, authorized schools could continue to use their
pre-existing special-admission procedure (\emph{Alta Exigencia}), with the
share of vacancies subject to it declining from 85 to 70 to 50 to 30 percent
over successive implementation years. Under the permanent regime, authorized
schools reserve 30 percent of vacancies for students satisfying the
prescribed academic-performance criterion. No program in any cycle carries
positive capacity in both forms. At each program we treat the active form as
the academic-excellence category and take its capacity as $r_s^{\exc}$.

The Ministry publishes public regular-stage microdata for the 2016--2025
application cycles \citep{mineduc_sae_data}. The 2016 Magallanes pilot
predates the reserve-era schema, carries no reserve-specific capacity fields,
and includes a nearest-school component based on perturbed coordinates; we
therefore report it for coverage but do not replicate it. For each of the nine
reserve-era cycles from 2017 through 2025, we reconstruct the deployed
mechanism, including grade sequencing, priority tiers, sibling and family
processing, incumbent guarantees, and the category sub-schools used in the
implementation of \citet{correa2022}. The data contain the realized lottery
information used by the mechanism. We combine the recorded priority fields
and lottery ranks with the tier and tie-breaking rules below; the
year-specific normalization conventions are stated in
\autoref{supp:routing-evidence}.

\autoref{tab:replication} reports the cycle-level replication results. Across
3{,}795{,}168 applicant--cycle observations, we exactly reproduce assignment
status, assigned school, and assigned course for every applicant.

In every counterfactual, submitted applications, published capacities,
substantive reserve eligibility, primitive priority criteria, realized lottery
ranks, and family links are held fixed. Grade sequencing, sibling and family
processing, and secured-enrollment rules are unchanged. Dynamic sibling
priority and the operational ordering of applications are recomputed using
assignments generated in higher grades within that counterfactual. For a returning student, the protected current program is
recorded as the outcome whenever the secured-enrollment guarantee is used, so
no returning student can be assigned below her protected program.

The regular-seat priority in the reconstruction orders returning students
first, followed by students with an already enrolled sibling, students
with a sibling newly assigned in a higher grade, children of staff, former
students, and other students. Lottery draws resolve ties within these tiers,
with the applicant identifier as a final tie-breaker. Reserve subschools use
their category-specific tiers. Within a priority tier, academically eligible
students in the transitional regime are ordered by the school-specific
\texttt{orden\_alta\_exigencia\_transicion} field, followed by lottery and
applicant identifier. The permanent academic reserve uses the lottery and
applicant identifier within each tier. In particular, returning students and both
sibling tiers precede the disadvantaged eligibility tier at the disadvantaged
subschool, and siblings are ordered in the same way at the disadvantaged
subschool and at regular seats. Returning students rank first for every seat
type at their current program. A non-eligible occupant of a reserve subschool
is treated as an open-seat assignee when measuring reserve utilization;
verifying priority-respect for such an assignment additionally requires the
open-seat priority comparison.

Within-program proposal orders also depend on applicant type. For
non-returning students, PIE comes first when active, followed by the academic
subschool. Disadvantaged students then propose to the disadvantaged subschool
before regular seats. Among non-disadvantaged students, siblings propose to
regular seats first, whereas non-siblings propose to the disadvantaged
subschool first. This last order permits non-disadvantaged students to
occupy disadvantaged-subschool capacity. At their current program, returning
students place disadvantage before the academic subschool; regular seats
precede disadvantage unless the returning student is disadvantaged. PIE is
first for returning students with recorded prior PIE status and last for other
returning students.

\begin{table}[H]
\centering
\caption{Replication of the deployed SAE, 2017--2025 cycles}
\label{tab:replication}
\small
\begin{tabular}{lrrr}
\hline
Cycle & Applicants & Assigned & Exact matches \\
\hline
2017 & 76{,}821  & 70{,}145  & 76{,}821  \\
2018 & 274{,}990 & 250{,}469 & 274{,}990 \\
2019 & 483{,}070 & 432{,}861 & 483{,}070 \\
2020 & 454{,}415 & 411{,}798 & 454{,}415 \\
2021 & 461{,}223 & 421{,}960 & 461{,}223 \\
2022 & 570{,}891 & 520{,}420 & 570{,}891 \\
2023 & 536{,}353 & 492{,}888 & 536{,}353 \\
2024 & 473{,}482 & 439{,}038 & 473{,}482 \\
2025 & 463{,}923 & 430{,}712 & 463{,}923 \\
\hline
Total & 3{,}795{,}168 & 3{,}470{,}291 & 3{,}795{,}168 \\
\hline
\end{tabular}
\begin{minipage}{0.82\textwidth}
\footnotesize\emph{Notes:} An exact match means that the reconstruction
matches the official assignment status, assigned school, and assigned course
for the applicant--cycle observation, using the initial destination fields
in the public result file. It does not compare reserve-assignment labels,
which are not reported in that file, or assignments after applicants respond.
\end{minipage}
\end{table}

\subsubsection{Observed outcomes and reconstructed routing}
\label{supp:routing-evidence}

The public inputs are the program offers (A1), applicant attributes (B1),
applications (C1), initial assignment results (D1), and family links (F1).
The D1 benchmark uses \texttt{rbd\_admitido} and
\texttt{cod\_curso\_admitido}; subsequent response and destination fields are
not part of this comparison. Inspection of all nine D1 schemas confirms that
none contains a reserve-assignment label. The reconstruction uses A1, B1, C1,
and F1 to generate both program and subschool assignments. Higher-grade
assignments from the same run generate the sibling-priority updates and
family-block promotions. Within each promoted or unpromoted group, the
published application order is preserved. Same-grade links do not generate
these higher-grade updates, and family-block reordering is disabled in 2017.

We retain applications marked \texttt{agregada\_por\_continuidad}, which the
system adds to protect continuity of enrollment. All 11{,}032 missing
\texttt{loteria\_original} values in the nine cycles occur on such returning
students' records in 2017. The reconstruction uses the applicant identifier
as a deterministic key when the lottery value is missing, and as the final
tie-breaker otherwise. It generates no new lottery draws.

PIE candidacy normally uses a nonmissing school-specific \texttt{orden\_pie}.
The separate \texttt{es\_pie} field records prior PIE status at the student's
current establishment. In 2017, when \texttt{orden\_pie} is nonmissing for
every application to a program, the reconstruction instead uses
\texttt{es\_pie}=1 to determine PIE candidacy there. This is an explicit
normalization convention; it changes the candidacy indicator on 134
application records relative to using nonmissing \texttt{orden\_pie} alone,
131 of them at programs with positive PIE capacity. The priority tiers and
within-program proposal orders above are then applied to these normalized
records.

An eligible holder of a focal subschool place is counted in that reserve;
an ineligible holder is counted as an open-seat assignee and checked against
the open-seat priority. Thus the physical subschool label and the
analytical reserve designation need not coincide.

Agreement on program assignments neither establishes uniqueness of the
reconstructed subschool routing nor identifies the normalization conventions.
Accordingly, claims concerning realized utilization, MRU failures, and
priority-respect of the deployed routing are conditional on this
reconstruction. The replication documentation provides the field-to-rule
mapping, the complete proposal-order table, and the checks for each cycle.

The reconstruction also recovers the preference list
$P_i^{\mathrm{op}}$ used for each student in the deployed
mechanism. Where family-block processing applies, the system
promotes programs at schools assigned to higher-grade members
of the student's declared block, preserving submitted order
within the promoted and remaining groups. These are the
preference lists processed by the reconstructed DA.
At an overlapping-reserve program, an excluded student
proposed to and was rejected by program $s$ exactly when
$s\mathrel{P_i^{\mathrm{op}}}\mu^{\mathrm{off}}(i)$.
We use these lists for deployed-outcome diagnostics in
\bodysecref{sec:res_uti_deployed}, and submitted rankings
for assignment comparisons in
\bodysecref{sec:sibling-adjusted-empirical}.

Empirical rank comparisons use the published
\texttt{preferencia\_postulante} field, retaining any current-program
application added by the system for continuity. Thus the recorded application
order, before family-block reordering, includes these continuity entries.

For the reserve-order analysis, groups use two applicant-level indicators.
Disadvantaged status is the recorded \texttt{prioritario} field. The academic
indicator is one if the student has at least one submitted application to an
overlapping-reserve program, carrying both an active academic-excellence
reserve and a disadvantaged reserve, at which she is eligible for the
academic-excellence reserve. Transitional eligibility uses a nonmissing
\texttt{orden\_alta\_exigencia\_transicion} field, whereas permanent-regime
eligibility uses \texttt{alto\_rendimiento}. The academic indicator need not
refer to either program assigned in the comparison. 
We use this profile definition in the reserve-order
comparison in \bodysecref{sec:sibling-adjusted-empirical}.

\subsection{Annual Exposure and Eligibility Overlap}
\label{supp:chile-exposure}

We call a program an \textbf{overlapping-reserve program} if it carries positive
capacity in both reserves, and a student \textbf{exposed} if her submitted list contains at least one such program. The overlap is potential: eligibility
for the two reserves can overlap, but a particular program's applicant pool
need not contain anyone eligible for both. \autoref{tab:exposure} therefore
reports exposure and realized eligibility overlap separately.

Exposure peaks in 2019 at 35 programs in 25 schools. From 2023 onward,
only three programs at one school remain, coinciding with the end of the
transitional academic regime. Through 2022, transitional capacity accounts
for 13{,}207 of the 13{,}343 academic places at overlapping-reserve programs;
thereafter all such capacity belongs to the permanent regime. The table
also documents the accompanying decline in eligibility overlap.

Distinct criteria need not produce conflicting rankings on a particular
applicant pool. Because each reserve's priority matters only among students
eligible for it, the realized disadvantaged and academic-excellence orders at
a program can be embedded in a single ranking exactly when they agree on the
students eligible for both. This happens in only 6 of the 127
overlapping-reserve program--cycle observations, at three distinct schools,
and at one of these schools the agreement is vacuous because no student is
eligible for both reserves. Thus, wherever eligibility overlap is nonvacuous,
the two criteria almost always conflict.

\begin{table}[H]
\centering
\caption{Exposure and eligibility overlap, 2017--2025 cycles}
\label{tab:exposure}
\small
\setlength{\tabcolsep}{5pt}
\begin{tabular}{lrrrrr}
\hline
Cycle & Programs & Exposed & Eligible $\geq 1$ & Eligible both & Both / $\geq 1$ \\
\hline
2017 & 6  & 3{,}871  & 3{,}243  & 1{,}170 & 36.1\% \\
2018 & 15 & 8{,}056  & 7{,}156  & 2{,}942 & 41.1\% \\
2019 & 35 & 16{,}941 & 14{,}293 & 5{,}042 & 35.3\% \\
2020 & 23 & 8{,}549  & 7{,}225  & 2{,}464 & 34.1\% \\
2021 & 23 & 8{,}383  & 7{,}177  & 2{,}671 & 37.2\% \\
2022 & 16 & 5{,}338  & 4{,}340  & 1{,}567 & 36.1\% \\
2023 & 3  & 1{,}601  & 1{,}227  & 261     & 21.3\% \\
2024 & 3  & 1{,}612  & 1{,}190  & 213     & 17.9\% \\
2025 & 3  & 1{,}506  & 1{,}127  & 174     & 15.4\% \\
\hline
Total & 127 & 55{,}857 & 46{,}978 & 16{,}504 & 35.1\% \\
\hline
\end{tabular}

\begin{minipage}{0.88\textwidth}
\footnotesize
\emph{Notes:}
``Programs'' counts overlapping-reserve programs, which carry positive
capacity in both the disadvantaged and active academic-excellence reserves;
the total counts program--cycle observations. ``Exposed'' counts students whose submitted list contains at least one such program; exposure does not
condition on the eventual assignment. Eligibility is
evaluated at the overlapping-reserve programs on the student's list.
``Eligible both'' counts a student if she is eligible for both reserves at
one or more of these programs. The final column divides ``Eligible both'' by
``Eligible $\geq 1$.'' Student counts are applicant--cycle observations.
\end{minipage}
\end{table}

\subsection{Additional Results on the Reserve Order under SA-DA}
\label{supp:sa-da-order}

\autoref{tab:sibling-adjusted-deployment} compares both versions of
SA-DA with deployment, and \autoref{tab:sibling-adjusted-order-national}
reports the effects of reversing the reserve order under SA-DA for each
admission cycle. The tables complement the aggregate and
eligibility-group comparisons in
\bodysecref{sec:sibling-adjusted-empirical}.

\begin{table}[htbp]
\centering
\caption{SA-DA relative to deployment}
\label{tab:sibling-adjusted-deployment}
\small
\setlength{\tabcolsep}{7pt}
\begin{tabular}{lrrr}
\toprule
& Deployed & Academic first & Disadvantaged first \\
\midrule
Different program & -- & 89 & 1{,}262 \\
Different school & -- & 88 & 1{,}255 \\
Higher-ranked program & -- & 47 & 635 \\
Lower-ranked program & -- & 42 & 627 \\
Unassigned $\to$ assigned & -- & 2 & 124 \\
Assigned $\to$ unassigned & -- & 5 & 119 \\
\midrule
Program--cycles with a shortfall & 25 & 0 & 0 \\
Shortfall (reserve positions) & 285 & 0 & 0 \\
\bottomrule
\end{tabular}

\begin{minipage}{0.94\textwidth}
\footnotesize
\emph{Notes:} Counts are applicant--cycle observations unless
otherwise indicated. Assignment changes are measured relative
to deployment. Preference comparisons use submitted program
rankings, with being unassigned ranked last. Shortfalls
are defined in \autoref{def:immediate-violation} and computed at
overlapping-reserve programs; by
\autoref{prop:sibling-adjusted-outcome}, SA-DA has no shortfall at
any program.
\end{minipage}
\end{table}

\begin{table}[htbp]
\centering
\caption{Effects of the reserve order under SA-DA}
\label{tab:sibling-adjusted-order-national}
\small
\setlength{\tabcolsep}{4pt}
\begin{tabular}{lrrrrr}
\toprule
& Different
& Different
& Prefer
& Prefer
& Same program, \\
Cycle
& program
& school
& academic first
& disadvantaged first
& different category \\
\midrule
2017 & 80  & 80  & 43  & 37  & 51  \\
2018 & 239 & 238 & 121 & 118 & 171 \\
2019 & 435 & 435 & 218 & 217 & 480 \\
2020 & 204 & 204 & 97  & 107 & 402 \\
2021 & 105 & 103 & 56  & 49  & 396 \\
2022 & 51  & 51  & 18  & 33  & 235 \\
2023 & 54  & 54  & 29  & 25  & 25  \\
2024 & 5   & 5   & 3   & 2   & 4   \\
2025 & 12  & 9   & 6   & 6   & 10  \\
\midrule
Total
& 1{,}185
& 1{,}179
& 591
& 594
& 1{,}774 \\
\bottomrule
\end{tabular}

\begin{minipage}{0.94\textwidth}
\footnotesize
\emph{Notes:} Counts are applicant--cycle observations.
Preference comparisons use submitted program rankings,
with being unassigned ranked last. The last column counts
applicants assigned to the same program under both orders
but through different admission categories; a disadvantaged holder of a
released position is counted in $\dis$.
\end{minipage}
\end{table}

\end{document}